\DeclareMathOperator*{\argmax}{arg\,max}
\renewcommand{\equiv}{:=}
\renewcommand{\Re}{\mathbb{R}}
\definecolor{webblue}{rgb}{0.2,0.1,0.7}
\definecolor{cred}{RGB}{140,21,21}
\theoremstyle{plain}
\newtheorem{theorem}{Theorem}
\newtheorem{assumption}{Assumption}
\newtheorem{lemma}{Lemma}
\newtheorem{proposition}{Proposition}
\newtheorem{remark}{Remark}
\newtheorem{definition}{Definition}
\theoremstyle{definition}
\theoremstyle{remark}
\newtheorem{example}{Example}
\crefname{assumption}{assumption}{assumptions}
\crefname{claim}{claim}{claims}
\crefname{fact}{fact}{facts}
\renewcommand{\qedsymbol}{Q.E.D.}
\renewcommand{\qedsymbol}{$\blacksquare$}
\newcommand{\R}{\mathbb{R}}
\newcommand{\E}{\mathbb{E}}
\newcommand{\uv}{\underline{\theta}}
\newcommand{\ov}{\overline{\theta}}
\newcommand{\Q}{\mathcal{Q}}
\DeclareMathOperator{\argmin}{arg\ min}
\DeclareMathOperator{\supp}{supp}
\newcommand{\e}{\varepsilon}
\newcommand{\norm}[1]{ \lVert #1 \rVert }
\begin{document}

\begin{titlepage}
\title{
From Design to Disclosure\thanks{We thank Navin Kartik, Stephen Morris, Jacopo Perego, Andy Skrzypacz, Roland Strausz, Kai Hao Yang, and various seminar and conference attendees. Kleiner acknowledges financial support from the German Research Foundation (DFG) through Germany’s Excellence Strategy - EXC 2047/1 - 390685813, EXC 2126/2 – 390838866, and the CRC TR-224 (Project B02).}

\vspace{.2in}
}  

\author{
{ S. Nageeb Ali}\thanks{Department of Economics, Pennsylvania State University. Email: \href{mailto:nageeb@psu.edu}{nageeb@psu.edu}.}
\and 
{Andreas Kleiner}\thanks{Department of Economics, University of Bonn. Email: \href{mailto:andykleiner@gmail.com}{andykleiner@gmail.com}.}
\and 
{Kun Zhang}\thanks{School of Economics, University of Queensland. Email: \href{mailto:kun@kunzhang.org}{kun@kunzhang.org}.}
\vspace{.2in}}

\begingroup
\singlespacing
\maketitle
\endgroup

\begin{abstract}

This paper studies the role of hard information in contractual and market settings in which the receiver can flexibly adjust allocations and transfers in response to the sender's disclosure. These settings include monopoly pricing, bilateral trade with interdependent values, insurance contracting, and policy negotiations. Across these settings, the sender is worst off if she reveals her type completely: if her type becomes known, the receiver can adjust the terms of trade to extract her surplus. Taking this feature as our central departure from the literature, we characterize the entire set of equilibrium payoffs across these disclosure games. We establish an equivalence result: every payoff profile that can be achieved through information design can also be approximated by an equilibrium of the disclosure game. Thus, hard information enables the sender to attain her commitment payoff without having to commit to an information structure. Moreover, this result highlights how verifiability can empower the sender in settings in which bargaining power resides with the receiver.

\end{abstract}

\thispagestyle{empty} 
\vspace{0.4in}

\end{titlepage}

\setcounter{page}{1}

\section{Introduction}\label{Section-Introduction}

In many strategic interactions, one player can disclose hard information to influence others' actions. For example, the seller of an asset may reveal audited statements about an asset's value to attract higher price offers, or an insuree may share medical records to obtain better terms. An important literature initiated by \cite{grossman:81-JLE} and \cite{milgrom1981good} models such settings as \emph{disclosure} or \emph{persuasion games}. These games feature two core ingredients. First, the informed party---or the sender---possesses hard information that cannot be manipulated: she can share nothing but the truth although she need not share the entire truth. Second, she cannot precommit to what she discloses; of the utterances that she can make, she chooses the one that results in the most favorable outcome. The party hearing these utterances---the receiver---is no fool. He draws inferences from both what is said and what is left unsaid. Thus, what the sender says and how the receiver responds are jointly determined in {equilibrium}. 

The classical analysis of these settings shows that combining these ingredients leads to unraveling: the sender voluntarily discloses all information in the unique equilibrium. The logic for why a fully revealing equilibrium exists is that were the sender to withhold information, the receiver assumes the worst and responds in a way that makes concealment unprofitable. A more powerful force pushes every equilibrium to be fully revealing: in the settings modeled by this prior literature, for any pool of sender types, at least one type strictly prefers to reveal itself than remain pooled. The ability to disclose hard information thus creates a \emph{commitment trap}: even if the sender would prefer ex ante to conceal some information, strategic forces compel her to disclose everything.

While in some applications, the sender may prefer to reveal her type, this assumption fails in many economically important settings. Consider a buyer-seller interaction in which the buyer can disclose information about her value before the seller makes a take-it-or-leave-it (TIOLI) offer. The buyer gains nothing from revealing her true value because doing so enables the seller to extract her entire surplus. Similarly, in a monopolistic insurance market, an insuree may disclose some information to alleviate adverse selection but disclosing it all results in a contract that drives her utility down to her outside option. 

More generally, in contractual and market settings where the receiver can flexibly adjust the terms of trade based on what is disclosed and then issue a TIOLI offer, full disclosure leaves the sender no better off than her outside option. In such settings, she may prefer to withhold information to counteract the receiver's bargaining power but she cannot commit to it. Prior work has not generally understood the equilibrium implications of hard information in these contractual settings. Our goal is to provide a unified analysis. Does unraveling fail and, if so, what is the resulting set of equilibrium payoffs? 

We consider a broad class of sender-receiver games in which the sender discloses \emph{evidence} about her type to a receiver. Following \cite{grossman:81-JLE} and \cite{milgrom1981good}, the sender can disclose any subset of types that contains her true type. 
Upon receiving this disclosure, the receiver chooses an action. Our framework is agnostic about the nature of this action: it may be a price offer from a seller, a menu of insurance contracts from an insurer, or an allocation and transfer proposed by a principal. 

We make three assumptions. Our first assumption is that the sender weakly favors uncertainty: relative to her payoff from inducing any belief, she never strictly gains from fully revealing her type. This assumption is our primary departure from the existing literature, capturing the settings described above in which full disclosure results in the receiver extracting full surplus. 
The other two assumptions are standard. 
The second states that every message contains a ``worst-case type'' who no type capable of sending that message would like to imitate. The third assumption imposes a form of continuity on the receiver's payoff with respect to his actions and beliefs. These three assumptions are satisfied in many familiar settings, including monopoly pricing, asset sales, insurance contracting, and policy negotiations. 

Under our assumptions, unraveling fails: while full disclosure remains an equilibrium, there also exists an equilibrium in which the sender completely conceals her type. This multiplicity raises a central question: what is the range of equilibrium payoffs? In general, characterizing the entire equilibrium payoff set is difficult once unraveling fails; most existing approaches are therefore constructive and problem-specific. Nevertheless, our main result obtains this set, using information design as a tool. 

To that end, suppose the sender could commit to an information structure that reveals information to the receiver before the receiver chooses his action. A payoff profile $(u_S^*,u_R^*)$ is \emph{achievable} if there exists an information structure that induces these payoffs. Every equilibrium payoff of the disclosure game is achievable because any sender strategy can be replicated by an information structure. Our main finding, \Cref{Theorem-MainResult}, establishes the converse.  \smallskip
\begin{quote}
     \textbf{Main Result.} \emph{Every achievable payoff profile can be (approximately) supported by an equilibrium of the disclosure game.}
\end{quote} \smallskip
This result shows that, under our assumptions, there is virtually no gap between information design and voluntary disclosure: for every achievable payoff profile $(u_S^*,u_R^*)$ and every $\e>0$, some equilibrium of the disclosure game attains payoffs within $\e$ of $(u_S^*,u_R^*)$. Hence, equilibrium outcomes encompass not only full and no revelation but also the entire range of partially revealing outcomes. In particular, the sender could obtain payoffs arbitrarily close to that delivered by her optimal information structure, withholding information to counter the receiver's bargaining power. Thus, she can not only escape the commitment trap of unraveling but also attain her commitment payoff.

Methodologically, the result shows that in a broad class of economically important settings, one can characterize equilibrium payoffs by focusing solely on the receiver's obedience constraints---exactly as in information design---while disregarding the sender's incentive constraints altogether. Finally, we show that each of the three assumptions is essential: relaxing any of them allows one to find settings in which the conclusion of our main result fails. 

Our analysis also has implications for cheap-talk communication in these contractual settings. \Cref{Theorem-MainResult} implies that allowing for cheap talk---either alone or alongside hard information---does not enlarge the set of equilibrium payoffs relative to the disclosure game. In fact, in the applications discussed below, cheap talk by itself is completely ineffectual. 

We apply our result to market and contractual settings in which the receiver can adjust the terms of trade following disclosure; we also obtain additional results in these settings. We discuss these applications below.

\vspace{-.1in}
\paragraph{Monopoly Pricing:} We begin with the canonical monopoly pricing problem with unit demand. We show that our assumptions hold in this setting, leveraging a result by \cite{yang2023continuity}. Therefore, every payoff in the ``BBM triangle'' \citep{bergemann2015limits} can be virtually supported by an equilibrium of the disclosure game. We view this result as useful from several perspectives.

First, it responds to a challenge raised by \cite{bergemann2019information}, who caution against interpreting information design in markets literally because it \emph{``would need to identify an information designer who knew consumers’ valuations and committed to give partial information to the monopolist in order to maximize the sum of consumers’ welfare.''} Our finding shows that voluntary disclosure can serve as a microfoundation for this benchmark: information flows directly from the buyer to the seller in a standard disclosure game, without requiring a third party who observes the buyer's type or commits to an information structure.\footnote{\cite{fainmessergaleottimomot} and \cite{galpertiliuperego} develop important alternative microfoundations: the former studies a platform that uses past purchases to infer consumer valuations, and the latter models an intermediary who purchases consumer data and intermediates trade.} 

Second, the result highlights how an intermediary or data collective that verifies statements about consumers' values could enable consumers to attain payoffs close to their optimal information structure. In such equilibria, consumers disclose enough information to influence the monopolist to offer targeted discounts---grouping together high and low value buyers---but not so much information that the monopolist extracts their full surplus. This behavior resembles group-pricing schemes seen in practice, where senior citizens, students, and low-income households can disclose evidence to obtain price discounts. A key requirement to achieve the commitment benchmark is that the evidence be sufficiently rich, capable of pooling low and high values while excluding medium ones. In this regard, we prove that consumer payoffs are generally bounded away from the commitment benchmark if consumers could send only ``interval messages.'' Our analysis therefore speaks to the gains that consumers accrue when they  control, at a granular level, what data firms observe about them. 

While we focus on monopoly pricing, a similar logic extends to competitive markets with differentiated products. \cite*{elliottgaleottikohli} characterize the information structure that maximizes consumer surplus. We argue that equilibria of a multi-firm disclosure game can approach those payoffs. In these equilibria, consumers disclose selectively to firms, revealing their location to non-preferred firms so as to intensify competition, and partially disclosing information to the preferred firm to counteract its bargaining power.

Finally, we find that the truth-leaning refinement of \cite*{hart2017evidence} selects payoffs on the efficiency frontier, and virtually every efficient payoff can be supported by a truth-leaning equilibrium.

\vspace{-.1in}
\paragraph{Disclosures about Assets:} We study an asset market in which a seller strategically discloses evidence to mitigate adverse selection. Absent disclosure, prospective buyers fear purchasing a lemon, leading to inefficient trade. We find that, when facing a single buyer, the seller-optimal equilibrium entails partial disclosure: the seller reveals just enough to alleviate adverse selection while withholding information to retain all efficiency gains. In contrast, if there are two or more buyers, all equilibria feature full disclosure. These results offer a new perspective on how market competition promotes transparency.\footnote{The literature on competition and disclosure initiated by \cite{milgromroberts86} focuses on how multiple informed senders vie to influence a single uninformed receiver. By contrast, we study how multiple uninformed receivers affect the disclosure incentives of a single sender.} 

We use these findings to identify when the seller benefits most from attracting a second buyer. When adverse selection is severe, the seller gains less from a second buyer: partial disclosure then is a powerful tool that both alleviates adverse selection and counteracts the buyer's market power. When adverse selection is mild, however, competition dominates disclosure as a tool for limiting market power. Our analysis thus yields a joint prediction that links the severity of adverse selection to competition and transparency in the asset market.

\vspace{-.1in}
\paragraph{Insurance Contracting:} Should insurers be allowed to condition contracts on an insuree's disclosure of genetic information? Doing so improves efficiency by reducing the  informational asymmetry between the parties. Regulators, however, worry that allowing disclosure could compel insurees to reveal all information, leaving them worse off. Reflecting this concern, the \emph{Genetic Information Nondiscrimination Act} of 2008 prohibits US insurers from using genetic information provided by prospective insurees \citep{erwin08,golinghorst2022anti}. Against this backdrop, we evaluate how disclosure could shape the insurance market. We consider the standard insurance contract model and show that it satisfies our three assumptions. Thus, disclosure could enable the insuree to attain payoffs arbitrarily close to those generated by her (ex ante) optimal information structure. Yet, consistent with regulators' concerns, some equilibria can leave her strictly worse off than prohibiting disclosure. These findings highlight the potential role of government agencies or third-party intermediaries in  steering markets towards equilibria that benefit both insurers and insurees.

\vspace{-.1in}
\paragraph{Veto Bargaining:} In the Supplementary Appendix, we consider \cite{romer1978political}'s model of policy negotiations, a workhorse in political economy. We show that our assumptions and main finding hold in this setting despite the absence of transferable utility.

\vspace{-.1in}
\paragraph{Outline of Paper:} \Cref{Section-Model} describes the disclosure game and \Cref{Section-MainResult} the main results. \Cref{Section-Applications} considers  applications. \Cref{Section-Conclusion} concludes. Omitted proofs and material are in appendices. The rest of this introduction discusses the related literature.

\vspace{-.1in}
\paragraph{Related Literature:} 
We build on canonical disclosure models in which the sender discloses any set of types that contain her true type \citep{grossman1980disclosure,grossman:81-JLE,milgrom1981good,milgromroberts86}. In much of this work, the sender's payoff depends only on the receiver's posterior mean, motivated either through a market price or quantity choice. Such payoff structures limit the receiver in how he can vary contractual terms in response to the sender's disclosure. Our main departure is that we endow the receiver with the flexibility to choose prices and allocations, as in monopoly pricing or insurance contracting. This flexibility ensures that if the sender fully discloses her type, the receiver follows up with an action that extracts her surplus. This feature is why unraveling fails in our model.

A large literature explores other obstacles to unraveling, including \citeauthor{dye1985disclosure}-evidence structures and costly disclosure; see \cite{ben2025evidence} for a survey. The strand closer to our work studies settings that violate the payoff monotonicity conditions underlying the classical unraveling logic. \cite{okuno1990strategic}, \cite{seidmann1997strategic}, and \cite{mathis2008full} provide sufficient conditions for unraveling and they as well as \cite{giovannoni2007secrecy} and \cite{martini2018multidimensional} highlight how unraveling fails when those conditions are not met.\footnote{Also related is \cite{onuchicramos}'s result on how unraveling fails under group disclosure.} In a setting with state-independent preferences, \cite{titova2024persuasion} show that payoff monotonicity breaks when the receiver chooses from finitely many actions, and they identify conditions under which the sender-optimal achievable payoff can be supported by an equilibrium. Their analysis restricts the receiver's action space whereas, by contrast, we endow the receiver with greater flexibility.

Our work evaluates a form of non-monotonicity distinct from those above. Herein, the sender does not profit from revealing her type because the receiver can flexibly adjust the market price or contractual terms of trade based on her disclosure. On this theme, a few recent papers \citep{glode2018voluntary,pram2020,ali2023voluntary} model how disclosure can be beneficial in markets. A second feature that distinguishes our analysis from prior work is that we characterize the entire set of equilibrium payoffs, using information design as a tool. Because the prior literature primarily relies on constructive approaches, a complete characterizations is typically infeasible when unraveling fails.

In our analysis, we restrict ourselves to the evidence structure adopted by \cite{grossman:81-JLE} and \cite{milgrom1981good}, where the sender may disclose any set of types containing her true type. Our reasons are twofold. First, it isolates the role of the  payoff environment in generating our novel conclusions. Second, while richer evidence structures---such as stochastic evidence---could in principle enlarge the set of equilibrium payoffs, our main result implies that no evidence structure yields a larger payoff set in the settings that we study.

We take the standard Grossman-Milgrom evidence structure as a primitive. In other contexts, prior work models how a sender or intermediary would design evidence.\footnote{E.g., see \cite*{demarzo2019test}, \cite*{ali2022sell}, \cite{dasgupta2022hard}, \cite{pollrich2024irrelevance}, \cite{shishkin2024evidence}, and \cite{celik2025informative}.} \citet[pp. 2598-2599]{kamenica2011bayesian} show that voluntary disclosure can support the sender-optimal information structure in a game in which an uninformed sender publicly chooses a Blackwell experiment, privately observes its realization $s$, and then can send any message $m$ that contains $s$. More recently, \cite{arieli2025} and \cite{dai2025bayesian} study the scope and limits of selective disclosure of designed evidence.

Our work identifies settings in which the sender does not benefit from committing to an information structure.  A different literature studies mechanism design with evidence \citep*[e.g.,][]{glazer2004optimal,hart2017evidence,ben2019mechanisms} and identifies settings in which the receiver does not value committing to a mechanism.

\section{Model}\label{Section-Model}

We study a disclosure game in which a sender (she) shares evidence with a receiver (he) who then chooses an action. 
The sender privately observes her type $\theta$ drawn from the compact 
set $\Theta\subseteq \R^n$ according to the probability measure $F$ that admits a strictly positive density $f$ with respect to the Lebesgue measure on $\R^n$.
Her type determines what she can say: the sender of type $\theta$ chooses a message in $\mathcal{M}(\theta)\equiv \{m\in \mathcal C: \theta\in m\}$, where $\mathcal C$ denotes the collection of all non-empty closed subsets of $\Theta$. Like \cite{grossman:81-JLE} and \cite{milgrom1981good}, we interpret each message as \emph{evidence} in that any statement the sender makes, such as ``my type is in $m$," must be true.\footnote{Equivalently, we could have formulated evidence in the space of ``documents,'' as in \cite{lipman1995robust}, \cite{bull2004evidence}, \cite*{hart2017evidence}, and \cite*{ben2019mechanisms}.} Observe that $\mathcal{M}(\theta)$ includes the fully revealing message  $\{\theta\}$ (available only to type $\theta$), the fully concealing message $\Theta$ (available to every type), and a spectrum of messages that reveal some but not all information about the sender's type. 

After receiving the sender's message, the receiver chooses an action $a$ from a compact metrizable space $A$. The sender's payoff $u_S(a,\theta)$ is continuous in $a$ for each $\theta$, while the receiver's payoff $u_R(a,\theta)$ is upper semicontinuous in $a$ for each $\theta$. 

We now define strategies and equilibria for this game. The sender's strategy is a function $\rho:\Theta\rightarrow\Delta (\mathcal C)$ in which the support of $\rho(\theta)$ is a subset of $\mathcal M(\theta)$ for every type $\theta$. The receiver's strategy is a function $\tau:\mathcal C\rightarrow A$.\footnote{We endow $\mathcal{C}$ with the Hausdorff metric. Throughout our analysis, for a compact metrizable space $X$, $\Delta(X)$ denotes the set of (Borel) probability measures on $X$ endowed with the weak$^*$ topology. Our analysis allows for mixed actions; then, we interpret $A$ as the set of probability distributions over (pure) actions.} The receiver's beliefs about the sender's type are represented by the belief system $\mu: \mathcal{C} \to \Delta(\Theta)$. An assessment $((\rho, \tau), \mu)$ is a Perfect Bayesian Equilibrium (henceforth, equilibrium) if the following conditions hold: 
\begin{enumerate}[noitemsep,label=(\alph*)]
    \item \emph{Given her type, the sender discloses evidence optimally}: for every type $\theta$, $\rho(\theta)$ is supported on $\argmax_{m \in \mathcal{M(\theta)}} u_S(\tau(m),\theta)$, 
    \item \emph{Given the message, the receiver chooses actions optimally}: for every message $m$, $\tau(m)\in\argmax_{a \in A} \int_{\Theta} u_R(a,\theta) \, \mathrm{d}\mu(\theta | m)$,
    \item \emph{Beliefs respect evidence}: for every message $m$, the receiver's beliefs $\mu(m)$ have a support that is a subset of $m$.
    \item \emph{Bayes' Rule}: the beliefs $\mu$ are obtained from $F$ given $\rho$ using Bayes' rule, i.e., $\mu$ is a regular conditional probability system. 
\end{enumerate}

This framework accommodates numerous applications. Our primary objective is to capture market and contractual environments in which the receiver flexibly adjusts the terms of trade---allocations and transfers---based on information that is disclosed. We present two applications below, first considering the leading example of monopoly pricing and then a more general principal-agent setup. 
\begin{example}[Monopoly Pricing]\label{Example-BBM}
    Consider the standard monopoly pricing problem with incomplete information, augmented with a disclosure stage: the buyer of a good can disclose evidence about her value $\theta$ to a monopolist who then responds with a price offer that the buyer can accept or reject. Here, the buyer's type $\theta$ is drawn from $\Theta=[\underline\theta,\overline\theta]$ and action $a\in \Re$ represents a price. We write $u_S(a,\theta)=\max\{\theta-a,0\}$ for the buyer's payoff and $u_R(a,\theta)=a\mathbf{1}_{a\le \theta}$ for the monopolist's payoff, reflecting that if the buyer accepts the price $a$, trade happens at that price, and if she rejects, each party obtains $0$.\qed
\end{example}

\begin{example}[Principal-Agent Model]\label{Example-PrincipalAgent}
Consider a standard principal-agent setting in which the agent (sender) has a type $\theta$ drawn from $\Theta=[\underline{\theta},\overline{\theta}]$. The principal (receiver) offers a menu of pairs $(x,t)$, where $x\in X$ is an allocation from a compact space and $t\in\mathbb{R}$ is a transfer from the agent to the principal. The agent selects her preferred option from this menu or an outside option, $x_0\in X$. The agent's payoff from $(x,t)$ is $v_S(x,\theta)-t$ and the principal's payoff is $v_R(x,\theta)+t$. We assume that $v_S(\cdot,\theta)$ and $v_R(\cdot,\theta)$ are continuous for each $\theta$. 

Our setup captures this interaction by specifying that the principal's action $a$ is a compact subset of $X\times \mathbb{R}$ that includes the outside option, $(x_0,0)$. The agent selects her preferred option from $a$, breaking ties in the principal's favor. Letting $(x_a,t_a)$ denote type $\theta$'s optimal choice in $a$, the resulting payoffs are $u_S(a,\theta)=\max_{(x,t)\in a}\{v_S(x,\theta)-t\}$ and $u_R(a,\theta)= v_R(x_a,\theta)+t_a$.\qed
\end{example}

\section{When Disclosure Attains Design}\label{Section-MainResult}

\subsection{The Information Design Benchmark}\label{Section-Achievable}

We use the payoffs that stem from information design as a benchmark. In this benchmark, the receiver observes the realization of a Blackwell experiment and then chooses an action.  We call a distribution over types $G \in \Delta\left(\Theta\right)$ a \emph{belief}. A \emph{segmentation} is a distribution $\sigma \in \Delta\left(\Delta\left(\Theta\right)\right)$ over beliefs that average to the prior $F$---that is, $\int G \, \mathrm d \sigma(G) = F$---and a belief in the support of a segmentation is a \emph{segment}. A segmentation \emph{achieves} a payoff profile $(u_S^*,u_R^*)$ if player $i$'s ex ante expected payoff from the segmentation, given that the receiver best responds, is $u_i^*$; a payoff profile is \emph{achievable} if some segmentation achieves it. For instance, in monopoly pricing (\Cref{Example-BBM}), the set of achievable payoffs is the ``BBM-triangle'' characterized by \cite*{bergemann2015limits}, namely all feasible payoff profiles in which the monopolist does as well as she would from setting a uniform price.

Every equilibrium payoff profile of the disclosure game is achievable because every sender strategy induces a segmentation. Our main result pertains to the converse, namely conditions under which every achievable payoff can be approximated by an equilibrium of the disclosure game. We turn to these conditions next.

\subsection{The Key Assumptions}\label{Section-Assumptions}

We impose three assumptions. The first is that  the sender does not benefit from fully disclosing her type; this assumption is our primary departure from the literature and captures settings in which the receiver extracts the sender's surplus if she fully reveals her type. The second assumption, standard in the  literature, requires that every message contains a type that no other type prefers to imitate and can thus serve as the root for skeptical beliefs. The third assumption imposes a form of continuity on the receiver's utility function. 

To formalize these assumptions, let $a^*(G)\equiv \arg\max_{a\in A} \int u_R(a,\theta) \mathrm dG(\theta)$ be the receiver's optimal actions given belief $G$. 
Let $\underline{a}(\theta)$ denote a receiver-optimal action if the sender's type is known to be $\theta$, where the receiver breaks ties among his optimal actions so as to induce the lowest payoff for the type-$\theta$ sender.\footnote{In other words, $\underline{a}(\theta)\in\argmin\limits_{a\in a^*(\delta_{\theta})}u_S(a,\theta)$, where $\delta_\theta$ is the Dirac measure on $\theta$.} These definitions lead to our first assumption.

\begin{assumption}\label{Assumption-NoRents}
    The sender never strictly benefits from fully revealing her type relative to any other belief about her type. Formally, for every type $\theta$, belief $G$ such that the support of $G$ contains $\theta$, and receiver-optimal action $a\in a^*(G)$, $u_S(a,\theta)\ge u_S(\underline{a}(\theta),\theta)$.
    
\end{assumption}
\Cref{Assumption-NoRents} is our primary departure from the literature. A common assumption in prior work \citep[e.g.,][]{grossman:81-JLE,milgrom1981good} is that the sender's indirect utility over the receiver's beliefs is monotone: every type strictly prefers to induce ``higher'' beliefs in some order. This payoff monotonicity condition implies that in any pool of sender types, at least one type would prefer to separate and reveal itself. While natural in some environments, this condition rules out important contractual and market settings in which the receiver can flexibly adjust his action and transfers after observing the disclosure. In such settings, the sender cannot gain from full revelation as doing so would enable the receiver to make a TIOLI offer that fully extracts her surplus. In these settings, \Cref{Assumption-NoRents} holds instead. 

We illustrate this principle first in monopoly pricing (\cref{Example-BBM}): if the buyer reveals her type $\theta$, the monopolist would charge a price equal to $\theta$ that leaves her with no surplus. No other belief could make her worse off. 

More generally, consider the standard principal-agent setting described in \cref{Example-PrincipalAgent}. Under complete information, the principal offers a menu with two elements: the outside option with zero transfer and the efficient allocation paired with a transfer that extracts the sender's surplus. That is, the principal's action $a$ is the menu $\{(x_0,0),(x_\theta^*,t_\theta^*)\}$  where $x_0$ is the outside option, $x_\theta^*$ is an efficient allocation, maximizing $v_S(x,\theta)+v_R(x,\theta)$, and the transfer $t_\theta^*\equiv v_S(x^*_\theta,\theta)-v_S(x_0,\theta)$ leaves the sender just indifferent between accepting and rejecting.
As in monopoly pricing, complete information leads to a contract that yields a lower payoff than what the sender would obtain were she to induce any other belief.

We turn to our second assumption. 
\begin{assumption}\label{Assumption-WorstCase}
    Every message $m$ contains a \emph{worst-case type} $\hat\theta_m$ such that for every $\theta\in m$, 
    $u_S(\underline{a}(\theta),\theta)\geq u_S(\underline{a}(\hat{\theta}_m),\theta)$. In other words, no type in $m$ would prefer to imitate $\hat\theta_m$.
\end{assumption}
This assumption, standard in the literature, allows the modeler to deter off-path messages by endowing the receiver with ``skeptical beliefs'' that concentrate on a single type following an off-path message. It holds in settings that satisfy the payoff monotonicity condition \citep[e.g.,][]{grossman:81-JLE,milgrom1981good}: when every sender type prefers the receiver to hold higher beliefs, the worst-case type for a message is simply the lowest type capable of sending it. More generally, \cite{seidmann1997strategic} and \cite{hagenbach2014certifiable} show that a worst-case type exists if preferences satisfy a standard single-crossing condition. 

A similar logic applies in our leading applications, even though those applications violate the payoff monotonicity condition. For monopoly pricing, the worst-case type for a message $m$ is the highest type that could send it, $\hat\theta_m\equiv \max_{\theta\in m}\theta$. No other type that could send message $m$ would wish to mimic $\hat\theta_m$ because doing so induces the monopolist to charge a price equal to $\hat\theta_m$, yielding the buyer a payoff of $0$.

More generally, in the principal-agent setting of \cref{Example-PrincipalAgent}, \Cref{Assumption-WorstCase} holds whenever the agent's willingness to pay (WTP) for any alternative $x$, namely $v_S(x,\theta)-v_S(x_0,\theta)$, weakly increases in $\theta$. This increasing-WTP assumption holds in standard contracting environments: if $X$ is partially ordered with $x_0$ as the lowest alternative, the WTP is increasing if $v_S(x,\theta)$ satisfies increasing differences in $(x,\theta)$. In this setting, $\hat\theta_m\equiv \max_{\theta\in m}\theta$ serves as a worst-case type for message $m$. To see why, observe that given a point mass belief on $\hat\theta_m$, the principal optimally offers a menu of two items: the outside option and an efficient alternative for type $\hat\theta_m$ paired with the transfer that extracts full surplus from that type. Because type $\hat\theta_m$ has a higher WTP for this alternative than all lower types, every other type that could send message $m$ obtains zero surplus from this menu.

The discussion above emphasizes how \Cref{Assumption-NoRents,Assumption-WorstCase} are satisfied when the receiver makes a TIOLI offer. The example below shows that these assumptions are also compatible with the sender having some bargaining power. 

\begin{example}[Monopoly Pricing Revisited]
    Consider an adaptation of \Cref{Example-BBM} in which following disclosure, a \emph{random recognition} rule determines who makes the offer: the seller makes the offer with probability $\alpha\in (0,1)$ and the buyer makes the offer with probability $(1-\alpha)$. As the buyer's disclosure affects equilibrium payoffs only when the seller makes the offer, \Cref{Assumption-NoRents} still holds. Similarly, \Cref{Assumption-WorstCase} still holds, setting $\hat\theta_m\equiv \max_{\theta\in m}\theta$.\qed
\end{example}

Our final assumption imposes a form of continuity. To motivate this assumption, we observe that all of our applications, including monopoly pricing, are inherently discontinuous games in that $u_R(a,\theta)$ is discontinuous in $a$ for each $\theta$. Nevertheless, these games satisfy a form of continuity weaker than the standard notion. The condition below distills that underlying property. 
Let $U_R(G):=\max_{a\in A}\int u_R(a,\theta)\,\mathrm dG(\theta)$ be the receiver's expected payoff when he chooses his optimal action for a belief $G$. Of his best responses to belief $G$, let $\underline{a}(G)$ and $\overline{a}(G)$ be ones that minimize and maximize the sender's payoff: $\underline{a}(G)\in \arg\min_{a\in a^*(G)} \int u_S(a,\theta)\,\mathrm dG(\theta)$ and $\overline{a}(G)\in \arg\max_{a\in a^*(G)} \int u_S(a,\theta)\,\mathrm dG(\theta)$.

\newpage
\begin{assumption}\label{Assumption-Continuity}
The following hold:
    \begin{enumerate}[label=(\alph*)]
        \item The receiver's payoff from choosing an optimal action, $U_R(G)$, is continuous and the set of optimal actions, $a^*(G)$, is upper hemicontinuous. \label{a:continuity_1}
        \item For every belief $G$ and strictly positive $\e$ and $\delta$, there are
        \begin{itemize}
            \item a belief $\underline{H}$ such that the Radon-Nikodym derivative $\frac{\mathrm d\underline{H}}{\mathrm dG}\le 1+\e$, and any best response to $\underline{H}$ is in $B_{\delta}(\underline{a}(G))$; and
            \item a belief $\overline{H}$ such that the Radon-Nikodym derivative $\frac{\mathrm d\overline{H}}{\mathrm dG}\le 1+\e$, and any best response to $\overline{H}$ is in $B_{\delta}(\overline{a}(G))$.
        \end{itemize}
        Moreover, the functions that send $G$ to $\underline{H}$ and $\overline{H}$, respectively, are measurable. \label{a:continuity_2}
    \end{enumerate}
\end{assumption}

To elaborate on \ref{a:continuity_1}, let $u_R(a,G):=\int u_R(a,\theta)\,\mathrm dG(\theta)$ be the receiver's expected payoff when his belief is $G$. Were $u_R(a,G)$ continuous in $(a,G)$, part \ref{a:continuity_1} follows from Berge's maximum theorem. As noted above, $u_R(a,G)$ is \emph{not} continuous in $(a,G)$ in any of our applications. Nevertheless, we show that \ref{a:continuity_1} holds in these applications by applying maximum theorems that require weaker conditions than Berge's theorem. 

Part \ref{a:continuity_2}  specifies a form of ``selection continuity'' for the receiver's best responses: for every belief $G$, there exists a ``nearby'' belief $\underline H$ such that every best response to $\underline{H}$ lies close to the best response to $G$ that minimizes the sender's payoff. Symmetrically, there exists a nearby belief $\overline{H}$ such that every best response lies close to the best response to belief $G$ that maximizes the sender's payoff. A sufficient condition is that any ``target'' best response $a$ to belief $G$ can be ``uniquely selected'' by perturbing the belief $G$ to a ``nearby'' $H$ such that all best responses to $H$ are close to the target action $a$. This property holds if $A$ is convex and the receiver's utility function is strictly concave in $a$ for every belief $G$.

We show that \Cref{Assumption-Continuity}\ref{a:continuity_2} holds in our applications pertaining to monopoly pricing, asset markets, and policy negotiations using the fact that the receiver's action space is one-dimensional. An ancillary implication of part \ref{a:continuity_2} is that the receiver has a unique optimal action in the complete information game, which fails in our insurance application. To accommodate that application, we develop in \Cref{r:assn_insurance} a weaker form of continuity that does not require a unique optimal action and instead operates in the space of payoffs. The underlying economic idea is that the sender's risk aversion makes the receiver's payoff function ``almost'' strictly concave, which enables us to verify that this weaker notion holds.

\subsection{Main Result}

Given our assumptions, unraveling fails in this disclosure game: alongside a fully revealing equilibrium, there also exists a fully concealing equilibrium. In this equilibrium, the sender sends the message $\Theta$ regardless of her type and the receiver takes his (ex ante) optimal action on the equilibrium path. Following any off-path message, he assigns probability $1$ to a worst-case type for that message. Given \Cref{Assumption-NoRents,Assumption-WorstCase}, the sender finds neither full revelation nor a partially revealing off-path message profitable. 

Our main interest lies in characterizing the entire equilibrium set, not merely the extremes of full revelation and full concealment. Typically, an exhaustive characterization is infeasible once unraveling fails. In our setting, the full equilibrium payoff set can be identified through information design. 

\begin{theorem}\label{Theorem-MainResult}
    Suppose Assumptions \ref{Assumption-NoRents}, \ref{Assumption-WorstCase}, and \ref{Assumption-Continuity} are satisfied. For every achievable payoff profile $(u_S^*,u_R^*)$ and every $\e>0$, there is an equilibrium of the disclosure game whose payoffs are within $\e$ of $(u_S^*,u_R^*)$.
\end{theorem}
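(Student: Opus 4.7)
The plan is to decompose the proof into two steps. The first step establishes that any \emph{finite partitional segmentation}—one in which $\Theta$ is partitioned, up to a measure-zero boundary, into finitely many closed sets $\{P_1, \ldots, P_n\}$—can be supported exactly as an equilibrium of the disclosure game. The second, and harder, step shows that every achievable payoff profile can be approximated to within $\e$ by the payoff of some finite partitional segmentation. Combining these yields \Cref{Theorem-MainResult}.

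For the first step, given a partition $\{P_1, \ldots, P_n\}$, I would specify the following assessment. On path, each type $\theta \in P_i$ sends the message $P_i$, and the receiver best-responds to the updated belief $F|_{P_i}$, breaking ties consistently with the target payoff. Off path, upon seeing any message $m \notin \{P_1, \ldots, P_n\}$, the receiver's belief concentrates on the worst-case type $\hat{\theta}_m$ furnished by \Cref{Assumption-WorstCase}, and he plays $a^*(\hat{\theta}_m)$. Since the $P_i$ partition $\Theta$ modulo measure zero, no type $\theta \in P_i$ can send an on-path message $P_j$ with $j \neq i$. For any off-path deviation to $m \ni \theta$, the deviator obtains $u_S(a^*(\hat{\theta}_m), \theta) \le u_S(a^*(\theta), \theta)$ by \Cref{Assumption-WorstCase}, and this is at most $u_S(a^*(F|_{P_i}), \theta)$ by \Cref{Assumption-NoRents}; hence no deviation is profitable.

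The main obstacle is the second step, because partitional segmentations are very restrictive: the conditional $F|_{P_i}$ is pinned down by the partition and cannot have arbitrary density relative to $F$. I would first approximate the achieving segmentation $\sigma$ by a finite one $\{w_i, G_i\}_{i=1}^n$, relying on weak-$^*$ compactness of $\Delta(\Delta(\Theta))$ together with the continuity of $U_R$ from \Cref{Assumption-Continuity}\ref{a:continuity_1}, splitting segments if needed to isolate whether the target payoff uses the selection $\overline{a}(G_i)$ or $\underline{a}(G_i)$. Next, I would discretize $\Theta$ into small cells $\{C_j\}_j$—feasible because $F$ has a density—and within each $C_j$ carve sub-cells $\{C_{j,i}\}_{i=1}^n$ of $F$-measure proportional to $w_i g_i(\theta)|_{C_j}$, where $g_i = \mathrm{d} G_i/\mathrm{d} F$. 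Setting $P_i := \bigcup_j C_{j,i}$ produces a partition whose conditional beliefs $F|_{P_i}$ converge weakly to $G_i$ as the mesh shrinks, since for any continuous test function the integrals against $F|_{P_i}$ and $G_i$ agree up to the cellwise oscillation of the function.

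The remaining subtlety is that weak-$^*$ convergence of beliefs does not automatically preserve the sender's payoff, because the best-response correspondence $a^*(\cdot)$ is only upper hemicontinuous and the sender-preferred selection can jump discontinuously. This is precisely where \Cref{Assumption-Continuity}\ref{a:continuity_2} becomes essential: before performing the cell construction, I would replace each $G_i$ by a perturbed belief $\overline{H}_i$ (or $\underline{H}_i$) whose Radon-Nikodym derivative with respect to $G_i$ is at most $1+\e$ and whose every best response lies within $\delta$ of $\overline{a}(G_i)$ (resp.\ $\underline{a}(G_i)$). The measurability clause of the assumption allows the $\overline{H}_i$'s to be selected jointly, and adding a residual segment of weight $O(\e)$ restores total mass to $F$; designing the partition against these robust targets then guarantees that the receiver's actual actions converge to the targeted selection and that both players' payoffs are approximated to within $\e$. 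Applying Step 1 to the resulting partition yields an equilibrium whose payoffs lie within $\e$ of $(u_S^*, u_R^*)$, completing the argument.
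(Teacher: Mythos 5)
Your two-step decomposition is the same as the paper's: Step 1 corresponds to the paper's Lemma~\ref{Lemma-FinitePartitionalEqm} (finite partitional segmentations are supportable as equilibria), and Step 2 to Lemma~\ref{Lemma-FinitePartitionalSuff} (finite partitional segmentations approximate any achievable payoff). Step 1 is essentially right, though you overlook that boundary types in $P_i \cap P_j$ \emph{can} send either on-path message; you must also specify how they choose (e.g., the best available on-path message), which the paper does explicitly. Since these are measure-zero they do not affect payoffs, but sender optimality must hold for \emph{all} types, not almost all.

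The genuine gap is in the ordering of operations within Step 2. You propose: (i) approximate $\sigma$ by a finite segmentation $\{w_i, G_i\}$, (ii) perturb each $G_i$ to $\overline{H}_i$ via Assumption~\ref{Assumption-Continuity}\ref{a:continuity_2}, (iii) build the cube partition. But step (i) is itself not payoff-preserving for the sender, for exactly the reason you identify later: the best-response correspondence is only upper hemicontinuous, so collapsing infinitely many segments into finitely many (e.g., by barycenters of small balls) can replace $\overline{a}(G)$ by a different, less favorable selection at the merged segment. Your appeal to ``weak-$^*$ compactness and continuity of $U_R$'' controls the receiver's payoff but says nothing about the sender's; ``splitting segments to isolate the selection'' does not help, because the problem is that the merged segment's best response may be discontinuously far from the isolated target. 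The paper's order is perturb first, then merge, then partition: because Assumption~\ref{Assumption-Continuity}\ref{a:continuity_2} furnishes a \emph{measurable} selector $G \mapsto \overline{H}$, the perturbation can be applied to every segment of the original, possibly uncountably supported, $\sigma$. After perturbation, \emph{every} best response to the perturbed segment is near the targeted selection, so the subsequent merging and cube-discretization steps are safe by upper hemicontinuity. Swapping your steps (i) and (ii), and noting that (ii) requires the measurability clause precisely because $\supp\sigma$ may be infinite, repairs the argument and recovers the paper's proof.
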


\Cref{Theorem-MainResult} shows that a rich set of equilibrium outcomes prevails in our setup. Importantly, the sender can obtain payoffs arbitrarily close to those delivered by her optimal information structure without having to commit to a Blackwell experiment. Voluntary disclosure thus enables the sender to obtain her commitment payoff rather than creating the commitment trap emphasized by \cite{grossman:81-JLE} and \cite{milgrom1981good} where she is forced to reveal everything.

To explain why we reach a different conclusion, we return to how the receiver herein can flexibly adjust the terms of trade following disclosure. Because the receiver can issue a TIOLI offer, the sender has no incentive to fully reveal her type. To the contrary, she has a strong interest in keeping the receiver in the dark so as to counteract his bargaining power. This force is not captured by existing models that either restrict the receiver to more limited instruments, such as a quantity choice with fixed prices \citep{milgrom1981good}, or take bargaining power away from the receiver altogether \citep{grossman:81-JLE}. Once the receiver has the ability to flexibly adjust the terms of trade, the sender strategically withholds information to counteract the receiver's bargaining advantage in the sender-optimal equilibrium.

We highlight two ancillary implications of \Cref{Theorem-MainResult}. First,  endowing the sender to also communicate by cheap talk---either alone or alongside hard information---does not enlarge the equilibrium set. In fact, in all applications that we consider, cheap talk alone effectively results in the babbling equilibrium and the receiver choosing his ex ante optimal action. Second, methodologically, \Cref{Theorem-MainResult} identifies a broad class of environments in which the equilibrium payoffs of a disclosure game can be obtained non-constructively solely through the receiver's obedience constraints, ignoring the sender's incentive constraints altogether. 

One may wonder if the equilibria that we study rely on unreasonable off-path beliefs. In the Supplementary Appendix,  \Cref{Theorem-IntuitiveCriterion} shows that all equilibria we consider satisfy the natural analogue of the intuitive criterion \citep{cho1987signaling}. 
Therein, we also study a second form of robustness, namely to the sender lacking evidence with small probability. This perturbation, in the spirit of \cite{dye1985disclosure},  motivates a large strand of the disclosure literature. We analyze its implications in the principal-agent setting of \Cref{Example-PrincipalAgent} and show that under an additional assumption---satisfied in monopoly pricing, asset markets, and insurance contracting---the equilibrium payoff set is robust to this perturbation. Thus, the possibility that the sender lacks evidence does not refine equilibrium outcomes in these contractual settings.
More specific to monopoly pricing, we show in \Cref{Section-Monopoly} that the truth-leaning refinement introduced by \cite{hart2017evidence} has bite, but only by ruling out inefficient payoff profiles.

\Cref{Section-Sketch} sketches the proof of \Cref{Theorem-MainResult}. The key obstacle is that in information design, a sender can commit to mixing across messages even if the resulting payoffs differ; by contrast, in an equilibrium of the disclosure game, the sender mixes only if she is indifferent. Our first step shows that if a segmentation is ``partitional''---in that only a measure-$0$ set of types belongs to more than one segment---then it can be supported in an equilibrium of the disclosure game. Our second step evaluates the cost of restricting attention to partitional segmentations: we show that for every achievable payoff profile, there exists a nearby payoff profile that is achieved by a finite partitional segmentation.\footnote{This payoff profile is nearby in an ex ante sense but may differ in its ex interim payoffs.} In the Supplementary Appendix, we show that our assumptions are necessary for \Cref{Theorem-MainResult} in that the conclusion fails if any assumption is dropped while maintaining the other two assumptions.

\subsection{Proof Sketch}\label{Section-Sketch}

The sender's equilibrium messaging strategy induces a segmentation: each message $m$ defines a segment corresponding to the receiver's belief following that message. Say that a segmentation $\sigma$ is \emph{finite} if its support is a finite set. A segmentation $\sigma$ is \emph{partitional} if for every $G, H \in \supp{(\sigma)}$ with $G\neq H$, $\supp(G) \cap \supp(H)$ has $F$-measure zero.

\begin{lemma} \label{Lemma-FinitePartitionalEqm}
    If \Cref{Assumption-NoRents,Assumption-WorstCase} are satisfied, then every payoff profile achieved by a finite partitional segmentation can be supported as an equilibrium. 
\end{lemma}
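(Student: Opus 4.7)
The plan is to construct an explicit equilibrium from a finite partitional segmentation $\sigma=\sum_{k=1}^K w_k\delta_{G_k}$ that achieves $(u_S^*,u_R^*)$. Set $T_k:=\supp(G_k)$, which is closed and therefore a valid message. Since $\sigma$ is partitional, the sets $T_k$ pairwise overlap only on $F$-null sets, so the disjointification $S_k:=T_k\setminus\bigcup_{j<k}T_j$ yields a Borel partition of $\bigcup_k T_k$; moreover, $\bigcup_k T_k$ has full $F$-measure because $\sum_k w_k G_k=F$ and each $G_k$ is supported on $T_k$. A direct calculation using $\sum_k w_k G_k=F$ together with the essential disjointness of the $T_k$'s shows that $G_k$ is precisely the normalized conditional of $F$ on $S_k$, with $F(S_k)=w_k$.

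I then define the candidate equilibrium as follows. The sender of type $\theta\in S_k$ sends the message $m_k:=T_k$ (the $F$-null set of types outside $\bigcup_k S_k$ is assigned messages arbitrarily). After the on-path message $m_k$, the receiver holds belief $\mu(m_k)=G_k$ and plays some $a_k\in a^*(G_k)$. After any off-path message $m$, the receiver concentrates belief on the worst-case type $\hat\theta_m$ furnished by \Cref{Assumption-WorstCase} and plays $a^*(\hat\theta_m)$. By construction, $\mu(m_k)=G_k$ is the Bayesian posterior given the sender's strategy, and off-path beliefs are supported in $m$ because $\hat\theta_m\in m$, so equilibrium conditions (c) and (d) are satisfied. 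Receiver optimality is immediate on-path (by the choice $a_k\in a^*(G_k)$) and off-path (as $a^*(\hat\theta_m)$ best-responds to the point mass on $\hat\theta_m$).

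The heart of the argument is sender optimality, which combines both assumptions. For $\theta\in S_k$, the on-path payoff is $u_S(a_k,\theta)$, which by \Cref{Assumption-NoRents} weakly exceeds the complete-information payoff $u_S(a^*(\theta),\theta)$, since $\theta\in\supp(G_k)$ and $a_k\in a^*(G_k)$. A deviation to a different on-path message $m_{k'}$ is feasible only when $\theta\in T_{k'}$, which for $\theta\in S_k$ occurs only on the $F$-null boundary $T_k\cap T_{k'}$; those residual types can simply be reassigned to whichever on-path message they prefer without altering the receiver's posteriors. For any off-path deviation $m$, the induced payoff is $u_S(a^*(\hat\theta_m),\theta)$, which by \Cref{Assumption-WorstCase} is bounded above by $u_S(a^*(\theta),\theta)$; chaining the two inequalities yields $u_S(a_k,\theta)\geq u_S(a^*(\hat\theta_m),\theta)$, ruling out profitable deviations.

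The only conceptual obstacle is the measure-zero boundary where the argument ``no other on-path message is available'' breaks down; the partitional hypothesis is precisely what neutralizes this issue, since reassigning an $F$-null set of types leaves every posterior unchanged. Because the induced joint distribution over types and receiver actions coincides with that generated by $\sigma$, the ex ante equilibrium payoffs equal $(u_S^*,u_R^*)$, completing the proof.
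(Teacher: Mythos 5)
Your proof is correct and takes essentially the same approach as the paper's: on-path messages are the supports of the segments, each type sends the message corresponding to the segment it belongs to, off-path messages are deterred by point-mass beliefs on the worst-case type from \Cref{Assumption-WorstCase}, and the chain of inequalities $u_S(a_k,\theta)\ge u_S(a^*(\theta),\theta)\ge u_S(a^*(\hat\theta_m),\theta)$ via \Cref{Assumption-NoRents,Assumption-WorstCase} rules out profitable deviations. Your explicit disjointification and the observation that $G_k=F(\cdot\mid S_k)$ are a slightly more formal rendering of the paper's remark that types lying in multiple segments form an $F$-null set (and, since $\bigcup_k T_k$ is a finite union of closed sets with full $F$-measure and $F$ has a strictly positive density, that union is in fact all of $\Theta$, so there are no leftover types to assign ``arbitrarily'').
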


\noindent \textbf{Proof sketch.} Fix a finite partitional segmentation ${\sigma}$ and a best response $a:\supp{\sigma}\rightarrow A$ for the receiver. In the equilibrium we construct, the set of on-path messages is $\{\supp{G}\}_{G \in \supp{{\sigma}}}$. For any on-path message $\supp{G}$, the receiver updates her belief to $G$ and plays $a(G)$. For any off-path message $m$, the receiver updates to a point mass belief on the ``worst case type'' $\hat{\theta}_m$, which exists by \autoref{Assumption-WorstCase}, and then plays $\underline{a}(\hat{\theta}_m)$. 
For any type contained in the support of only one segment in $\supp{\sigma}$, there is only one on-path message available to her and the sender sends this message. For any type $\theta$ in the support of two or more segments in ${\sigma}$, the sender chooses the on-path message available to $\theta$ that results in the best possible action given $a(\cdot)$. 

We establish that the above constitutes an equilibrium. Observe that the receiver's beliefs are consistent with Bayes' rule.\footnote{Since ${\sigma}$ is finite partitional, the set of types that are contained in multiple segments have $F$-measure zero. Therefore, the behavior of such types does not affect the receiver's beliefs.} The receiver plays a best response after every message by construction. Also by construction, no type of the sender has a profitable deviation to an on-path message, and \autoref{Assumption-NoRents} and \autoref{Assumption-WorstCase} together imply that no type could profitably deviate to an off-path message.\hfill\qedsymbol

\begin{lemma}\label{Lemma-FinitePartitionalSuff}
    If \Cref{Assumption-Continuity} is satisfied, then for every achievable payoff profile $(u^*_S,u^*_R)$ and every $\e>0$, there is a finite partitional segmentation that achieves payoffs within $\e$ of $(u^*_S,u^*_R)$.
\end{lemma}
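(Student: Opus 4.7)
The proof will construct a finite partitional segmentation $\tilde\sigma$ approximating an arbitrary achievable payoff profile through three reductions: a finite coarsening of $\sigma$ on the space of beliefs, a controlled splitting of each segment to pin down the receiver's best response via \Cref{Assumption-Continuity}(b), and a redistribution of type mass within fine cells of $\Theta$ to achieve the partitional property.

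For the finite approximation, I would partition $\Delta(\Theta)$ (compact in the weak* topology) into small Borel cells $B_1,\dots,B_K$ and let $\sigma_1=\sum_k \alpha_k \delta_{G_k}$, where $\alpha_k=\sigma(B_k)$ and $G_k$ is the barycenter $\alpha_k^{-1}\int_{B_k} G\,\mathrm d\sigma(G)$. By construction $\sigma_1$ still averages to $F$, and uniform continuity of $U_R$ on compact $\Delta(\Theta)$---which follows from \Cref{Assumption-Continuity}(a)---controls the receiver's payoff distortion as the cell diameter shrinks. For the partitional conversion, given any finite segmentation $\sum_j \alpha'_j \delta_{H_j}$, I would partition $\Theta$ into small cells $C_1,\dots,C_L$ and, using the strict positivity of $f$, subdivide each $C_\ell$ into disjoint measurable subsets $C_\ell^j$ with $F(C_\ell^j)=\alpha'_j H_j(C_\ell)$; this is possible because $F$ is atomless. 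Setting $\tilde H_j := F|_{\bigcup_\ell C_\ell^j}/\alpha'_j$ yields a partitional segmentation whose segments agree with $H_j$ on every cell, so $\tilde H_j \to H_j$ in weak* as the cell diameters vanish, and continuity of $U_R$ together with upper hemicontinuity of $a^*$ ensures that payoffs are only slightly perturbed.

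The delicate middle step concerns the sender's payoff, whose target contribution from cell $B_k$ is generically a strict convex combination $\lambda_k\underline u_S(G_k)+(1-\lambda_k)\overline u_S(G_k)$, and upper hemicontinuity of $a^*$ alone does not suffice to make the receiver play an action realizing this intermediate payoff at a nearby belief. My plan is to apply \Cref{Assumption-Continuity}(b) to obtain perturbations $\underline H_k,\overline H_k$ of $G_k$ whose best responses lie in prescribed neighborhoods of $\underline a(G_k)$ and $\overline a(G_k)$, and to replace the segment $\alpha_k\delta_{G_k}$ with $\alpha_k\lambda_k\delta_{\underline H_k}+\alpha_k(1-\lambda_k)\delta_{\overline H_k}$. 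The hard part will be that the Radon--Nikodym bounds $\mathrm d\underline H_k/\mathrm dG_k,\mathrm d\overline H_k/\mathrm dG_k\le 1+\e_0$ are one-sided, so the split segmentation averages to a measure $\mu$ with $\mu\le(1+\e_0)F$ rather than $\mu=F$, producing a total-variation discrepancy of order $\e_0$. I will restore exact averaging by scaling all segment weights down by a factor of $1-O(\e_0)$ and adding one correction segment of weight $O(\e_0)$ whose belief absorbs the residual; the measurability clause of \Cref{Assumption-Continuity}(b) ensures the maps $G_k\mapsto\underline H_k,\overline H_k$ can be chosen measurably so that the construction can be iterated if needed. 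For $\e_0$ chosen small relative to $\e$, the cumulative payoff distortion---from the coarsening, the splitting and correction, and the final partitional redistribution---stays below $\e$, completing the proof.
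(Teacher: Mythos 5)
Your proposal assembles the right ingredients — a finite coarsening via barycenters, a perturbation of beliefs using \Cref{Assumption-Continuity}\ref{a:continuity_2} to pin down the receiver's best response, and a Lyapunov-style partitional redistribution — but it orders them differently from the paper, and the ordering is where a genuine gap appears.

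The paper applies the perturbation \emph{first}: each segment $G\in\supp\sigma$ is replaced by $t(G)$ whose \emph{entire} best-response set lies in a small ball around $\overline{a}(G)$. Only afterwards does it merge nearby segments into barycenters, at which point upper hemicontinuity of $a^*$ is enough, because every segment being merged already has a tiny best-response set concentrated near the action the sender was supposed to receive. The sender-payoff accounting is then done type by type, using only continuity of $u_S(\cdot,\theta)$ in the action.

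You coarsen \emph{before} perturbing, and that is the step where your sender-payoff argument is not justified. After replacing $\sigma|_{B_k}$ by $\alpha_k\delta_{G_k}$, you need the target contribution $\alpha_k^{-1}\int_{B_k} u_S(\overline{a}(G),G)\,\mathrm d\sigma(G)$ to lie in (a small enlargement of) the interval $[\underline{u}_S(G_k),\overline{u}_S(G_k)]$, so that some $\lambda_k\in[0,1]$ realizes it. You assert this holds ``generically,'' but the assumptions of the paper do not deliver it: $a^*$ is only upper hemicontinuous, so $a^*(G_k)$ can be a strict and much smaller subset of a neighborhood of $a^*(G)$ for $G\in B_k$, and nothing is assumed about continuity of the sender's value $\int u_S(a,\theta)\,\mathrm dG(\theta)$ in $G$. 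In particular, $\overline{u}_S$ need not be upper semicontinuous and $\underline{u}_S$ need not be lower semicontinuous, so the intended payoff can fall outside the interval achievable from $G_k$. Concretely, if $a^*(G)=\{a_1,a_2\}$ with $a_1$ much better for the sender, but $a^*(G')=\{a_2\}$ for all $G'$ near $G$ with $G'\neq G$, then the barycenter $G_k$ inherits only $a_2$ and cannot reproduce the sender payoff from $a_1$. \Cref{Assumption-Continuity}\ref{a:continuity_2} applied to $G_k$ cannot repair this, since $\underline{H}_k,\overline{H}_k$ only steer the best response toward $\underline{a}(G_k),\overline{a}(G_k)$, which are already the wrong targets. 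This is exactly why the paper perturbs first: it selects the action near each original $G$ \emph{before} averaging segments together, so the barycenter step never has to recover a lost action.

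The rest of your plan is fine and in places cleaner than the paper's: your handling of the one-sided Radon--Nikodym bound via a global rescaling plus a residual segment matches the paper's Step 3, and your atomless-subdivision argument for the partitional conversion is a legitimate (and slightly slicker) alternative to the paper's cube-assignment \Cref{l:approximate_multidimensional}. To close the gap, move the \Cref{Assumption-Continuity}\ref{a:continuity_2} perturbation to the front: map each $G\in\supp\sigma$ to $t(G)$ with a small best-response set, take the scaled pushforward plus a residual segment, and only then do the barycenter merge and the partitional redistribution.
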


\noindent \textbf{Proof sketch.} Fix an achievable payoff profile $(u^*_S,u^*_R)$, a segmentation $\sigma$ and best response achieving this payoff profile, and $\e>0$. To illustrate the logic, we assume that the receiver plays $\overline{a}(G)$ for all $G \in \supp \sigma$. \autoref{fig:approx_process} depicts the three steps of the argument.

\begin{figure}[h!]
    \centering
    \begin{tikzpicture}[scale=1]
        \node at (0,0) {$\sigma$};
        \node at (4,0) {$\sigma_1$};
        \node at (8,0) {$\sigma_2$};
        \node at (12,0) {$\sigma_3$};
        \draw[->,thick] (0.5,0) -- (3.5,0);
        \draw[->,thick] (4.5,0) -- (7.5,0);
        \draw[->,thick] (8.5,0) -- (11.5,0);
        \node[below] at (2,-0.5) {\small close belief $\to$ close action};
        \node[below] at (6,-0.5) {\small finite};
        \node[below] at (10,-0.5) {\small finite partitional};
    \end{tikzpicture}
    \caption{The approximation process for \autoref{Lemma-FinitePartitionalSuff}.}
    \label{fig:approx_process}
\end{figure}

Because optimal actions may not be lower hemicontinuous in the belief, even small perturbations of the belief can significantly change the receiver's best response and thereby significantly alter the sender's payoff. We sidestep this issue by first perturbing any segment $G$ such that in the perturbed segment, all optimal actions are close to $\overline{a}(G)$: \autoref{Assumption-Continuity}\ref{a:continuity_2} assures that for every $G \in \supp \sigma$, there is a nearby segment $\overline{H}$ such that any best response to $\overline{H}$ is arbitrarily close to $\overline{a}(G)$. We obtain a new segmentation $\sigma_1$ by replacing each $G \in \supp \sigma$ with its corresponding $\overline{H}$.\footnote{To maintain Bayes' plausibility, we reduce the probability of each segment slightly and create one additional segment.} By choosing $\overline{H}$ sufficiently close to $G$, the receiver's payoff under $\sigma_1$ is within $\e/3$ of $u^*_R$ because by \autoref{Assumption-Continuity}\ref{a:continuity_1}, the receiver's payoff from choosing an optimal action is continuous in the belief. The sender's payoff under $\sigma_1$ is also within $\e/3$ of $u^*_S$ because any best response to $\overline{H}$ is arbitrarily close to $\overline{a}(G)$ and because the sender's payoff is continuous in the action for each type. 

The second step converts $\sigma_1$ to a finite segmentation $\sigma_2$ by ``merging'' segments in $\sigma_1$ that are sufficiently close to each other into a single segment, which is the average of the aforementioned segments. Because $\Delta(\Theta)$ is compact, the number of resulting ``average segments'' can be chosen to be finite. Because the receiver's best-response correspondence is upper hemicontinuous, any best response to an ``average segment'' in $\sigma_2$ is sufficiently close to any best response to any segment in $\sigma_2$ that is merged into it. Consequently, the payoffs under $\sigma_2$ are within $\e/3$ of those under $\sigma_1$. 

Our last step identifies a finite partitional segmentation $\sigma_3$ that achieves payoffs  within $\e/3$ of those under $\sigma_2$. Loosely speaking, we partition the type space into sufficiently small cubes, and approximate each of the finitely many segments in $\sigma_2$ using a collection of such cubes, which is possible because $F$ is absolutely continuous.
The statement on payoffs then follows from \autoref{Assumption-Continuity}\ref{a:continuity_1}.
Therefore, the payoffs under $\sigma_3$ are within $\e$ of $(u^*_S,u^*_R)$.\hfill\qedsymbol

\subsection{Finite Types}\label{Section-MainFinite}

The proof of \Cref{Theorem-MainResult} relies on the prior being atomless but a similar result holds for any finitely supported prior in which no single type has excessive mass. To establish this conclusion, we hold fixed the setting of \Cref{Section-Model}, including an ambient type space $\Theta\subseteq \Re^n$ that is compact, and consider a prior $F$ whose support is a finite subset of $\Theta$.\footnote{Observe that as to whether Assumptions \ref{Assumption-NoRents}, \ref{Assumption-WorstCase}, and \ref{Assumption-Continuity} hold is independent of the prior.}
\begin{theorem}\label{Theorem-Finite}
    Suppose Assumptions \ref{Assumption-NoRents}, \ref{Assumption-WorstCase}, and \ref{Assumption-Continuity} hold. For every $\e>0$, there is $\gamma>0$ such that if $F$ has finite support with $F(\{\theta\})\le \gamma$ for every type $\theta$, then for every achievable payoff profile $(u^*_S,u^*_R)$, there is an equilibrium of the disclosure game whose payoffs are within $\e$ of $(u^*_S,u^*_R)$.
\end{theorem}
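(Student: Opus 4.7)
The plan is to adapt the two-lemma structure used to prove \Cref{Theorem-MainResult}. \Cref{Lemma-FinitePartitionalEqm} carries over without modification, since its proof uses ``partitional'' only to make the receiver's Bayesian updating unambiguous along the equilibrium path, and with finite support this just means segments have disjoint supports. All the work therefore goes into replacing the density-based Step~3 of \Cref{Lemma-FinitePartitionalSuff} by an argument tailored to finitely supported priors whose atoms have mass at most $\gamma$.

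Steps~1 and~2 of the proof sketch of \Cref{Lemma-FinitePartitionalSuff} carry over verbatim; neither uses atomlessness of $F$. Given an achievable profile $(u_S^*, u_R^*)$ and an achieving segmentation $\sigma$, Step~1 invokes \Cref{Assumption-Continuity}\ref{a:continuity_2} to replace each $G\in\supp\sigma$ by a nearby segment $\overline{H}$ all of whose best responses lie in $B_\delta(\overline{a}(G))$, and Step~2 merges segments lying in a common small weak-$*$ ball of $\Delta(\Theta)$ into their average. This yields a finite segmentation $\sigma_2=\sum_{k=1}^K p_k\,\delta_{G_k}$---with $K$ and $\delta$ depending only on~$\e$ and the primitives of the game---whose payoffs lie within $2\e/3$ of $(u_S^*, u_R^*)$ and in which every best response to each $G_k$ lies within~$\delta$ of a target action $a_k^\star$. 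By deleting segments with $p_k<\eta$ and spreading their mass proportionally over the remaining segments (for $\eta$ chosen so the resulting payoff shift is at most $\e/6$), I may further assume $p_k\geq\eta$ for every~$k$ without disturbing the previous properties significantly.

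To replace $\sigma_2$ by a partitional segmentation I use independent randomized rounding. For each atom $\theta$ of $F$, independently assign $\theta$ to segment~$k$ with probability $\alpha_k(\theta):=p_k G_k(\{\theta\})/F(\{\theta\})$; Bayes-plausibility of $\sigma_2$ makes these numbers sum to one across~$k$. Let $S_k$ denote the random set of types assigned to~$k$, set $\tilde p_k:=F(S_k)$ and $\tilde G_k:=F(\,\cdot\,\cap S_k)/\tilde p_k$; the random segmentation $\tilde\sigma=\sum_k \tilde p_k\,\delta_{\tilde G_k}$ is partitional and Bayes-plausible for every realization. For any continuous $\phi:\Theta\to\R$, the random variable $\tilde p_k\int\phi\,d\tilde G_k=\sum_\theta F(\{\theta\})\,\phi(\theta)\,\mathbf{1}(\theta\in S_k)$ is a sum of independent variables each bounded by $\gamma\|\phi\|_\infty$; its mean is $p_k\int\phi\,dG_k$ and its variance is at most $\gamma\|\phi\|_\infty^2$. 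Picking a finite family $\phi_1,\dots,\phi_M$ of continuous test functions whose integrals characterize a prescribed weak-$*$ neighborhood around each~$G_k$---possible since $\Delta(\Theta)$ is compact metrizable---and applying Chebyshev's inequality with a union bound over $k\leq K$ and $m\leq M$, one can choose $\gamma$ small enough (depending on~$\e$, $K$, $M$, and $\eta$) that a positive-probability event of realizations produces $\tilde G_k$ within this weak-$*$ tolerance of~$G_k$ and $\tilde p_k$ within a preset tolerance of~$p_k$ for every~$k$.

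Fixing such a deterministic partitional $\tilde\sigma$, \Cref{Assumption-Continuity}\ref{a:continuity_1} yields $U_R(\tilde G_k)\approx U_R(G_k)$ and, via upper hemicontinuity of~$a^*$, forces any receiver best response to $\tilde G_k$ to lie within $2\delta$ of $a_k^\star$; continuity of $u_S$ in~$a$ then makes the sender's payoff from $\tilde\sigma$ close to that from $\sigma_2$. Aggregating across the $K$ segments places the ex ante payoffs from $\tilde\sigma$ within~$\e$ of $(u_S^*, u_R^*)$, and \Cref{Lemma-FinitePartitionalEqm} lifts $\tilde\sigma$ to an equilibrium of the disclosure game. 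The main obstacle is arranging the test-function family $\phi_1,\dots,\phi_M$---and hence the threshold~$\gamma$---uniformly in the achievable profile $(u_S^*, u_R^*)$, since the theorem demands a single~$\gamma$ depending only on~$\e$. This is handled by first fixing a weak-$*$ neighborhood size implied by the continuity modulus of $U_R$ on $\Delta(\Theta)$ together with the uniform continuity of $u_S$ in~$a$ on the compact set~$A$ (both dependent only on~$\e$ and the primitives), and only then selecting $\phi_1,\dots,\phi_M$ adequate for that tolerance.
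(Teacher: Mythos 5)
Your proposal is correct and reaches the same conclusion as the paper, but by a genuinely different construction for the key step. You agree with the paper that \Cref{Lemma-FinitePartitionalEqm} and Steps 1--5 of \Cref{Lemma-FinitePartitionalSuff} carry over verbatim, so everything hinges on replacing Step 6 (converting a finite segmentation into a nearby finite \emph{partitional} one) under a finitely supported prior with small atoms. The paper proves this via \Cref{Lemma-discrete_approx}: after ordering the atoms of $\Theta$ (using WLOG $\Theta$ unidimensional), a deterministic greedy pass assigns each type to the lowest-indexed segment whose cumulative allocation is behind schedule, and explicit CDF-accounting with the threshold $\gamma = Q\delta/(N+1)$, $Q=\min_i\sigma(F_i)$, yields the L\'{e}vy-Prokhorov and weight bounds. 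You instead use independent randomized rounding---assign each atom $\theta$ to segment $k$ with probability $p_kG_k(\{\theta\})/F(\{\theta\})$---and invoke Chebyshev plus a union bound over $K$ segments and $M$ test functions to show that with positive probability the resulting deterministic partitional segmentation is within the prescribed weak-$*$ tolerance. Your variance bound $\gamma\|\phi\|_\infty^2$ is correct, and the construction is Bayes-plausible and partitional by design.

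The two approaches buy slightly different things. The paper's greedy assignment is fully constructive, elementary, and gives an explicit $\gamma$; it relies on ordering the types, which the paper licenses by restricting to unidimensional $\Theta$. Your randomized-rounding argument is dimension-free and avoids any ordering, which is a genuine (if modest) gain in generality; the cost is that the existence of a good assignment is probabilistic rather than algorithmic, and the constants are less explicit. You are also more careful than the paper about the uniformity of $\gamma$ in the achievable profile: the paper's $Q$ in \Cref{Lemma-discrete_approx} depends on $\sigma$, whereas your pruning of segments with $p_k<\eta$ and the observation that $K$ can be bounded by a covering number of $\Delta(\Theta)$ (depending only on $\e$ and the primitives) makes the uniformity explicit. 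This fills in a point the paper leaves somewhat implicit.
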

\Cref{Theorem-Finite} offers a finite analogue of our main result. The key step shows that once types have sufficiently low mass, any finite segmentation can be approximated (in the appropriate sense) by a partitional one. The requirement that no type has excessive mass is necessary for our conclusion, as we illustrate below.

\begin{example}\label{Example-BinaryTypes}
    Consider an adaptation of \Cref{Example-BBM} in which the buyer's type $\theta$ is drawn from the binary set $\{\underline\theta,\overline\theta\}$ in which $\underline\theta>0$ and suppose that the optimal uniform price, $\overline{p}=\overline\theta$. The buyer-optimal segmentation would feature two segments, one comprising $\overline\theta$ alone and the other featuring a pool of both types such that the monopolist prices at $\underline\theta$. Although this setting satisfies \Cref{Assumption-NoRents,Assumption-WorstCase,Assumption-Continuity} (shown in \Cref{Section-Monopoly}), the payoff profile induced by this segmentation cannot be approximated by an equilibrium of the disclosure game. The issue is that, in equilibrium, type $\overline\theta$ mixes between messages $\{\overline\theta\}$ and $\{\underline\theta,\overline\theta\}$ only if the two messages result in the same price. Given that the former message results in a price of $\uv$, the buyer's equilibrium payoff must be $0$.\footnote{A different conclusion obtains were the buyer able to do interim information design \citep{perez2014interim,koessler2023informed,zapechelnyuk2023equivalence,madarasz2023information}. If the buyer publicly chooses a Blackwell experiment after learning her type, for every information structure, there exists an equilibrium in which all buyer types choose that information structure. \cite{clarkyang} propose a model of partially informed disclosure that nests both interim information design and standard disclosure games.}
\end{example}

\section{Applications}\label{Section-Applications}

We use our framework to understand how much information a consumer would disclose about her preferences in a market, when an asset seller discloses information to avoid an Akerlof lemon's problem, and how insurees would disclose information---e.g., genetic test results---in an insurance market. In the Supplementary Appendix, we also study policy negotiations with incomplete information. In these applications, we also develop additional findings.

\subsection{Monopoly Pricing}\label{Section-Monopoly}

Below, we show that the leading example of monopoly pricing satisfies the assumptions of our model. Our main result then implies that the set of achievable payoffs characterized by \cite{bergemann2015limits} can be supported virtually as equilibrium payoffs of the disclosure game. We sketch the logic for how equilibria of the disclosure game approximate the payoffs of the consumer-optimal segmentation and we characterize when such equilibria require non-interval messages. While our focus is on monopoly pricing, we show that this logic also applies in competitive markets with differentiated products. Finally, we show that the truth-leaning refinement selects equilibrium payoffs on the efficiency frontier. 

The model corresponds to that of \Cref{Example-BBM}: a monopolist (he) sells a product to a single consumer (she), who demands a single unit. The consumer's valuation $\theta$ is drawn according to an absolutely continuous CDF $F$ with support on $\Theta = [\uv,\ov]$ where $\uv \geq 0$. The monopolist's reservation value is $0$. We augment this standard game with a disclosure stage according to the following timeline. The consumer first observes $\theta$ and sends a message $m \in \mathcal{M}(\theta)$ to the monopolist. The monopolist then sets a price $a \in [0,\overline\theta]$. A type-$\theta$ consumer's payoff is $u_S(a,\theta) = \max\{\theta - a, 0\}$, and the monopolist's payoff is $u_R(a,\theta) = a \mathbf{1}_{a \le \theta}$. 

\cite{bergemann2015limits} show that a payoff profile is achievable so long as (i) the consumer's payoff is nonnegative, (ii) the monopolist's payoff is no less than $\overline{u}_R$, his payoff from charging the optimal uniform price, and (iii) the total payoff is no more than the maximal aggregate surplus $\overline w$. We have already shown that this setting satisfies \Cref{Assumption-NoRents,Assumption-WorstCase}. We establish that \Cref{Assumption-Continuity} also holds; this argument is involved and for one of the steps, we use \cite{yang2023continuity}. Therefore, every achievable payoff can also be approximated through an equilibrium of the disclosure game. 

\begin{proposition}\label{Proposition-BBM}
    For every payoff profile $(u_S^*, u_R^*)$ with $u_S^*\ge 0$, $u_R^*\ge \overline u_R$, and $u_S^*+u_R^*\le \overline w$, and every $\e>0$, there is an equilibrium of the disclosure game that supports payoffs within $\e$ of $(u_S^*,u_R^*)$.
\end{proposition}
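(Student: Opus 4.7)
The plan is to apply \Cref{Theorem-MainResult}. By \cite*{bergemann2015limits}, the set of achievable payoff profiles in this environment is exactly the BBM triangle, so it suffices to verify that \Cref{Assumption-NoRents,Assumption-WorstCase,Assumption-Continuity} hold for the disclosure game of \Cref{Example-BBM}.

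\Cref{Assumption-NoRents,Assumption-WorstCase} are immediate. With ties broken against the buyer, the complete-information optimal price is $a^*(\theta) = \theta$, yielding $u_S(a^*(\theta),\theta) = 0$; since $u_S(a,\theta) = \max\{\theta-a,0\} \geq 0$ for every action $a$, \Cref{Assumption-NoRents} follows. For \Cref{Assumption-WorstCase}, set $\hat\theta_m \equiv \max_{\theta \in m} \theta$, which exists because $m$ is closed; for every $\theta \in m$, mimicking $\hat\theta_m$ induces the price $\hat\theta_m \geq \theta$ and yields payoff $\max\{\theta - \hat\theta_m,0\} = 0 = u_S(a^*(\theta),\theta)$.

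The main obstacle is verifying \Cref{Assumption-Continuity}. For part~\ref{a:continuity_1}, the profit $u_R(a,G) = a\,G([a,\ov])$ is jointly upper semicontinuous in $(a,G)$: for any $(a_n,G_n) \to (a,G)$ and $\epsilon > 0$, eventually $[a_n,\ov] \subseteq [a-\epsilon,\ov]$, so the Portmanteau theorem yields $\limsup_n G_n([a_n,\ov]) \leq G([a-\epsilon,\ov])$, which tends to $G([a,\ov])$ as $\epsilon \downarrow 0$. The standard maximum-theorem argument for usc objectives on compact constraint sets then delivers upper hemicontinuity of $a^*(\cdot)$ and upper semicontinuity of $U_R(\cdot)$. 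For lower semicontinuity of $U_R$, pick $a^* \in a^*(G)$ and small $\epsilon > 0$, and let $a \in [a^* - \epsilon, a^*]$ be a continuity point of $G$ (possible since $G$ has at most countably many atoms); then $U_R(G_n) \geq a\,G_n([a,\ov]) \to a\,G([a,\ov]) \geq (a^* - \epsilon)G([a^*,\ov]) \geq U_R(G) - \epsilon$.

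For part~\ref{a:continuity_2}, let $p^* \equiv \underline a(G)$ be the highest optimal price and define
\[
\underline H \equiv \frac{G + \e\,G|_{[p^*,\ov]}}{1 + \e\,G([p^*,\ov])}.
\]
Its Radon--Nikodym derivative with respect to $G$ is at most $1+\e$. For $a \geq p^*$, profits under $\underline H$ are a common positive multiple of profits under $G$, so no price strictly above $p^*$ can tie with $p^*$ (which was already the highest optimum under $G$). For $a < p^*$, a direct computation gives
\[
U(p^*, \underline H) - U(a, \underline H) \;\propto\; \bigl(V(G) - a\,G([a,\ov])\bigr) + \e\,G([p^*,\ov])(p^* - a) \;>\; 0,
\]
where $V(G)$ is the maximum profit under $G$; the first term is nonnegative by optimality of $p^*$ and the second is strictly positive. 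Thus $a^*(\underline H) = \{p^*\} \subset B_\delta(p^*)$. For $\overline H$, let $q^* \equiv \overline a(G)$ and upweight types in $[q^*, q^* + \eta]$ for $\eta < \delta$; the analogous argument yields $a^*(\overline H) \subseteq [q^*, q^* + \eta] \subset B_\delta(q^*)$, using that $G([q^*, q^* + \eta]) > 0$ for non-degenerate $G$ (otherwise, profit would be strictly increasing to the right of $q^*$, contradicting $q^* \in a^*(G)$). Measurability of $G \mapsto \underline H, \overline H$ follows from that of $G \mapsto p^*, q^*$, which obtains from a measurable selection theorem applied to the upper hemicontinuous closed-valued correspondence $a^*$. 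With all three assumptions verified, \Cref{Theorem-MainResult} delivers the claim.
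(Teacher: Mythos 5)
Your proof is correct and follows the same high-level structure as the paper's: verify \Cref{Assumption-NoRents,Assumption-WorstCase,Assumption-Continuity} and invoke \Cref{Theorem-MainResult} together with the BBM characterization of achievable payoffs. The verification of \Cref{Assumption-NoRents,Assumption-WorstCase} coincides with the paper's.

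Where you genuinely diverge is in verifying \Cref{Assumption-Continuity}, and here your route is cleaner in two respects. For part~\ref{a:continuity_1}, you prove continuity of $U_R$ and upper hemicontinuity of $a^*$ directly via Portmanteau (joint USC of $u_R(a,G)=aG([a,\ov])$ plus a separate LSC argument using a continuity point of $G$ just below the optimal price), whereas the paper offloads this to \cite{yang2023continuity}; your argument is self-contained and quite elementary. For part~\ref{a:continuity_2}, your perturbations are additive --- e.g.\ $\underline H \propto G + \e\, G|_{[p^*,\ov]}$, which upweights the demand above the target price by a fixed proportion --- while the paper's $\underline t$ and $\overline t$ rescale the CDF and truncate it at 1, with a carefully calibrated scaling parameter $\eta(G)$ solving \eqref{eq:eta}. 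Both accomplish the same thing, but your computations (profits for $a\ge p^*$ are a common positive multiple of those under $G$; for $a<p^*$ the gap gets a strictly positive bonus $\e G([p^*,\ov])(p^*-a)$) are substantially shorter and require no auxiliary parameter.

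One minor imprecision: you attribute measurability of $G\mapsto p^*,q^*$ to ``a measurable selection theorem.'' That theorem yields existence of \emph{some} measurable selection, not measurability of the specific extremal selections you need. The correct justification, which the paper uses, is that for a UHC compact-valued correspondence the pointwise maximum is upper semicontinuous and the pointwise minimum is lower semicontinuous, hence both are Borel measurable. The conclusion is the same, but the citation should be to this semicontinuity fact. You should also note explicitly, as you do for $\overline H$, that $G([p^*,\ov])>0$ outside degenerate (zero-profit) segments so that the $\underline H$ perturbation is well posed.
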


Conceptually, this result highlights how voluntary disclosure offers a potential microfoundation for information design in markets: all payoff profiles in the BBM triangle can be (virtually) supported with hard information that flows directly from the consumer to the seller, without requiring an intermediary to know the consumer's value. As highlighted earlier, \cite{bergemann2019information} caution against interpreting information design in markets literally as it would require an intermediary that both knows the consumer's value and can commit to an information structure. Our result shifts the focus from such an intermediary to one who can simply \emph{verify} claims about consumer valuations.\footnote{\cite{fainmessergaleottimomot} and \cite{galpertiliuperego} offer alternative foundations for how an intermediary may learn the consumer's value: the former studies a platform that learns from past purchases and the latter analyzes an intermediary that purchases consumer data.} 

Given the importance of this setting, we offer a heuristic sketch of how we approximate  payoffs of the consumer-optimal segmentation in an equilibrium of the disclosure game. Following that sketch, we establish additional results.

\begin{example}\label{Example-Pareto}
For expositional simplicity, we consider a prior $F$ that is the equal mixture of two Pareto distributions, one starting at $1$ and the other at $2$. A Pareto distribution creates an ``equirevenue'' demand curve in that that all prices above the starting point yield the same revenue to the monopolist. The consumer-optimal segmentation splits the aggregate market into two segments: a low-price market that is the Pareto distribution starting at $1$, denoted $F_1$ in \Cref{Figure-Pareto} (unnormalized), and a high-price market that is the Pareto distribution starting at $2$, denoted $F_2$. The monopolist charges prices $1$ and $2$ respectively in the two segments.

\begin{figure}[h!]\centering
		        \begin{tikzpicture}[scale=3.5]
        {\draw [smooth,samples=100,domain=.2:.6,thick] plot({\x},{1/2-1/2*.2/\x});
        \draw [smooth,samples=100,domain=.6:2,thick] plot({\x},{1-.1/\x-.3/\x}) node[right] {\small $F$};}
            \draw[thick, <->] (0,1.15) -- (0,0) -- (2.15,0) node[below]{\small $\theta$};
        {
        \node[below] at (0.2,0) {1};
        \node[below] at (0.6,0) {2};
        }
        {
        \draw [smooth,samples=100,domain=.2:2,cred,thick] plot({\x},{1/2-1/2*.2/\x}) node[right] {\small $F_1$};
        \draw [smooth,samples=100,domain=.6:2,webblue,thick] plot({\x},{1/2-.3/\x}) node[right] {\small $F_2$}; 
        \node[below, cred] at (0.2,0) {1};
        \node[below, webblue] at (0.6,0) {2};
        }
        \end{tikzpicture}
        \caption{\small We depict the prior CDF $F$. The consumer-optimal segmentation splits the market into $F_1$ (unnormalized), which elicits a price of $1$ from the monopolist, and $F_2$, which elicits a price of $2$.}\label{Figure-Pareto}
	\end{figure}

 We first show that no equilibrium of the disclosure game can attain the payoffs from this segmentation exactly. The segmentation splits types $\theta \geq 2$ to two different segments, so replicating it in an equilibrium of the disclosure game would require these types to randomize across messages, some leading to a price of $1$ and  others to a price of $2$. But a buyer would mix only if the two messages yielded her the same payoff. Since the buyer strictly prefers the message that leads to the lower price, such randomization is untenable without commitment.

 However, we can approximate these payoffs. In \Cref{Figure-Approximation}, we illustrate a segmentation in which (i) all types in $m_1$ send message $m_1$, (ii) all types in $m_2$ (except those on the boundary with $m_1$) send message $m_2$, and (iii) all sufficiently high types send message $m_3$. The messages $m_1$ and $m_2$ are constructed so that the receiver's posterior beliefs closely approximate $F_1$ and $F_2$, respectively, up until the point where types enter $m_3$. We show that this segmentation can be supported by an equilibrium of the disclosure game and that its payoffs approximate those of the consumer-optimal segmentation.

  \begin{figure}[t!]\centering
    \begin{tikzpicture}[scale=3.5]
        {
        \draw [smooth,samples=100,domain=.2:.6,thick,dotted] plot({\x},{1/2-1/2*.2/\x});
        \draw [smooth,samples=100,domain=.6:2,thick,dotted] plot({\x},{1-.1/\x-.3/\x}) node[right] {\small $F$};
        \draw [smooth,samples=100,domain=.2:2,cred,dotted,thick] plot({\x},{1/2-1/2*.2/\x}) ;
        \draw [smooth,samples=100,domain=.6:2,webblue,dotted,thick] plot({\x},{1/2-.3/\x}) ;}
        {
        \draw [smooth,samples=100,domain=.2:.6,cred,thick] plot({\x},{1/2-1/2*.2/\x});
        \draw[cred,thick] (.2,-.1) -- (.6,-.1)  ;
        \draw[cred] (.4,-.2) node {$m_1$};
        \draw[fill,cred] (.6,-.1) circle;}
        {
        \draw [smooth,samples=100,domain=.6:.8,webblue,thick] plot({\x},{1-.1/\x-.3/\x-(1-.1/.6-.3/.6)});
        \draw[cred,thick] (.6,1/2-1/2*.2/.6) -- (.8,1/2-1/2*.2/.6);
        \draw[webblue,thick] (.6,-.1) -- (.8,-.1); 
          \draw[webblue] (.7,-.2) node {$m_2$};
          \draw [smooth,samples=100,domain=.6:.8,thick] plot({\x},{1-.1/\x-.3/\x}); node[right];
        \draw[fill,webblue] (.8,-.1) circle;}
        {
        \draw [smooth,samples=100,domain=.8:1,cred,thick] plot({\x},{1-.1/\x-.3/\x-(1-.1/.8-.3/.8)+(1/2-1/2*.2/.6)});
        \draw[cred] (.9,-.2) node {$m_1$};
        \draw[webblue,thick] (.8,{1-.1/.8-.3/.8-(1-.1/.6-.3/.6)}) -- (1,{1-.1/.8-.3/.8-(1-.1/.6-.3/.6)}); 
         \draw [smooth,samples=100,domain=.8:1,thick] plot({\x},{1-.1/\x-.3/\x}); 
        \draw[cred,thick] (.8,-.1) -- (1,-.1);} 
        {
        \draw [smooth,samples=100,domain=1:1.7,webblue,thick] plot({\x},{1-.1/\x-.3/\x-(1-.1/1-.3/1)+(1-.1/.8-.3/.8-(1-.1/.6-.3/.6))});
        \draw[cred,thick] (1,{1-.1/1-.3/1-(1-.1/.8-.3/.8)+(1/2-1/2*.2/.6)}) -- (1.7,{1-.1/1-.3/1-(1-.1/.8-.3/.8)+(1/2-1/2*.2/.6)});
        \draw [smooth,samples=100,domain=1:1.7,thick] plot({\x},{1-.1/\x-.3/\x}); 
        \draw[webblue] (1.35,-.2) node {$m_2$};
        \draw[webblue,thick] (1,-.1) -- (1.7,-.1);} 
        {
        \draw [smooth,samples=100,domain=1.7:2,brown,thick] plot({\x},{1-.1/\x-.3/\x-(1-.1/1.7-.3/1.7)});
        \draw[brown,thick] (1.7,-.1) -- (2,-.1); 
        \draw[brown] (1.85,-.2) node {$m_3$};
        \draw[cred,thick] (1.7,{1-.1/1-.3/1-(1-.1/.8-.3/.8)+(1/2-1/2*.2/.6)}) -- (2,{1-.1/1-.3/1-(1-.1/.8-.3/.8)+(1/2-1/2*.2/.6)}); 
         \draw [smooth,samples=100,domain=1.7:2,thick] plot({\x},{1-.1/\x-.3/\x}); 
        \draw[webblue,thick] (1.7,{1-.1/1.7-.3/1.7-(1-.1/1-.3/1)+(1-.1/.8-.3/.8-(1-.1/.6-.3/.6))}) -- (2,{1-.1/1.7-.3/1.7-(1-.1/1-.3/1)+(1-.1/.8-.3/.8-(1-.1/.6-.3/.6))}); 
        };
        \draw[thick, <->] (0,1.15) -- (0,0) -- (2.15,0) node[below]{\small $\theta$};
        \end{tikzpicture}
        \caption{\small The above constructs three market segmentations, one of which approximates $F_1$, one of which approximates $F_2$, and the third contains some high types. }\label{Figure-Approximation}
        \end{figure}

 To address the first point, note that there are only three on-path messages. We rule out deviations to any off-path message $m$ by assigning skeptical beliefs that put probability $1$ on the worst-case type $\max_{\theta\in m} \theta$. Given such beliefs, the monopolist charges a price that equals that type. These resulting prices make deviating to off-path messages unprofitable for the consumer, including those messages that fully reveal her type. We then consider deviations to other on-path messages. The only types that access more than one on-path messages are those at the (pairwise) intersections of $m_1$, $m_2$, and $m_3$. Our construction assures that all boundary types send the message that results in the lower price. 

 Finally, we argue that this segmentation delivers payoffs close to those of the consumer-optimal segmentation. Let $\tilde\theta$ denote the lowest type in $m_3$. Observe that given beliefs $F_1$, the monopolist is indifferent between charging a price of $1$ and $1+\e$, which reflects that his inframarginal gains from raising the price beyond $1$ is exactly offset by his marginal loss from excluding consumers in $[1,1+\e)$. Truncating the belief to $[1,\tilde\theta]$ breaks his indifference: the truncation reduces the gains from charging a higher price but leaves the loss undiminished, making $1$ the \emph{uniquely} optimal price. A message $m_1$ that induces beliefs close to the truncated $F_1$ must then elicit a price close to $1$: the appropriate analogue of Berge's Theorem implies that beliefs converging to the truncated $F_1$ must induce prices that converge to the uniquely optimal limit price. By the same reasoning, the monopolist's price following message $m_2$ must be close to $2$. Given that the mass of types in $m_3$ can be made arbitrarily small, ex ante payoffs approximate those of the consumer optimal segmentation. \qed
\end{example}

The construction above sketches how an equilibrium of the disclosure game can deliver payoffs close to those of the consumer-optimal segmentation. One may wonder what structural features  consumer-optimal equilibria must exhibit. We show below that a general property of these equilibria is that some low-price segments---such as segment $m_1$ in the construction---must pool together high and low types while excluding intermediate ones. This non-monotone pooling is not an artifact of the example but is instead intrinsic to attaining the consumer-optimal benchmark.

To formalize this finding, we say that an equilibrium uses \emph{interval messages} if for almost every type $\theta$, its equilibrium message $\rho(\theta)$ is a closed interval. We also call an aggregate market \emph{unimprovable} if the monopolist's optimal uniform price---i.e., the price he would charge based solely on his prior---equals $\uv$, the lowest type. Unimprovable markets are atypical; in most settings, the monopolist finds it optimal to charge prices that exclude consumers with low valuations. The result below links these two notions.
\begin{proposition}\label{Proposition-Interval}
    There are equilibria in interval messages that approximate the consumer-optimal segmentation if and only if the aggregate market is unimprovable.
\end{proposition}
\Cref{Proposition-Interval} shows that interval messages suffice only when the monopolist would serve all consumer types in the aggregate market; otherwise, any consumer-optimal equilibrium must feature segments that pool some types while excluding those in between. The ``if'' direction is immediate: if the aggregate market is unimprovable, the fully concealing equilibrium---which uses only interval messages---attains the payoffs of the consumer-optimal segmentation. The ``only if'' direction is more subtle. The key idea is that if an interval message pools low and high types to steer the monopolist to offer a price discount, all intermediate types have the option to mimic those types and secure the same discount. The monopolist is willing to serve all of these types only if the aggregate market is unimprovable.

We interpret these results in the context of regulatory measures that give consumers  control of their data. Our results suggest that an intermediary or data collective capable of verifying statements about consumer valuation could substantially benefit consumers all the way to the payoffs of the consumer-optimal segmentation. Crucial to this prospect is that consumers can verifiably disclose rich statements, certifying that they belong to a ``bucket'' without having to disclose their exact valuation. This result connects to group-pricing schemes observed in practice in which a third-party intermediary verifies that a consumer belongs to a designated group---such as senior citizens, students, or low-income households---and provides evidence that triggers a price discount. Our results also highlight the power of finetuned disclosures: while interval-style messages may benefit consumers, consumers may benefit from using more elaborate messages that pool together extreme valuations. 

Hard or verifiable information is crucial for the consumer to obtain such gains: were she able to use only cheap-talk messages, all equilibria essentially collapse to the babbling outcome in which the monopolist ignores their messages and charges the optimal uniform price.\footnote{\cite{hidir2021privacy} show that cheap talk can be effective when the monopolist chooses both price and product design.} 

While our analysis focuses on monopolistic markets, a similar logic applies to  competitive markets with differentiated products. Suppose there are two or more firms, and a consumer's type $\theta\equiv (\theta_1,\ldots,\theta_n)$ encodes her valuation for each of the $n$ products. Consider the disclosure game in which the consumer of type $\theta$ privately discloses a message $m$ in $\mathcal M(\theta)$ to each firm. We argue that there exist equilibria of this multi-firm disclosure game whose payoffs approximate those of the consumer-optimal segmentation characterized by \cite*{elliottgaleottikohli}. In these equilibria, the consumer fully reveals her type to all firms except her favorite one, which induces all her non-preferred firms to compete aggressively for her business, setting a price that equals marginal cost. To her favorite firm, she discloses information only partially, mirroring the disclosure strategy in the monopoly problem. The favored firm then prices optimally given the disclosure it receives, knowing that all other firms are competing intensely for the consumer's business.

Returning to monopoly pricing, \Cref{Proposition-BBM} shows that some equilibria of the disclosure game are inefficient. Our next result proves that a refinement standard in disclosure games selects efficient equilibria. This refinement is the ``truth-leaning equilibrium''  proposed by \cite*{hart2017evidence} who---invoking the Twainian adage, ``When in doubt, tell the truth''---study limit equilibria of perturbed games in which the sender accrues an infinitesimal gain if she shares the whole truth.  

Formally, for a function $\boldsymbol{\e}: \Theta \to \Re_{>0}$, consider the perturbed game $\Gamma^{\boldsymbol{\e}}$ in which the consumer's payoff increases by $\boldsymbol{\e}(\theta)$ when the type is $\theta$ and she sends message $\{\theta\}$. An equilibrium $((\rho,\tau),\mu)$ of the original game is \emph{truth-leaning} if there exist (i) a sequence of functions $\boldsymbol{\e}^n$ that converges uniformly to $\mathbf{0}$, where $\mathbf{0}$ is a constant function that maps every $\theta$ to 0, and (ii) a sequence $((\rho^n,\tau^n),\mu^n)$ that converges uniformly to $((\rho,\tau),\mu)$ such that for each $n \in \mathbb{N}$, $((\rho^n,\tau^n),\mu^n)$ is an equilibrium of the perturbed game $\Gamma^{\boldsymbol{\e}^n}$.\footnote{\cite*{hart2017evidence} also require that in any perturbed game, every type of the sender fully reveal her type with positive probability. This requirement has no bite in our setting.} \autoref{Proposition-TruthLeaning} characterizes truth-leaning equilibria of this game.

\begin{proposition} \label{Proposition-TruthLeaning}
    The following hold:
    \begin{enumerate}[label=(\alph*)]
        \item \label{p:truthleaningpart1} The payoff profile of every truth-leaning equilibrium is efficient:  $(u_S^*,u_R^*)$ is supported by a truth-leaning equilibrium only if $u_S^*+u_R^*=\overline w$.
        \item \label{p:truthleaningpart2} For every efficient payoff profile, there is a nearby payoff profile supported by a truth-leaning equilibrium: for every $(u_S^*,u_R^*)$ with $u_S^*\geq 0$, $u_R^*\geq \overline u_R$, and $u_S^*+u_R^*=\overline{w}$, and every $\e>0$, there is a truth-leaning equilibrium of the disclosure game that supports payoffs within $\e$ of $(u_S^*,u_R^*)$.\footnote{Equivalently, \Cref{Proposition-TruthLeaning}(b) identifies that there is a dense set of payoff profiles on the efficiency frontier of the BBM triangle that can be supported by (truth-leaning) equilibria of the disclosure game.}
    \end{enumerate}
\end{proposition}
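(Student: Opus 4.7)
The plan is to show that in any truth-leaning equilibrium trade occurs almost surely, which forces $u_S^*+u_R^*=\overline{w}$. The key observation is that in every perturbed game $\Gamma^{\boldsymbol{\e}^n}$, the deviation to the fully-revealing message $\{\theta\}$ guarantees the type-$\theta$ sender a strictly positive payoff: since $\{\theta\}$ can only be sent by type $\theta$, consistency of beliefs forces $\mu^n(\{\theta\})$ to be a point mass at $\theta$, so for $\theta>0$ the receiver's unique optimal price is $\theta$ and the sender's deviation payoff equals $0+\boldsymbol{\e}^n(\theta)>0$. Any equilibrium message $m\in\supp \rho^n(\theta)$ must therefore satisfy $u_S(\tau^n(m),\theta)\ge \boldsymbol{\e}^n(\theta)>0$, which forces $\tau^n(m)<\theta$ so that trade strictly occurs. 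Hence the ex ante real payoff (excluding the bonus) in $\Gamma^{\boldsymbol{\e}^n}$ equals $\mathbb{E}[\theta]=\overline{w}$, and since the bonus contribution is bounded by $\sup_\theta\boldsymbol{\e}^n(\theta)\to 0$, passing to the uniform limit gives $u_S^*+u_R^*=\overline{w}$ in the limit equilibrium.

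\textbf{Part (b), setup.} Fix an efficient profile $(u_S^*,u_R^*)$. The plan is to construct an equilibrium via \Cref{Lemma-FinitePartitionalEqm} from a suitable finite partitional segmentation that keeps trade in every segment, then to verify truth-leaning by perturbing the sender's strategy so that a small measure of ``near-boundary'' types truth-tell. By \cite{bergemann2015limits}, $(u_S^*,u_R^*)$ is achievable by a segmentation $\sigma^*$ in which each segment $G$ is priced at some $a^*(G)\le \min\supp G$. Applying \Cref{Lemma-FinitePartitionalSuff}, approximate $\sigma^*$ by a finite partitional segmentation $\sigma=\sum_{k=1}^K p_k G_k$ whose payoffs are within $\e/2$ of $(u_S^*,u_R^*)$, with each $G_k$ priced at some $a_k\le \min\supp G_k$; then \Cref{Lemma-FinitePartitionalEqm} yields an equilibrium $((\rho,\tau),\mu)$ in which every type in $\supp G_k$ sends $\supp G_k$ and the receiver charges $a_k$.

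\textbf{Perturbed equilibria and main obstacle.} For $\e^n\downarrow 0$, set $\boldsymbol{\e}^n\equiv \e^n$ and construct a perturbed equilibrium of $\Gamma^{\boldsymbol{\e}^n}$ by choosing, for each $k$, a threshold $\bar\theta_k^n>a_k$ with $\bar\theta_k^n\to a_k$ so that types $\theta\in\supp G_k$ with $\theta>\bar\theta_k^n$ send $\supp G_k$ while types with $\theta\le \bar\theta_k^n$ send $\{\theta\}$; the receiver prices $\supp G_k$ at his best response $a_k^n$ to the truncated belief, charges $\theta$ on each message $\{\theta\}$, and responds to off-path messages via the worst-case beliefs from \Cref{Lemma-FinitePartitionalEqm}. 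The threshold is pinned down by the indifference condition $\bar\theta_k^n-a_k^n=\e^n$, which by a fixed-point argument using \Cref{Assumption-Continuity} admits a solution with $\bar\theta_k^n\to a_k$ for small $\e^n$. Because $F$ admits a density, the measure of truth-telling types in each segment is $O(\e^n)$, so $(\rho^n,\tau^n)\to(\rho,\tau)$ uniformly and the ex ante payoffs converge, yielding a truth-leaning equilibrium within $\e$ of $(u_S^*,u_R^*)$. The main obstacle is that the receiver may be indifferent among multiple prices at $G_k$, so truncating the support from below can cause $a_k^n$ to jump away from $a_k$; to handle this, I would first perturb $\sigma$ slightly (at negligible payoff cost) so that each $G_k$ admits a unique optimal price, using that such ``regular'' finite partitional segmentations are dense in the achievable set restricted to the efficient frontier.
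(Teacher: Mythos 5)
Part (a) is essentially the paper's argument (trade must occur in every perturbed game, hence in the limit), with a small imprecision: when the on-path message is $m=\{\theta\}$, the sender's payoff from that message is $u_S(\tau^n(m),\theta)=0$, not $\ge\boldsymbol\e^n(\theta)$; only the total payoff including the bonus is positive, so the deduction ``$u_S(\tau^n(m),\theta)\ge\boldsymbol\e^n(\theta)>0$ forces $\tau^n(m)<\theta$'' does not cover this case. Trade still occurs after $\{\theta\}$ because the receiver's unique best response to a degenerate belief at $\theta>0$ is to price at $\theta$, so the conclusion stands, and the paper handles the cases in exactly this way.

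Part (b) has a genuine gap in the perturbed-equilibrium construction. You propose truncating segment $G_k$ at $\bar\theta_k^n>a_k$ and pinning $\bar\theta_k^n$ down by the indifference condition $\bar\theta_k^n-a_k^n=\e^n>0$, where $a_k^n$ is the receiver's best response to the truncated belief. But once all types below $\bar\theta_k^n$ fully reveal, the residual belief in segment $k$ has minimum support $\bar\theta_k^n$, so any best response satisfies $a_k^n\ge\bar\theta_k^n$ (pricing strictly below the minimum of the support is dominated by pricing at the minimum). This forces $\bar\theta_k^n-a_k^n\le 0$, so your indifference condition has no solution and the fixed-point argument fails. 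The obstacle you flag---non-uniqueness of the best response---is secondary; the primary issue is that truncating from below necessarily lifts the optimal price to at least the truncation point. You also assert that the finite partitional segmentation from \Cref{Lemma-FinitePartitionalSuff} inherits $a_k\le\min\supp G_k$, but that lemma only controls payoffs, not the structure of best responses. The paper avoids all of this with a different construction: it starts from any finite partitional equilibrium $e^*$ within $\e/2$ of the target (possibly inefficient), has all zero-payoff types (those with $\theta\le p_G$) fully reveal while positive-payoff types keep their message, and observes that the residual belief $G(\cdot\mid\theta>p_G)$ still has $p_G$ as a best response, since $p_G$ maximizes $p(1-G(p))/(1-G(p_G))$ over $p\ge p_G$ and dominates all $p<p_G$. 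The modified assessment $e^{**}$ is therefore an efficient equilibrium with the same sender payoff as $e^*$. The paper then exhibits perturbations $\boldsymbol\e^n$---e.g.\ $(\theta-p(m))/2n$ for pooling types, $\ov/n$ for revealing types---under which $e^{**}$ itself is an equilibrium of each $\Gamma^{\boldsymbol\e^n}$, so the constant sequence trivially verifies truth-leaning, and efficiency together with the preserved sender payoff places $e^{**}$'s payoffs within $\e$ of the target on the slope-$(-1)$ frontier.
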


Here is the logic. The only scope for inefficiency is that in some segment, the optimal price exceeds some consumer types. 
Such behavior cannot arise in an equilibrium of the perturbed game because those consumer types would then be better off revealing the whole truth to accrue the infinitesimal bump. 
Consequently, every truth-leaning equilibrium is efficient. 
A more subtle intuition underlies why all efficient payoff profiles can be approximately supported by a truth-leaning equilibrium. We show that for every finite partitional equilibrium that supports payoff $(\tilde u_S,\tilde u_R)$, there is an efficient truth-leaning equilibrium that supports payoffs $(\tilde u_S,\overline{w}-\tilde u_S)$. \Cref{Proposition-BBM} then implies that any efficient payoff profile can be approached using a truth-leaning equilibrium.

\subsection{Disclosures about Asset Quality}\label{Section-Asset}

We study a financial market in which an asset seller chooses how much information to divulge prior to sale. We show that when facing a single buyer, the seller strategically discloses information partially, revealing just enough to both alleviate adverse selection and diminish the buyer's bargaining power. By contrast, if multiple buyers compete for the asset, the unique equilibrium involves the seller releasing all information. Competition thus spurs disclosure. We then use our characterization of the seller-optimal equilibrium to identify when the seller has strongest incentives to recruit a second buyer. 

We begin with the setting of a single buyer. A seller (she) sells a single unit of an asset. Its quality is $\theta$ drawn according to an absolutely continuous CDF $F$ with support on $\Theta = [\uv,\ov]$. The seller's value of keeping the asset is $c(\theta)\geq 0$, which is weakly increasing and continuous in $\theta$. The buyer's value for the asset is $v(\theta)$, which strictly exceeds $c(\theta)$, is strictly increasing and continuous in $\theta$. Prior to sale, the seller can disclose evidence: given an asset quality $\theta$, the seller sends a message $m \in \mathcal{M}(\theta)$ to the buyer. The buyer then sets a price $a \in [0,v(\ov)]$, and the seller chooses to accept or reject the offer. The seller's payoff is $u_S(a,\theta) = \max\{a, c(\theta)\}$, and the buyer's is $u_R(a,\theta) = (v(\theta)-a) \mathbf{1}_{a \ge c(\theta)}$. Observe that $u_S(\cdot, \theta)$ is continuous and $u_R(\cdot, \theta)$ is upper semicontinuous for every type $\theta$.

Total surplus is maximized by having trade, resulting in an ex ante surplus of $\E[v(\theta)]$. However, absent disclosure, this market suffers from adverse selection \citep{akerlof1970market} given that at a given price $a$, only types $\theta$ for whom $c(\theta)\leq a$ would be willing to trade. The buyer's payoff then is $\overline{u}_R\equiv\max_{a\in [0,v(\ov)]}\int_{\{\theta\in \Theta:c(\theta)\leq a\}} (v(\theta)-a) \,\mathrm{d}F(\theta)$.

The set of achievable payoffs of this setting, characterized by \cite{kartik2024lemonade}, corresponds to all payoff profiles in which total surplus is no more than $\E[v(\theta)]$, the seller's payoff is no less than $\E[c(\theta)]$, and the buyer's payoff is no less than $\overline{u}_R$. Because our three assumptions hold in this disclosure game---the sender's complete information payoff $c(\theta)$ is lower than any incomplete information payoff, every message $m$ has a worst-case type $\min_{\theta\in m} \theta$, and we establish continuity using an approach from variational calculus---every achievable payoff can be virtually supported as an equilibrium. 
\begin{proposition}\label{Proposition-Akerlof}
    Suppose that there is a single buyer. Then for every payoff profile $(u_S^*, u_R^*)$ with $u_S^*\ge \E[c(\theta)]$, $u_R^*\ge \overline u_R$, and $u_S^*+u_R^*\le  \E[v(\theta)]$, and every $\e>0$, there is an equilibrium of the disclosure game that supports payoffs within $\e$ of $(u_S^*,u_R^*)$.
\end{proposition}
An implication of \Cref{Proposition-Akerlof} is that, given a single buyer, the seller can partially disclose evidence so that trade is nearly fully efficient and yet the buyer does not obtain more than his ``Akerlof payoff'' $\overline{u}_R$. Thus, the seller can disclose evidence that counteracts the lemon's problem while reaping all those gains herself. 

Now suppose that there are two or more buyers, each of whom has value $v(\theta)$ for an asset of quality $\theta$. We consider a disclosure game in which the seller sends a message $m$ in $\mathcal M(\theta)$ publicly to the buyers, each buyer then simultaneously offers a price in $[0,v(\ov)]$, and the seller then accepts one offer if any. In equilibrium, each buyer offers a price equal to his expected value of the asset conditional on all available information, as per standard Bertrand logic. We argue that this competitive price-setting leads to full unraveling. 
\begin{proposition}\label{Proposition-Auctions}
	Suppose that there are two or more buyers. Then every equilibrium is outcome equivalent to the fully revealing equilibrium in which the seller discloses $\{\theta\}$ for every asset quality $\theta$.
\end{proposition}
To see why this result holds, suppose towards a contradiction that there were an equilibrium with partial disclosure. Consider a pool of types that all send some message $m$, inducing belief $G$, and let $\hat\theta$ be the highest type in the support of $G$. In this putative equilibrium, the seller of type $\hat\theta$ cannot obtain a payoff higher than $\max\{\E_G[v(\theta)],c(\hat\theta)\}$: either she sells the asset at a price bounded above by $\E_G[v(\theta)]$ or she keeps the asset. Disclosing the message $\{\hat\theta\}$ is a profitable deviation because it fetches a strictly higher price, $v(\hat\theta)$. Hence, for every pool, the highest type profits from separating thereby unraveling any pooling equilibrium. Central to this logic is that the competitive market results in a disclosure game that satisfies \citeauthor{grossman:81-JLE}-\citeauthor{milgrom1981good}'s payoff monotonicity condition: as $v$ is strictly increasing, the seller would prefer to induce belief $G'$ over $G$ whenever $G'$ first-order stochastically dominates $G$.

The contrast between \Cref{Proposition-Akerlof,Proposition-Auctions} offers a new perspective on how competition promotes disclosure. \cite{milgromroberts86} initiated the study of competition and disclosure, focusing on how competition among multiple informed senders induces greater disclosure to a single uninformed receiver. Our results highlight the opposite force: here, competition among multiple uninformed receivers compels a single sender to divulge all information. Their competition ensures that the sender reaps all the gains from trade, leaving her with no profitable reason to withhold information.

We use this result to evaluate when a seller has the strongest motive to recruit buyers. In the competitive market, her ex ante payoff is $\E[v(\theta)]$ whereas when facing a single buyer, she accrues approximately $\E[v(\theta)]-\overline{u}_R$ in the seller-optimal equilibrium. Recall that $\overline{u}_R$ is the buyer's ``Akerlof payoff,'' corresponding to what he obtains in the lemon's problem without disclosure. All else equal, the seller then has less to gain from a second buyer when the Akerlof payoff is lower. If adverse selection is severe ($\overline u_R \approx 0$), she can kill two birds---adverse selection and the buyer's bargaining power---with the single stone of partial disclosure.

\subsection{Insurance Contracting}\label{Section-Insurance}

Should an insurance company be allowed to condition its contracts based on an insuree's disclosures? This question has been salient recently in the context of genetic testing, with regulations that prohibit insurers from using genetic information. We evaluate this debate within the context of our framework. We show that the standard monopolistic insurance model \citep[e.g.,][]{stiglitz1977monopoly,chade2012optimal} satisfies our three main assumptions. Our main finding then implies that an insuree could use her hard information to counteract the bargaining power of the insurer and attain the payoffs of her optimal information structure.\footnote{Our work complements recent work on genetic testing in insurance markets: \cite{pram2023learning} focuses on costs of acquiring evidence and \cite{azevedo2025} study how genetic prediction influences the degree of adverse selection.} 

An insuree has initial wealth $w>0$, faces a potential loss $\ell\in(0,w)$ with probability $\theta$, and has risk preferences represented by a strictly increasing, continuously differentiable, and strictly concave (Bernoulli) utility function $v:\R_{\ge 0}\rightarrow \R$. The probability of loss, $\theta$, is the insuree's \emph{type}, and is her private information. The insuree's outside option at every stage is to purchase zero insurance.

The insurer is risk neutral and has beliefs about the insuree's type given by the absolutely continuous CDF $F$ with density $f$ and support $\Theta := [\uv,\ov] \subseteq (0,1)$. Without loss, the insurer chooses a menu of contracts $(x(\theta),t(\theta))\in \R^2$ for each type $\theta$ comprising a premium $t(\theta)$ and an indemnity payment $x(\theta)$ in the event of a loss, subject to incentive and participation constraints. The expected profit from a contract $(x,t)$ chosen by a type-$\theta$ insuree is $t-\theta x$.

We append a disclosure stage to this problem. After observing her type $\theta$, the insuree sends a message $m \in \mathcal{M}(\theta)$ to the insurer. The insurer then offers a menu of contracts and the insuree selects one of the contracts or chooses no insurance.

Formulated this way, it is difficult to verify our continuity notion directly. Therefore, we reformulate this game such that the insurer directly chooses the expected utility of every type of the insuree instead of offering a menu of contracts \citep[see][]{chade2012optimal}. 
Observe that any incentive-compatible menu of contracts $(x(\theta),t(\theta))$ can be reformulated as a menu 
$(D(\theta), a(\theta))$, where
\begin{align*}
    D(\theta)&:= v(w-t(\theta)) - v(w-\ell + x(\theta)-t(\theta)),\\
    a(\theta)&:= v(w-t(\theta))-\theta D(\theta).    
\end{align*}
If a type-$\theta$ insuree accepts the menu, $v(w-t(\theta))$ is her utility when no loss occurs, $D(\theta)$ is the drop in utility when she suffers a loss, and $a(\theta)$ is her \emph{indirect utility}. 

By standard arguments, incentive compatibility implies that the indirect utility function $a(\cdot)$ is convex; in this case, $a'(\theta)=-D(\theta)$ almost everywhere. 
Given the insurer's belief $G$, any optimal menu satisfies the properties that the participation constraint binds for the lowest type and the utility reduction in the event of a loss, $D(\theta)$ being non-negative and bounded above by $D_0:= v(w) - v(w-\ell) $ for $G$-almost every $\theta$ \citep[Theorem 1]{chade2012optimal}. Observe that $D_0$ measures the drop in utility from a loss when there is no insurance.
Denoting the space of continuous real-valued functions on $\Theta$ by $\mathscr{C}(\Theta)$ (equipped with the sup-norm), it is therefore without loss to restrict attention to the following set of indirect utilities:
\begin{align} \label{eq:indirect_utility_insurance}
    A:=\left\{a\in \mathscr{C}(\Theta): 
    \begin{array}{l}
    a(\underline{\theta})=\underline{\theta} v(w-\ell)+ (1-\underline{\theta})v(w),\\ a \text{ is weakly decreasing, convex 
    and $D_0-$Lipschitz} 
    \end{array}
    \right\}.
\end{align}

The insurer's expected profit from a menu $(D, a)$ chosen by a type-$\theta$ insuree is 
\[u(D,a,\theta):= w-\theta \ell - (1-\theta)v^{-1}(\underbrace{a(\theta)+ \theta D(\theta)}_{\text{no loss}}) - \theta v^{-1}(\underbrace{a(\theta)-(1-\theta)D(\theta)}_{\text{loss}}),\]
where $w - \theta \ell$ is type-$\theta$ insuree's total wealth in expectation, and the remaining terms are the insurer's expected cost of providing the utilities promised to the insuree. To write the insurer's payoff as a function of $a$ alone, we proceed with the following step.
Since $a'(\theta)=-D(\theta)$ almost everywhere, the insurer's payoff is determined by $a$ almost everywhere. At any $\theta$ where $a$ is not differentiable, because $a$ is convex and $D_0$-Lipschitz, the insurer chooses $D(\theta) \in \partial a(\theta) \cap [0,D_0]$, where $\partial a(\theta)$ is the subdifferential of $a$ at $\theta$. Therefore, we can interpret the indirect utility $a$ as the insurer's action: the insuree's payoff from action $a\in A$ is $u_S(a,\theta)=a(\theta)$, and the insurer's payoff from action $a$ is $u_R(a,\theta)= \max_{D\in \partial a(\theta)\cap[0,D_0]} u(D,a,\theta)$.
\footnote{Because $\partial a(\theta)$ is closed for every $\theta$ and $u(D,a, \theta)$ is continuous, $u_R(a,\theta)$ is well-defined.} In the appendix, we verify that $A$ is compact, $u_S(a,\theta)$ is continuous in $a$ for each $\theta$, and $u_R(a,\theta)$ is upper semicontinuous in $(a,\theta)$. We show that the sender's risk aversion induces an ``almost'' strict concavity in the receiver's payoff function, which allows us to establish that the form of continuity described in \Cref{r:assn_insurance} holds.
With this reformulation, the following conclusion holds. 

\begin{proposition} \label{Proposition-Insurance}
    For every achievable payoff profile $(u^*_S,u^*_R)$ and every $\e>0$, there is an equilibrium of the disclosure game that supports payoffs within $\e$ of $(u^*_S,u^*_R)$.
\end{proposition}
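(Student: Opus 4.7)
The plan is to verify that the insurance disclosure game satisfies \Cref{Assumption-NoRents,Assumption-WorstCase,Assumption-Continuity} and then invoke \Cref{Theorem-MainResult}. Let $r(\theta) := \theta v(w-\ell) + (1-\theta) v(w) = v(w) - \theta D_0$ denote the type-$\theta$ insuree's reservation utility, which is affine in $\theta$ with slope $-D_0$. Because $a(\underline{\theta}) = r(\underline{\theta})$ and every $a \in A$ has $a'(\theta) \geq -D_0 = r'(\theta)$ almost everywhere, the difference $a - r$ is weakly increasing on $\Theta$, so $a(\theta) \geq r(\theta)$ for every $\theta \in \Theta$. Under full information about $\theta$, the insurer's optimum is actuarially fair full insurance priced to extract all surplus, yielding $u_S(a^*(\theta), \theta) = r(\theta)$. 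Hence for every belief $G$ with $\theta \in \supp G$ and every $a \in a^*(G) \subseteq A$, $u_S(a, \theta) = a(\theta) \geq r(\theta) = u_S(a^*(\theta), \theta)$, establishing \Cref{Assumption-NoRents}.

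For \Cref{Assumption-WorstCase}, for each closed message $m$ set $\hat\theta_m := \max_{\theta \in m} \theta$. I claim that the unique optimal action under full information about $\hat\theta_m$ is the piecewise-linear $a \in A$ with $a(\theta) = r(\theta)$ on $[\underline{\theta}, \hat\theta_m]$ and $a(\theta) = r(\hat\theta_m)$ on $[\hat\theta_m, \overline{\theta}]$. Optimality requires both $a(\hat\theta_m) = r(\hat\theta_m)$ (surplus extraction) and $0 \in -\partial a(\hat\theta_m)$ (so that the insurer can implement full insurance at $\hat\theta_m$, which is optimal by strict concavity of $v$), forcing $a'_{+}(\hat\theta_m) = 0$; combined with convexity, $a \in A$, and $a'(\theta) \in [-D_0, 0]$, these conditions pin $a$ down. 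For every $\theta \in m$ we have $\theta \leq \hat\theta_m$, so $u_S(a^*(\hat\theta_m), \theta) = r(\theta) = u_S(a^*(\theta), \theta)$, which is the required inequality (with equality).

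Verifying \Cref{Assumption-Continuity} will be the main obstacle. For part (a), I would first establish via Arzel\`a--Ascoli that $A$ is compact in the sup-norm topology (functions in $A$ are uniformly bounded and share the Lipschitz constant $D_0$), and then verify that $u_R$ is upper semicontinuous in $(a,\theta)$ using the fact that the subdifferential correspondence $a \mapsto \partial a(\theta)$ is upper hemicontinuous under sup-norm convergence of convex functions. Continuity of $U_R$ and upper hemicontinuity of $a^*$ then follow from a generalized maximum theorem for upper semicontinuous integrands, in the spirit of the continuity arguments used in \cite{yang2023continuity}. For part (b), given any belief $G$ and any $a \in a^*(G)$, I would construct $H = (1-\e) G + \e G'$ where $G'$ is absolutely continuous with respect to $G$ and shifts a small amount of mass toward types against which $a$ strictly outperforms competing optimal actions; the strict concavity of $v^{-1}$ then makes $a$ the unique best response to $H$ for $\e$ sufficiently small, the Radon--Nikodym derivative bound $1+\e$ is controlled by the construction, and measurability of the maps $G \mapsto \underline H, \overline H$ follows from standard measurable selection arguments. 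With these three assumptions verified, \Cref{Theorem-MainResult} immediately yields the conclusion.
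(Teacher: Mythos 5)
Your verification of \Cref{Assumption-NoRents,Assumption-WorstCase} is correct and closely parallels the paper's (with somewhat more explicit detail for the worst-case action). Your sketch for \Cref{Assumption-Continuity}\ref{a:continuity_1} points in the right direction---compactness of $A$ via Arzel\`a--Ascoli, upper semicontinuity of $u_R$ via upper hemicontinuity of the subdifferential map---though the paper needs an additional step you omit: lower semicontinuity of $U_R$ is proved by approximating an arbitrary $a\in A$ by a continuously differentiable $\tilde a\in A$ (so that $u_R(\tilde a,\theta)$ becomes continuous in $\theta$, making $\int u_R(\tilde a,\theta)\,\mathrm dG_n \to \int u_R(\tilde a,\theta)\,\mathrm dG$ under weak$^*$ convergence). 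Without that approximation lemma, upper semicontinuity of $u_R$ alone does not give continuity of $U_R$.

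The genuine gap is in your treatment of \Cref{Assumption-Continuity}\ref{a:continuity_2}. Your proposed perturbation $H=(1-\e)G+\e G'$ with $G'\ll G$ cannot work. The paper shows that $u_R(a,\theta)$ is strictly concave in $a$ for each $\theta$, which implies that any two optimal actions $a',a''\in a^*(G)$ coincide $G$-almost everywhere. Consequently there is no set of positive $G$-measure of ``types against which $a$ strictly outperforms competing optimal actions''; any $H\ll G$ (with bounded Radon--Nikodym derivative) assigns measure zero to the set where $a'$ and $a''$ differ, so it cannot make one of them ``the unique best response'' in the sup-norm topology on $A$. This is exactly why the paper does \emph{not} verify \Cref{Assumption-Continuity}\ref{a:continuity_2} directly; instead it appeals to the weaker alternative stated in \autoref{r:assn_insurance}, which only requires best responses to $\underline H$ to give the sender nearly the same \emph{payoff} as $\underline a(G)$, $\underline H$-almost everywhere. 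Since all optimal actions at $G$ agree $G$-a.e., they automatically yield identical sender payoffs $G$-a.e., and one simply takes $\underline H=\overline H=G$ with no perturbation at all. You should replace your proposed construction with the observation that strict concavity of $u_R(\cdot,\theta)$ (which follows from strict convexity of $v^{-1}$) makes the alternative condition in \autoref{r:assn_insurance} hold trivially.
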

\Cref{Proposition-Insurance} speaks to the regulatory debate on the implications of allowing insurance companies to condition their contracts based on genetic tests. If the insuree can partially disclose her test results, then allowing disclosure could potentially lead to significant efficiency gains or even approximate the insuree-optimal information structure. However, the insuree could also be compelled to disclose all evidence, enabling the insurer to capture all efficiency gains. In light of this equilibrium multiplicity, our results emphasize the role that various parties---government agencies, intermediaries, and insurance companies themselves---can play in coordinating behavior towards the public interest.

\section{Conclusion}\label{Section-Conclusion}

This paper evaluates the full potential of hard information in market and contractual settings in which the receiver can flexibly adjust prices, transfers, and allocations in response to the sender’s disclosure. Such settings are canonical in that monopoly pricing, bilateral trade with interdependent values, insurance contracting, and policy negotiations all share this structure. In these settings, full revelation typically leaves the sender worst off: once her type is known, the receiver can tailor the terms of trade to extract her entire surplus. We take this structural feature as our primary departure from the classical disclosure literature. Combined with two standard assumptions, it delivers a sharp equivalence result: the set of equilibrium payoffs in the disclosure game is virtually identical to the set achievable through information design.

Conceptually, this conclusion re-frames the perceived gap between information design and voluntary disclosure. Information design endows the sender with the full power to commit to an information structure. By contrast, voluntary disclosure is typically viewed as a commitment problem, a self-defeating trap in which the sender is forced to reveal everything. Our central finding shows that this gap disappears in many economically important settings: the sender can use hard information to attain her commitment payoff without having to commit. 

From an applied perspective, the findings highlight that hard information not only substitutes for commitment but also equips the sender with a powerful tool for counteracting the receiver’s bargaining advantage. In sender-optimal equilibria, she discloses just enough information to soften the receiver's response while withholding the rest to prevent full surplus extraction. This force highlights the value of intermediaries who can verify statements and coordinate disclosure. Such intermediaries would enable buyers and insurees to capture nearly all the gains available under optimal information design. 

\begin{singlespace}
	\addcontentsline{toc}{section}{References}
	\bibliographystyle{jpe}
	\bibliography{segmentation.bib}
\end{singlespace} 

\newpage

\begin{appendix}
\section*{Main Appendix}
The Main Appendix is organized as follows: \Cref{Appendix-Thm1} proves \Cref{Theorem-MainResult}, \Cref{Appendix-Finite} proves \Cref{Theorem-Finite}, and \Cref{Appendix-PaperApplications} collects our proofs for monopoly pricing and disclosures about asset quality, as described in \Cref{Section-Monopoly,Section-Asset} respectively.

\section{Proof of \autoref{Theorem-MainResult}}\label{Appendix-Thm1}
\subsection{Proof of \autoref{Lemma-FinitePartitionalEqm}}
\begin{proof}[\unskip\nopunct]
    Fix a finite partitional segmentation $\sigma$. Let $(u^*_S,u^*_R)$ denote the payoff pair that it achieves and $a_{\sigma}:\supp \sigma\rightarrow A$ be the receiver's best responses that result in these payoffs. 
    Because $F$ has full support, $\bigcup_{G \in \supp(\sigma)}\supp G = \Theta$, as otherwise the segments could not average to $F$. 
    
    Now consider the following messaging strategy of the sender: If $\theta \in \supp(G)$ and $\theta \notin \supp(H)$ for any $H \in \supp(\sigma)$ such that $H \ne G$, the type-$\theta$ sender sends message $\supp(G)$ with probability 1. If $\theta \in \supp(G)$ for multiple $G \in \supp(\sigma)$, type $\theta$ sends message $\supp H$ with probability 1, where $H\in\argmax_{\{G\in \supp \sigma \, : \, \theta\in \supp G\}}u_S(a_{\sigma}(G),\theta)$. Because $\sigma$ is finite partitional, such types are contained in a $F$-null set and hence do not affect expected payoffs. 
    
    The receiver's belief system is such that whenever she observes message $\supp(G)$ for some $G \in \supp(\sigma)$, she updates using Bayes rule and her new belief is $G$; following any other message $m$, her belief is a point mass at the worst-case type $\hat{\theta}_m$, as defined in \autoref{Assumption-WorstCase}. We specify the receiver's strategy as follows: if she observes message $\supp(G)$ for any $G \in \supp(\sigma)$, she chooses action $a_{\sigma}(G)$ (which is optimal for the receiver given belief $G$); for any other message $m$, she chooses action $\underline{a}(\hat{\theta}_m)$. 

    By construction, the receiver is choosing a best response given her beliefs after any message and beliefs satisfy Bayes' rule whenever possible. Moreover, the sender never wants to deviate from her messaging strategy:  by messaging according to the strategy described above, his payoff is at least $u_S(\underline{a}(\theta),\theta)$ by \autoref{Assumption-NoRents}. If he deviates to an off-path message $m$, his payoff is $u_S(\underline{a}(\hat{\theta}_m),\theta)$, which is lower by \autoref{Assumption-WorstCase}. By construction, those types that can send multiple on-path messages choose optimally among feasible on-path messages.
    Hence, these strategies and beliefs form an equilibrium and this equilibrium induces the segmentation $\sigma$ and payoffs $(u^*_S,u^*_R)$. \end{proof}

\subsection{A Preliminary Step for \autoref{Lemma-FinitePartitionalSuff}}

\begin{lemma}\label{l:approximate_multidimensional}
Let $\sigma$ be a finite segmentation with $\supp(\sigma) = \{F_1, \ldots, F_N\}$. There is a sequence of finite partitional segmentations $\{\sigma^m\}_{m\in \mathbb N}$ with $\supp(\sigma^m)=\{ F^m_1,\ldots,F^m_N \}$ such that, for $i=1,\ldots,N$, $\sigma^m(F^m_i)\rightarrow \sigma(F_i)$ and $F^m_i\rightarrow F_i$.
\end{lemma}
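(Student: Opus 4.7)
The plan is to approximate each segment $F_i$ by spatially partitioning the prior: cover $\Theta$ with a fine grid of cubes and, within each cube, apportion a fraction of the $F$-mass to segment $i$ proportional to $F_i$'s share of that cube's $F$-mass. Taking the within-cube pieces to be axis-aligned slabs keeps the supports of different segments overlapping only on a finite union of hyperplanes, which is $F$-null because $F$ admits a density.

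\textbf{Construction.} For each $m\in\mathbb{N}$, partition $\Theta$ into finitely many half-open cubes $C_1^m,\dots,C_{K_m}^m$ of diameter at most $1/m$. On each cube with $F(C_k^m)>0$, set $p_{ik}^m := \sigma(F_i)\,F_i(C_k^m)/F(C_k^m)$; these weights are nonnegative (since $F_i\ll F$) and sum to one by Bayes plausibility of $\sigma$. Because $F|_{C_k^m}$ is atomless, one can pick cutpoints along the first coordinate to split $C_k^m$ into $N$ axis-aligned slabs $A_{1k}^m,\dots,A_{Nk}^m$ satisfying $F(A_{ik}^m)=\sigma(F_i)F_i(C_k^m)$. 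Set $A_i^m := \bigcup_k A_{ik}^m$, define $F_i^m := F|_{A_i^m}/\sigma(F_i)$, and put $\sigma^m(F_i^m) := \sigma(F_i)$.

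\textbf{Verifications.} Bayes plausibility is immediate from $F(A_i^m)=\sigma(F_i)\sum_k F_i(C_k^m)=\sigma(F_i)$ (so $F_i^m$ is a probability measure) together with $\sum_i F|_{A_i^m}=F$. Since $F$ has a strictly positive density, $\supp(F_i^m)=\overline{A_i^m}$, and for $i\neq j$ the intersection $\overline{A_i^m}\cap\overline{A_j^m}$ lies in a finite union of coordinate hyperplanes, hence is Lebesgue- and therefore $F$-null; so $\sigma^m$ is partitional. The weight convergence is trivial since $\sigma^m(F_i^m)=\sigma(F_i)$ exactly. For $F_i^m\to F_i$ in the weak-$*$ topology, the construction gives $F_i^m(C_k^m)=F_i(C_k^m)$ for every $k$; then for any continuous $\phi$ on the compact $\Theta$, uniform continuity yields a modulus $\omega_m\downarrow 0$ such that $\phi$ oscillates by at most $\omega_m$ on each cube of diameter at most $1/m$, and a cube-by-cube comparison gives $\lvert\int\phi\,dF_i^m-\int\phi\,dF_i\rvert\le 2\omega_m$.

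\textbf{Main obstacle.} The lone subtlety is the within-cube allocation: a generic measurable partition could produce pieces whose closures all equal the entire cube, destroying the partitional property. Restricting the pieces to axis-aligned slabs confines all pairwise overlap to hyperplanes, which $F$ assigns measure zero. The remaining steps are routine given the exact cube-level identity $F_i^m(C_k^m)=F_i(C_k^m)$.
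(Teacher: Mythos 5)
Your proof is correct and takes a genuinely different route from the paper's. The paper proves the lemma by a two-level discretization: it partitions $\Theta$ into at least $m$ equal cubes, subdivides each into $k$ smaller cubes, and then chooses a combinatorial assignment of the small cubes to segments $\{1,\ldots,N\}$ that (approximately) minimizes the cube-level mass discrepancy from the target $\sigma(F_i)F_i(P)$; exactness is never achieved, and the argument relies on letting the inner mesh parameter $k(m)$ grow so the discrepancy $\eqref{eq:partition}$ is at most $1/m$. Your construction instead exploits the atomlessness of $F|_{C_k^m}$ and the intermediate value theorem to cut each cube into axis-aligned slabs with the \emph{exact} target masses $F(A_{ik}^m)=\sigma(F_i)F_i(C_k^m)$, which yields the clean identities $F_i^m(C_k^m)=F_i(C_k^m)$ and $\sigma^m(F_i^m)=\sigma(F_i)$ and collapses the weak-$*$ convergence estimate to a one-line uniform-continuity bound. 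The one point the paper handles implicitly and you handle explicitly is the partitional property: your restriction to axis-aligned slabs confines all pairwise support overlaps to finitely many hyperplane slices, which are Lebesgue- and hence $F$-null since $F$ has a density; the paper's version gets the same property automatically because its cells (the small cubes) already have $F$-null boundaries. Both arguments are valid; yours buys exact weights and an elementary convergence estimate, while the paper's avoids the intermediate value argument at the cost of a double limit and an explicit optimization over assignments.
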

\begin{proof}
    Without loss of generality, suppose $\Theta=[0,1]^n$. For any $m,k\in\mathbb{N}$, partition $\Theta$ into at least $m$ equal-sized cubes; denote this partition by $\mathcal P^m$. Partition each $P\in\mathcal P^m$ further into at least $k$ equal-sized cubes, and denote the collection of all such small cubes by $\Q$. Choose an assignment of all smaller cubes to $\{1,\ldots,N\}$, denoted by $\ell:\Q\rightarrow \{1,\ldots,N\}$, to minimize 
    \begin{align}\label{eq:partition}
        \max_{i\in\{1,\ldots,N\}}\sum_{P\in\mathcal P^m} \left| F\left(\bigcup_{Q\in \Q, Q\subseteq P, \ell(Q)=i}Q\right) - \sigma(F_i)F_i(P)\right|. 
    \end{align} 
    Note that this expression goes to zero as $k\rightarrow \infty$. Therefore, for each $m$, there exists $k(m)\in\mathbb{N}$ such that we can partition each cube into $k(m)$ smaller cubes such that \eqref{eq:partition} is at most $1/m$. Denote this partition by $\Q^m$.
    
    Define a finite partitional segmentation by setting, for all measurable $B\subseteq \Theta$,
    \[ F^m_i(B):=\frac{F(B\cap \bigcup_{Q\in\Q^m, \ell(Q)=i} Q)}{F(\bigcup_{Q\in\Q^m, \ell(Q)=i} Q)} \hspace{0.1in} \text{ and } \hspace{0.1in}  \sigma^m(F^m_i):= F\left(\bigcup_{Q\in\Q^m, \ell(Q)=i} Q\right).    \]
    It follows that $\sigma^m(F^m_i)\rightarrow \sigma(F_i)$ and $\sum_{P\in \mathcal P^m}|F_i^m(P)-F_i(P)|\rightarrow 0$ as $m\rightarrow \infty$.\footnote{Indeed, $\sum_{P\in \mathcal P^m}|F_i^m(P)-F_i(P)|=\frac{1}{\sigma(F_i)}\sum_\mathcal{P}|\frac{\sigma(F_i)}{F(\bigcup_{Q\in\Q^m, \ell(Q)=i} Q)}F(P\cap \bigcup_{Q\in\Q^m, \ell(Q)=i} Q)-\sigma(F_i)F_i(P)|\rightarrow 0$ by definition of $\ell$ (cf. \eqref{eq:partition}) and because $\frac{\sigma(F_i)}{F\left(\bigcup_{Q\in\Q^m, \ell(Q)=i}Q\right)}\rightarrow 1$.}
    Therefore, for any Lipschitz-continuous function $h:\Theta\rightarrow \R$ and $i\in\{1,\ldots, N\}$, $\int h \,\mathrm dF^m_i\rightarrow \int h\,\mathrm dF_i$. Hence, $F^m_i\rightarrow F_i$.
\end{proof}

\subsection{Proof of \autoref{Lemma-FinitePartitionalSuff}}
\begin{proof}[\unskip\nopunct]
    Fix arbitrary $\e>0$ and an arbitrary segmentation $\sigma$ achieving payoffs $(u^*_S,u^*_R)$. 
\bigskip
    
\noindent\textbf{Step 1:} We can assume that the receiver plays $\overline{a}(G)$ for each $G\in\supp \sigma$. \medskip

Indeed, the arguments are symmetric if the receiver plays $\underline{a}(G)$ for each $G\in\supp \sigma$, and the conclusion then follows for arbitrary best responses because the sender's payoff is a convex combination of the payoffs achieved from always playing $\overline{a}$ and always playing $\underline{a}$.
\bigskip

\noindent\textbf{Step 2:} We choose $\e'$ and $\delta$ small enough and define a measurable mapping that maps any segment $G$ to a closeby segment $\overline{H}$ such that all best responses in segment $\overline{H}$ are within $\delta$ of $\overline{a}(G)$. \medskip

Fix $\e'>0$ and $\delta>0$ small enough.  Let $t:\Delta(\Theta)\rightarrow \Delta(\Theta)$ be a measurable function that sends any segment $G$ to $t(G)$, where $\frac{\mathrm dt(G)}{\mathrm d G}\le 1+\e'$ and any best response given segment $t(G)$ is in $B_{\delta}(\overline{a}(G))$. \autoref{Assumption-Continuity}\ref{a:continuity_2} ensures that such a function exists.\bigskip

\noindent\textbf{Step 3:} We define $\sigma_1$ by considering a (scaled-down) version of the pushforward measure of $\sigma$ under $t$ and adding a mass point at an extra segment containing the remaining types. \medskip

For any (measurable) $Z\subseteq \Delta(\Theta)$, define $\tilde \sigma(Z):= \frac{1}{1+\e'}\sigma(t^{-1}(Z))$ to be a (scaled-down) pushforward of $\sigma$ under $t$.

For any (measurable) $E\subseteq \Theta$, $(t(G))(E)= \int_E \frac{\mathrm dt(G)}{\mathrm dG} \,\mathrm d G \le (1+\e') G(E) $ by the bound on the Radon-Nikodym derivative.
Therefore, 
\[ \left(\int H \,\mathrm d\tilde \sigma(H)\right)(E) =  \left(\frac{1}{1+\e'} \int t(G) \,\mathrm d \sigma(G)\right)(E) \le \frac{1}{1+\e'} \int (1+\e') G(E) \,\mathrm d \sigma(G) = F(E) \]
where the first equality follows from the definition of $\tilde \sigma$ and the last equality follows since $\int G \,\mathrm d\sigma(G)=F$. Intuitively, $\tilde \sigma$ does not ``exhaust'' all types available under the prior (and isn't even a probability distribution over segments).
Therefore, we add a segment $G_{extra}$ that contains all the remaining types: Let $G_{extra}:= \frac{F-\int H\,\mathrm  d\tilde \sigma(H)}{1-\tilde\sigma(\Delta(\Theta))}$ and note that $\left(F-\int H \,\mathrm d\tilde \sigma(H)\right)(\Theta)=1-\frac{1}{1+\e'}=1-\tilde\sigma(\Delta(\Theta))$. Hence, $G_{extra}\in\Delta(\Theta)$.
Now define $\sigma_1:= \tilde\sigma + \delta_{G_{extra}} (1-\tilde\sigma(\Delta(\Theta)))$, where $\delta_{G_{extra}}$ is a Dirac measure at $G_{extra}$.
Simple accounting shows that $\sigma_1\in \Delta(\Delta(\Theta))$ and $\int G \,\mathrm d \sigma_1(G) = F$.
\bigskip

\noindent\textbf{Step 4:} We argue that the payoffs under $\sigma_1$ are within $\e/3$ of $(u^*_S,u^*_R)$.\medskip

A fraction $\frac{1}{1+\e'}$ of types end up in a segment under $\sigma_1$ in which any best response for the receiver is within $\delta$ of the best response the receiver would have chosen under segmentation $\sigma$. Since the sender's payoff is continuous in the action for each type, we can choose $\e'>0$ and $\delta>0$ in Step 2 small enough such that the expected payoff for the sender is within $\e/3$ of $u^*_S$.
Similar arguments apply to the receiver's payoff: We can choose $\e'>0$ small enough such that, for any segment $G$, $G$ and $t(G)$ are close in the Levy-Prokhorov metric (which metricizes the weak$^*$-topology), and hence the resulting payoffs for the receiver are close by \autoref{Assumption-Continuity}\ref{a:continuity_1}. Moreover, by choosing $\e'>0$ small enough, the extra segment $G_{extra}$ gets arbitrarily small probability under $\sigma_1$, and hence the receiver's payoff is within $\e/3$ of $u^*_R$.
\bigskip

\noindent\textbf{Step 5:} We define a new segmentation $\sigma_2$, which is an approximation of the segmentation $\sigma_1$, such that $\sigma_2$ contains finitely many segments and such that payoffs under $\sigma_2$ are within $\e/3$ of the payoffs under $\sigma_1$.\medskip

Fix $\delta_2>0$.
For any $G\in\supp \tilde\sigma$, there is an $\e_G$-ball (in the Levy-Prokhorov metric) around $G$ such that any best response to any $G'\in B_{\e_G}(G)$ is within $\delta_2$ of any best response given $G$ (by \autoref{Assumption-Continuity}\ref{a:continuity_1}). Since $\supp \tilde\sigma$ is compact, finitely many of these balls, say $\{B_1,...,B_m\}$, cover $\supp \tilde\sigma$, where $B_i:=B_{\e_{G_i}}(G_i)$. We form a new segmentation $\sigma_2$ by merging all segments that lie in a given ball. Formally, $\sigma_2$ has at most $m+1$ segments in its support; 
the first segment is the barycenter of the set $B_1$ and, for $i>1$,
the $i$th segment is defined recursively as the barycenter of the set $B_i\setminus\bigcup_{j=1}^{i-1}B_j$ under $\tilde\sigma$,
\[H_i:=\frac{\int_{B_i\setminus \bigcup_{j=1}^{i-1}B_j} G \,\mathrm d\tilde \sigma(G)}{\tilde\sigma(B_i\setminus \bigcup_{j=1}^{i-1}B_j)}\]
whenever $\tilde\sigma(B_i\setminus \bigcup_{j=1}^{i-1}B_j)>0$ and we define $\sigma_2(\{H_i\}):=\tilde\sigma(B_i\setminus \bigcup_{j=1}^{i-1}B_j)$ and $\sigma_2(\{G_{extra}\}):=\sigma_1(\{G_{extra}\})$.\footnote{For simplicity, we assume $H_i\neq H_j$ for $i\neq j$ and $H_i\neq G_{extra}$. Our arguments apply without this assumption.}
 One can verify that open balls in the Levy-Prokhorov metric are convex. Therefore, $H_i\in B_i$ and any best response to the merged segment $H_i$ is within $2\delta_2$ of any best response to any $G\in B_i$. 
 By choosing $\delta_2$ small enough, we obtain a segmentation with finite support such that payoffs under $\sigma_2$ are within $\e/3$ of the payoffs under $\sigma_1$.
\bigskip

\noindent\textbf{Step 6:} We define a new segmentation $\sigma_3$ which is finite partitional and approximates $\sigma_2$. We argue that the payoffs under $\sigma_3$ are within $\e/3$ of the payoffs under $\sigma_2$.\medskip

By \autoref{l:approximate_multidimensional}, for any $\delta'>0$ we can approximate the segmentation $\sigma_2$ by a finite partitional segmentation $\sigma_3$ such that to any segment $G$ in $\sigma_2$ there is a unique corresponding segment $H$ with $|\sigma_3(H)-\sigma_2(G)|<\delta'$ and $d_P(G,H)<\delta'$, where $d_P$ denotes the Levy-Prokhorov metric. By \autoref{Assumption-Continuity}\ref{a:continuity_1} we can choose $\delta'$ small enough so that the receiver's payoff under segmentation $\sigma_3$ is within $\e/3$ of the payoff under segmentation $\sigma_2$. Similarly, by choosing $\delta'$ small enough, any optimal action given belief $H$ is close to any optimal action under $G$ and hence the sender's expected payoff given segmentation $\sigma_3$ is within $\e/3$ of the expected payoff given segmentation $\sigma_2$.

It follows that payoffs under $\sigma_3$ are within $\e$ of $(u^*_S,u^*_R)$.
\end{proof}

\begin{remark}\label{r:assn_insurance}
    The proof of \autoref{Lemma-FinitePartitionalSuff} still goes through if we replace \autoref{Assumption-Continuity}\ref{a:continuity_2} by the following alternative assumption:
    \begin{quote}
        For any $G \in\supp \sigma$ and any $\e, \delta > 0$, there is a distribution $\underline{H}$ ($\overline{H}$) whose Radon-Nikodym derivative satisfies $\frac{\mathrm d\underline{H}}{\mathrm dG}\le 1+\e$ ($\frac{\mathrm d\overline{H}}{\mathrm dG}\le 1+\e$) and any best response $a'$ to $\underline{H}$ ($\overline{H}$) is such that $|u_S(a', \theta) - u_S(\underline{a}(G),\theta)| < \delta$ ($|u_S(a', \theta) - u_S(\overline{a}(G),\theta)| < \delta$) for $\underline{H}$($\overline{H}$)-almost every $\theta$. Moreover, the functions that send $G$ to $\overline{H}$ and $\underline{H}$ are measurable.
    \end{quote}
    Since $\supp \underline{H}\subseteq \supp G$, any action optimal under $\underline{H}$ yields almost the same payoff for types in $\supp \underline{H}$. Thus, by choosing sufficiently small $\e', \delta > 0$ in Step 2, Step 4 still goes through. 
\end{remark}

\section{Proof of \Cref{Theorem-Finite}}\label{Appendix-Finite}

We first prove the following preliminary lemma. Without loss of generality, we assume that $\Theta$ is unidimensional.

\newcommand{\uq}{\underline{q}}
\begin{lemma}\label{Lemma-discrete_approx}
For every $N \in \mathbb{N}$, $\uq > 0$, and $\delta > 0$, there exists $\gamma > 0$ such that the following holds: If $F$ has finite support with $F(\{\theta\}) < \gamma$ for all $\theta$, and $\sigma$ is a finite segmentation with $\supp(\sigma) = \{F_1, \ldots, F_n\}$ where $n \le N$ and $\sigma(F_i) \ge \uq$ for all $i$, then there exists a finite partitional segmentation $\bar{\sigma}$ with $\supp(\bar{\sigma}) = \{\bar{F}_1, \ldots, \bar{F}_n\}$ such that for each $i = 1, \ldots, n$:
\[
|\bar{\sigma}(\bar{F}_i) - \sigma(F_i)| < \delta \quad \text{and} \quad d_P(\bar{F}_i, F_i) < \delta,
\]
where $d_P$ denotes the L\'{e}vy-Prokhorov metric.
\end{lemma}
\begin{proof}
Fix $N \in \mathbb{N}$, $\uq > 0$, and $\delta > 0$. Define $\gamma := \frac{\uq \delta}{N + 1}$ and 
suppose $F$ has finite support with $F(\{\theta\}) < \gamma$ for all $\theta$. Let $\sigma$ be any segmentation with $n \le N$ segments satisfying $\sigma(F_i) \ge \uq$ for all $i$. We construct the finite partitional segmentation $\bar\sigma$.

 Let $J := \supp F = \{\theta_1, \ldots, \theta_M\}$, where for any $s, t \in \{1, \ldots, M\}$ with $s<t$, $\theta_s < \theta_t$. Consider an assignment rule $\ell: J \to \{1, \ldots, n\}$ defined as follows: $\ell(\theta_1) = 1$, and for every $s>1$, define iteratively\footnote{To ease exposition, we slightly abuse notation by letting $G(\theta) := G([\uv, \theta])$ for all $\theta \in \Theta$ for any probability measure $G$ with support on $J$, where $\uv := \inf \Theta$.}
    \[\ell(\theta_s) := \min\{i \in \{1, \ldots, n\}: F(X^{s-1}_{i}) \le \sigma(F_i) F_i(\theta_{s-1})\},\]
    where $X^{s-1}_i = \{\theta \in \{\theta_1, \ldots, \theta_{s-1}\}: \ell(\theta) = i\}$.
    For every $i = 1, \ldots, n$, define $J_i := \{\theta \in J: \ell(\theta) = i\}$; $\{J_1, \ldots, J_n\}$ is a partition of $J$. 
    Define a finite partitional segmentation $\bar{\sigma}$ with $\supp(\bar{\sigma})=\{\bar{F}_1,\ldots,\bar{F}_n\}$ by setting, for every measurable $B \subseteq \Theta$, 
    $\bar{F}_i(B):=\frac{F\left(B\cap J_i\right)}{F\left(J_i\right)}$,
    and $\bar{\sigma}(\bar{F}_i):= F\left(J_i\right)$ for each $i=1, \ldots, n$.
    Since $F(\{\theta\}) < \gamma$ for all $\theta \in J$, we obtain
    \begin{equation} \label{eq:disc_1}
        \bar{\sigma}(\bar{F}_i)\bar{F}_i(\theta) \le \sigma(F_i) F_i(\theta)+\gamma
    \end{equation}
    for all $\theta \in \Theta$ and $i=1, \ldots, n$. Because both $\sigma$ and $\bar{\sigma}$ are segmentations, $\sum_{i} \bar{\sigma}(\bar{F}_i) \bar{F}_i(\theta) = F(\theta)= \sum_i \sigma(F_i) {F}_i(\theta)$, and hence
    \[ \sum_{i=1}^n \sigma(F_i) F_i(\theta) = \bar{\sigma}(\bar{F}_k) \bar{F}_k(\theta) + \sum_{i\neq k} \bar{\sigma}(\bar{F}_i) \bar{F}_i(\theta) \le \bar{\sigma}(\bar{F}_k) \bar{F}_k(\theta) +\sum_{i\neq k} \left(\sigma(F_i) F_i(\theta)+\gamma\right)\]
    for all $\theta \in J$ and $k=1, \ldots, n$, where the inequality follows from \eqref{eq:disc_1}. Consequently,
    \begin{equation} \label{eq:disc_2}
        \bar{\sigma}(\bar{F}_i)\bar{F}_i(\theta) \ge \sigma(F_i) F_i(\theta) - (n-1)\gamma.
    \end{equation}
    
    Taking $\theta = \theta_M$, inequalities \eqref{eq:disc_1} and \eqref{eq:disc_2} together imply $\bar{\sigma}(\bar{F}_i) - \sigma(F_i) \le \gamma$ and $\sigma(F_i) - \bar{\sigma}(\bar{F}_i) \le (n-1)\gamma$ for each $i=1,\ldots,n$, respectively. Furthermore, dividing by $\sigma(F_i) \ge \uq$ on both sides of \eqref{eq:disc_1} and rearranging, we get
    \[\bar{F}_i(\theta) - F_i(\theta) \le \frac{\gamma}{\sigma(F_i)} + \frac{\sigma(F_i) - \bar{\sigma}(\bar{F}_i)}{\sigma(F_i)} \bar{F}_i(\theta) \le \frac{\gamma}{\uq} + \frac{(n-1)\gamma}{\uq} = \frac{n\gamma}{\uq}\]
    for all $\theta \in \Theta$. Similarly, dividing by $\sigma(F_i)$ on both sides of \eqref{eq:disc_2}, we obtain $F_i(\theta) - \bar{F}_i(\theta) \le n\gamma/\uq$ for all $\theta \in \Theta$. 
    
    Our choice of $\gamma$ assures that 
    \[|\bar{\sigma}(\bar{F}_i) - \sigma(F_i)| \le (n-1)\gamma \le (N-1) \cdot \frac{\uq\delta}{N+1} < \delta,\]
    and 
    \[\sup_{\theta \in \Theta} |\bar{F}_i(\theta) - F_i(\theta)| \le \frac{n\gamma}{\uq} \le \frac{N}{\uq} \cdot \frac{\uq\delta}{N+1} = \frac{N\delta}{N+1} < \delta.\]
    By definition of the L\'{e}vy-Prokhorov metric, $d_P(\bar{F}_i, F_i) < \delta$. 
\end{proof}

\begin{proof}[Proof of \autoref{Theorem-Finite}]
Under \Cref{Assumption-NoRents,Assumption-WorstCase,Assumption-Continuity}, the proof of \autoref{Lemma-FinitePartitionalEqm} and the first five steps in the proof of \autoref{Lemma-FinitePartitionalSuff} go through without any assumption on the prior $F$. Fix $\e > 0$. We show there exists $\gamma > 0$ such that if $F$ has finite support with $F(\{\theta\}) \le \gamma$ for all $\theta$, then for every finite segmentation $\sigma$, there exists a finite partitional segmentation $\bar{\sigma}$ with payoffs within $\e/3$ of those under $\sigma$.

\medskip
\noindent\textit{Step 1: There exists a segmentation $\hat{\sigma}$ with at most $K$ segments, each with probability at least $\uq$, and with payoff change from $\sigma$ of at most $\e/6$.}

Given any segmentation $\sigma$ with segments $\{F_1, \ldots, F_m\}$, we construct a new segmentation as follows: Using \Cref{Assumption-Continuity} and compactness of $\Delta(\Theta)$, we can cover $\Delta(\Theta)$ by at most $K$ balls $\{B_1,\ldots,B_K\}$ such that if we merge any segments in $\sigma$ that lie in a given ball into a single segment under $\hat{\sigma}$---using the notion of merging as defined in Step 5 of the proof of \Cref{Lemma-FinitePartitionalSuff}---payoffs of the sender and receiver under $\hat{\sigma}$ differ by at most $\e/12$ from the payoffs under $\sigma$. Note that $K$ depends only on $\e$. Moreover, we can choose $\uq>0$ small enough (depending only on $\e$) such that by merging all segments with probability at most $\uq$ into a single segment, payoffs change by at most $\e/12$.

\medskip
\noindent\textit{Step 2: We apply \autoref{Lemma-discrete_approx} to show that there is a finite partitional segmentation $\bar{\sigma}$ and with payoff change from $\hat{\sigma}$ of at most $\e/6$.}

    By \autoref{Lemma-discrete_approx}, for every $\delta > 0$ we can find a $\gamma > 0$ such that as long as $F$ has finite support with $F(\{\theta\})\le \gamma$, there is a finite partitional segmentation $\bar{\sigma}$ such that to any segment $G$ in $\hat{\sigma}$ there is a unique corresponding segment $H$ with $|\bar{\sigma}(H)-\hat{\sigma}(G)|<\delta$ and $d_P(G,H)<\delta$. By \autoref{Assumption-Continuity}\ref{a:continuity_1} we can choose $\delta$ small enough so that the receiver's payoff under segmentation $\bar{\sigma}$ is within $\e/6$ of the payoff under segmentation $\hat{\sigma}$. Similarly, by choosing $\delta$ small enough, any optimal action given belief $H$ is close to any optimal action under $G$ and hence the sender's expected payoff given segmentation $\bar{\sigma}$ is within $\e/6$ of the expected payoff given segmentation $\sigma$. Therefore, the payoffs under $\Bar{\sigma}$ are within $\e/6$ of the payoffs under $\hat{\sigma}$. 
\end{proof}

\section{Proofs for \autoref{Section-Applications}}\label{Appendix-PaperApplications}
\subsection{Proof of \autoref{Proposition-BBM}}

\begin{lemma} \label{l:mp_vb_assn3}
    Let $w: [\uv,\ov] \to \mathbb{R}$ be a continuous and strictly increasing function. If $A$ is an interval of real numbers and the receiver's payoff is given by $u_R(a,\theta) = w(a) \mathbf{1}_{a \le \theta}$, then \autoref{Assumption-Continuity} is satisfied.
\end{lemma}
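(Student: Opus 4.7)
The plan is to verify parts~\ref{a:continuity_1} and \ref{a:continuity_2} of \autoref{Assumption-Continuity} separately. The key pointwise observation is that $a\mapsto G([a,\ov])$ is decreasing and left-continuous for any belief $G$, so $u_R(a,G):=w(a)G([a,\ov])$ is upper semicontinuous in $a$, and the supremum over the compact interval $A$ is attained.

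For part~\ref{a:continuity_1}, I will establish continuity of $U_R$ on $\Delta(\Theta)$ (weak$^*$) by showing upper and lower semicontinuity separately. For USC, given $G_n\to G$ and $a_n\in a^*(G_n)$ with $a_n\to a$, the inequality $G_n([a_n,\ov])\le G_n([a-\eta,\ov])$ holds eventually for any $\eta>0$; Portmanteau applied to the closed set $[a-\eta,\ov]$, together with left-continuity of $a\mapsto G([a,\ov])$, yields $\limsup w(a_n)G_n([a_n,\ov])\le w(a)G([a,\ov])\le U_R(G)$. For LSC, fix $a^*\in a^*(G)$ and pick continuity points $a^*-\eta_k$ of $G$ with $\eta_k\downarrow 0$: then $G_n([a^*-\eta_k,\ov])\to G([a^*-\eta_k,\ov])$, giving $\liminf U_R(G_n)\ge w(a^*-\eta_k)G([a^*-\eta_k,\ov])\to U_R(G)$. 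Upper hemicontinuity of $a^*(\cdot)$ follows from these facts combined with the sequential upper semicontinuity of $(a,G)\mapsto w(a)G([a,\ov])$.

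For part~\ref{a:continuity_2}, I will construct $\underline{H}$ and $\overline{H}$ as convex combinations of $G$ with a conditional of $G$. Since $\int u_S(a,\theta)\,\mathrm dG(\theta)$ is decreasing in $a$, $a^*:=\underline{a}(G)$ is the largest optimal price. I set
\[
\underline{H} \;:=\; cG + (1-c)\,G|_{[a^*,\ov]}\big/G([a^*,\ov]),
\]
with $c$ close enough to $1$---specifically, $1-c\le\epsilon G([a^*,\ov])/(1-G([a^*,\ov]))$---to enforce $d\underline{H}/dG\le 1+\epsilon$. A direct computation gives the receiver's payoff under $\underline{H}$ as $cw(a)G([a,\ov])+(1-c)w(a)$ for $a<a^*$ and as $\bigl(c+(1-c)/G([a^*,\ov])\bigr)w(a)G([a,\ov])$ for $a\ge a^*$. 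For $a<a^*$, strict monotonicity of $w$ makes the additive part $(1-c)w(a)$ strictly smaller than $(1-c)w(a^*)$; for $a>a^*$, $a$ is strictly suboptimal in $G$ (since $a^*$ is the largest optimal price), and the common positive multiplicative factor preserves this gap. Hence $a^*$ is the unique best response to $\underline{H}$, trivially in $B_\delta(a^*)$. Symmetrically, let $a^*:=\overline{a}(G)$ (the smallest optimal price) and pick $\eta<\delta$ small with $G([a^*,a^*+\eta))>0$; such $\eta$ exists, for otherwise strict monotonicity of $w$ would make $a^*+\eta$ strictly beat $a^*$ in $G$. Setting $\overline{H}:=cG+(1-c)G|_{[a^*,a^*+\eta)}/G([a^*,a^*+\eta))$ with the analogous choice of $c$, analogous calculations show that $a^*$ strictly beats every $a<a^*$ and every $a\ge a^*+\eta$ under $\overline{H}$, so all best responses to $\overline{H}$ lie in $[a^*,a^*+\eta)\subseteq B_\delta(a^*)$.

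The main obstacles are the degenerate cases---$G$ a point mass, $G([\underline{a}(G),\ov])\in\{0,1\}$, or $w(\overline{a}(G))=0$---where the mixture is ill-defined or the strict-inequality step breaks. I will handle these by taking $\underline{H}=\overline{H}=G$ and checking case by case that $a^*(G)$ is itself a singleton or already lies within $B_\delta$ of the relevant selector. Measurability of $G\mapsto\underline{H},\overline{H}$ then follows from measurability of $G\mapsto a^*(G)$ and its extremal selectors, obtained from standard measurable selection arguments.
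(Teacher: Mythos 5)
Your proposal is correct in its main lines and takes a genuinely different route from the paper's proof; here is a comparison of the two approaches and a note on where you are light on details.

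For part~\ref{a:continuity_1}, the paper simply cites the proof of Theorem~1 of Yang (2023), whereas you give a self-contained argument via the Portmanteau theorem, handling USC and LSC of $U_R$ separately and using continuity points of $G$ for the LSC step. This is more elementary and checks the required continuity without importing an external result, which is a real improvement in transparency. For part~\ref{a:continuity_2}, the two constructions are different in kind. The paper truncates and renormalizes: $\underline{t}(G)(x)=\min\{G(x),1-\kappa\eta(G)\}/(1-\kappa\eta(G))$ with $\eta(G)$ pinned down by a calibration equation, and $\overline{t}(G)$ is a separate piecewise rescaling around $\overline{a}(G)$. You instead take convex combinations $cG+(1-c)\,G(\cdot\mid E)$ with $E=[\underline{a}(G),\ov]$ or $E=[\overline{a}(G),\overline{a}(G)+\eta)$. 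Your tie-breaking computations ($c$ times the old gap plus $(1-c)$ times a strictly positive gap from strict monotonicity of $w$) are clean and avoid the calibration equation \eqref{eq:eta} entirely, which simplifies the argument and makes the Radon-Nikodym bound transparent. Both constructions implicitly identify $\underline{a}(G)$ and $\overline{a}(G)$ with the largest and smallest optimal price, which is justified when $\int u_S(\cdot,\theta)\,\mathrm dG$ is decreasing in $a$ (as in monopoly pricing); the paper makes the same identification.

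Two spots where you are lighter than the paper and should be careful when fleshing out the argument. First, measurability of $G\mapsto\underline{H},\overline{H}$ is not automatic from measurable selection of $a^*(G)$: your $\overline{H}$ also depends on an $\eta$ chosen so that $G([\overline{a}(G),\overline{a}(G)+\eta))>0$, and the map $G\mapsto\eta$ must itself be made measurable (the paper's corresponding care is in verifying that $\eta(G)$ defined by \eqref{eq:eta} is measurable via the measurable maximum theorem). Second, your fallback $\underline{H}=\overline{H}=G$ for the degenerate cases is not sufficient as stated: for instance if $w(\overline{a}(G))=0$ there can be a continuum of ties among best responses to $G$ spanning an interval much larger than $B_\delta(\overline{a}(G))$, so setting $\overline{H}=G$ does not narrow the best-response set. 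You would need to argue either that such degenerate $G$ cannot arise as a segment of an achieving segmentation (so the alternative form of the assumption in Remark~\ref{r:assn_insurance} could be used), or construct a different perturbation for that case rather than defaulting to $G$ itself.
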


\begin{proof}
    We first verify \autoref{Assumption-Continuity}\ref{a:continuity_1}. Recall that $u_R(a,G)=\int u_R(a,\theta)\,\mathrm dG(\theta)$. Let $U_R(G):=\max_{a\in A} u_R(a,G)$ be the receiver's optimal payoff in segment $G$. The proof of Theorem 1 in \cite*{yang2023continuity} can be used \emph{mutatis mutandis} to show that the receiver's optimal payoff $U_R(G)$ is continuous under the weak$^*$ topology, and the best response correspondence $\argmax_{a \in A} u_R(a, G)$ is upper hemicontinuous. Thus, \autoref{Assumption-Continuity}\ref{a:continuity_1} is satisfied.

    Verifying \autoref{Assumption-Continuity}\ref{a:continuity_2} requires more work. Let $\underline{a}: \Delta(\Theta) \to \R$ and $\overline{a}: \Delta(\Theta) \to \R$ denote the mappings that map a segment to the lowest optimal action and the highest optimal action in that segment, respectively. Because the best response correspondence is upper hemicontinuous, $\underline{a}$ is lower semicontinuous, and $\overline{a}$ is upper semicontinuous. Given $\e, \delta > 0$, we construct two functions, $\underline{t}: \Delta (\Theta) \to \Delta (\Theta)$ and $\overline{t}: \Delta (\Theta) \to \Delta (\Theta)$, such that both $\underline{t}$ and $\overline{t}$ are measurable, the Radon-Nikodym derivatives satisfy $\frac{\mathrm{d}\underline{t}(G)}{\mathrm{d}G} \le 1+\e$ and $\frac{\mathrm{d}\overline{t}(G)}{\mathrm{d}G} \le 1+\e$ for every $G \in \Delta(\Theta)$, and any optimal action in segment $\underline{t}(G)$ ($\overline{t}(G)$) is contained in $(\underline{a}(G)-\delta, \underline{a}(G)+\delta)$ ($(\overline{a}(G)-\delta, \overline{a}(G)+\delta)$, respectively). For notational ease, we sometimes suppress the dependence of $\underline{a}$ and $\overline{a}$ on $G$ and simply write $\underline{a}$ and $\overline{a}$ for $\underline{a}(G)$ and $\overline{a}(G)$, respectively.

    Define the function $\underline{t}:\Delta(\Theta)\rightarrow \Delta(\Theta)$ by
    \[ \underline{t}(G)(x):= \frac{\min\{ G(x), 1-\kappa \eta(G) \}}{1-\kappa \eta(G)} \]
    where $\eta(G)$ solves 
    \begin{align}\label{eq:eta}
             \frac{\eta(G)}{1-\eta(G)} 2w(\overline{\theta})= \max_{a\in [\underline{a}(G)-\delta/2,\underline{a}(G)]} w(a) [1-G(a)] -\max_{a\le \underline{a}(G)-\delta } w(a) [1-G(a)],         
    \end{align}
    and $\kappa\in(0,1)$ is small enough so that $\frac{d\underline{t}(G)}{dG} \le 1+\e$. Note that $\eta(G)$ is well-defined and takes values in $(0,1)$ because the right-hand side is strictly positive and bounded. 

    We now establish that $\underline{t}$ is measurable. First, $\underline{a}$ is measurable because it is lower semicontinuous. To see that $\eta$ is measurable, reformulate the first maximization problem on the RHS of \eqref{eq:eta} as choosing $(a,q)$ with $(a,q)\in \Gamma(G)$, where
    \[\Gamma(G) := \left\{(a, q) \in \left[\underline{a}(G) - \delta/2, \underline{a}(G)\right] \times [0,1]: 1-G(a) \le q \le 1-G(a-)\right\}\]
    to maximize $w(a)q$. The objective function is continuous in $a$ and $q$, $\Gamma$ has nonempty compact values, and by Lemma 1 in \cite*{yang2023continuity}, $\Gamma$ is continuous and hence weakly measurable. Reformulate the second maximization problem on the RHS of \eqref{eq:eta} analogously. Then we can apply the measurable maximum theorem \citep[Theorem 18.19 in][]{aliprantis2006infinite} to conclude that the RHS of \eqref{eq:eta} is measurable in $G$. This implies that $\eta$ is measurable and, therefore, $\underline{t}$ is measurable.

    Finally, we verify that any optimal action in segment $\underline{t}(G)$ is within $\delta$ of $\underline{a}(G)$: 
    \begin{align*}
        \max_{a\in[\underline{a}(G)-\delta/2,\underline{a}(G)]} w(a)[1-\underline{t}(G)(a)]
        \ge & \max_{a\in[\underline{a}(G)-\delta/2,\underline{a}(G)]} w(a)\left[1-\frac{G(a)}{1-\kappa \eta(G)}\right]\\
        = &\max_{a\in[\underline{a}(G)-\delta/2,\underline{a}(G)]} w(a)\left[1-G(a)-\frac{\kappa \eta(G)}{1-\kappa \eta(G)}G(a)\right]\\
        \ge &\max_{a\in[\underline{a}(G)-\delta/2,\underline{a}(G)]} w(a)[1-G(a)]-w(\overline \theta) \frac{\kappa \eta(G)}{1- \kappa \eta(G)}\\
        > &\max_{a\le \underline{a}(G)-\delta } w(a) [1-G(a)]    \ge \max_{a\le \underline{a}(G)-\delta } w(a) [1-\underline{t}(G)(a)],
    \end{align*}
    where the strict inequality follows from \eqref{eq:eta}.
    Hence, any optimal action in segment $\underline{t}(G)$ is at least $\underline{a}(G)-\delta$. Moreover, for any $a>\underline{a}(G)$, we have
    \begin{align*}
        w\left(\underline{a}\right)\left[1-\frac{G(\underline{a}(G))}{1-\kappa \eta(G)}\right] \ge     w(a)[1-G(a)]-\frac{\kappa \eta(G)}{1-\kappa \eta(G)}w\left(\underline{a}\right)G(\underline{a}) >w(a)\left[1-\frac{G(a)}{1-\kappa \eta(G)}\right].
    \end{align*}
    Hence, no action strictly above $\underline{a}(G)$ is optimal in segment $\underline{t}(G)$. Thus, the function $\underline{t}$ has the desired properties.

Define the function $\overline{t}:\Delta(\Theta)\rightarrow \Delta(\Theta)$ by
\[\overline{t}(G)(x) \equiv \frac{(1-\kappa) G(x)}{1-\kappa G(\overline{a})} \mathbf{1}_{x < \overline{a}(G)} + \frac{G(x) - \kappa G(\overline{a})}{1-\kappa G(\overline{a})} \mathbf{1}_{x \ge \overline{a}(G)},\]
where $\kappa > 0$ is small enough such that $\frac{\mathrm{d}\overline{t}(G)}{\mathrm{d}G} < 1+ \e$. Because $\overline{a}$ is upper semicontinuous, it is measurable; it is straightforward to see that $\overline{t}$ is also measurable.

Next, we show that for every segment $G$, $\overline{a}(G)$ is the unique optimal action in segment $\overline{t}(G)$. For any $a' > \overline{a}(G)$,
\begin{align*}
    w\left(\overline{a}\right)(1-\overline{t}(G)(\overline{a})) & =  w\left(\overline{a}\right) \left[1 - \frac{G(\overline{a}) - \kappa G(\overline{a})}{1-\kappa G(\overline{a})}\right]  = \frac{w\left(\overline{a}\right)(1-G(\overline{a}))}{1-\kappa G(\overline{a})} \\
    & > \frac{w\left(a'\right)(1-G(a'))}{1-\kappa G(\overline{a})}  = w\left(a'\right)(1-\overline{t}(G)(a')),
\end{align*}
where the strict inequality follows because $a' > \overline{a}$, $w$ is strictly increasing, and $\overline{a}$ is the \emph{highest} optimal action. For any $a'' < \overline{a}(G)$,
\begin{align*}
    w\left(\overline{a}\right)(1-\overline{t}(G)(\overline{a})) & = w\left(\overline{a}\right) \frac{1-G(\overline{a})}{1-\kappa G(\overline{a})} \\
    & \ge \frac{(1-\kappa) w\left(a''\right)(1-G(a'')) + \kappa w\left(\overline{a}\right) (1-G(\overline{a}))}{1-\kappa G(\overline{a})} \\
    & > \frac{(1-\kappa) w\left(a''\right)(1-G(a'')) + \kappa w\left(a''\right) (1-G(\overline{a}))}{1-\kappa G(\overline{a})} \\
    & = w\left(a''\right)\left[1-\frac{(1-\kappa)G(a'')}{1-\kappa G(\overline{a})}\right] = w\left(a''\right)(1-\overline{t}(G)(a'')),
\end{align*}
where the weak inequality holds because $\overline{a}$ is an optimal action, and the strict inequality follows since $a'' < \overline{a}(G)$ and $w$ is strictly increasing. Hence, $\overline{t}$ also has the desired properties, and \autoref{Assumption-Continuity}\ref{a:continuity_2} is verified.
\end{proof}

\begin{proof}[Proof of \autoref{Proposition-BBM}]
    It suffices to verify Assumptions \ref{Assumption-NoRents}, \ref{Assumption-WorstCase}, and \ref{Assumption-Continuity}. Because $\underline{a}(\theta) = \theta$, $u_S(\underline{a}(\theta),\theta) = 0$. Since $u_S$ is bounded below by zero, \autoref{Assumption-NoRents} is satisfied. For \autoref{Assumption-WorstCase}, we claim that for every message $m \in \mathcal{C}$, one can set $\hat{\theta}_{m}:=\max_{\theta \in m} \theta$. This is because for every $\theta \in m$, $u_S(\underline{a}(\theta),\theta) = 0 = \max\{\theta - \hat{\theta}_m, 0\} = u_{S}\left(\underline{a}\left(\hat{\theta}_{m}\right), \theta\right)$. Finally, \autoref{Assumption-Continuity} holds since we can appeal to \autoref{l:mp_vb_assn3} by letting $w$ be the identity function. 
\end{proof}

\subsection{Proof of \autoref{Proposition-Interval}}
\begin{proof}[\unskip\nopunct]
If the aggregate market is unimprovable, the fully concealing equilibrium---in which every type sends the message $[\uv,\ov]$---supports the payoffs of the consumer-optimal segmentation.

For the converse, suppose the aggregate market is improvable and let $p^*>\underline{\theta}$ denote the lowest uniform monopoly price for the aggregate market. Towards a contradiction, suppose that there is an interval-message equilibrium that yields ex ante consumer payoffs of at least $w^*-\pi^*-1/n$ for every $n\in\mathbb{N}$. The monopolist's payoff in the $n^{th}$ equilibrium is therefore at most $\pi^*+1/n$ and the welfare loss from exclusion is at most $1/n$.

Consequently, for any $m>0$ and all $n$ sufficiently large, in the segmentation induced by the $n^{th}$ equilibrium, there is a set of segments $\mathcal G_n^m$ whose probability is at least $1-1/m$ and such that each segment $G\in\mathcal G_n^m$ satisfies: (i) the monopolist's optimal profit in segment $G$ exceeds the profit from charging $p^*$ by at most $1/m$, and (ii) at most a fraction $1/m$ of types in segment $G$ is excluded. Property (i) implies that for all $n$ sufficiently large, each $G \in\mathcal G_n^m$ contains types in $[p^*,p^*+1/m]$. Property (ii) implies that for $n$ sufficiently large, the equilibrium price in each $G\in\mathcal G_n^m$ is at most $G^{-1}(1/m)$. It follows that for $m$ and $n$ large enough, $\mathcal G_n^m$ contains a segment with equilibrium price close to $\underline{\theta}$.

In equilibrium, each type sends whichever available message yields the lowest price, provided this price is below their valuation. Consequently, for any $\e>0$, by properties (i) and (ii) it holds that for all $m$ and $n$ sufficiently large, types in $[\underline{\theta}+\e,p^*]$ will send messages that result in the same price and at least one of these messages contains a type that is at least $p^*+1/m$ (otherwise, uniformly lowering the price from $p^*$ in all these segments would increase profit, contradicting that $p^*$ is close to optimal in all segments). But this implies that all types $\theta\in[p^*,p^*+1/m]$ will in equilibrium send a message resulting in a price close to $\underline \theta$. By property (i), all segments in $\mathcal{G}_n^m$ must therefore result in this low price. But as $m,n\rightarrow \infty$, the probability of $\mathcal G_n^m$ converges to 1, and because $\underline{\theta}$ is not an optimal uniform price, equilibrium profits fall strictly below $\pi^*$, yielding a contradiction.
\end{proof}

\subsection{Proof of \autoref{Proposition-TruthLeaning}}
\begin{proof}[\unskip\nopunct]
    To establish \ref{p:truthleaningpart1}, it suffices to show that trade must happen with probability 1 in any truth-leaning equilibrium. In any equilibrium of the perturbed game $\Gamma^{\boldsymbol{\e}}$, trade must happen with probability 1: any type $\theta$ that does not trade must fully reveal her type to get payoff $\boldsymbol{\e}(\theta) > 0$, and the unique optimal action for the monopolist after receiving message $\{\theta\}$ with $\theta>0$ is $a=\theta$, resulting in trade. By definition, a truth-leaning equilibrium is a limit point of equilibria of $\Gamma^{\boldsymbol{\e}}$, and hence trade also happens with probability 1 in any such equilibrium.

    To establish \ref{p:truthleaningpart2}, fix $\e > 0$ and an efficient payoff profile $(u^*_S, u^*_R)$. By \autoref{Proposition-BBM}, there exists a payoff profile $(u^1_S, u^1_R)$ that is within $\e/2$ of $(u^*_S, u^*_R)$ and that is supported by an equilibrium $e^*$. Moreover, the proof of \autoref{Theorem-MainResult} shows that we can assume that $e^*$ induces a finite partitional segmentation and each segment has positive probability.

    We modify $e^*$ to obtain an efficient equilibrium $e^{**}$ of the unperturbed game that will also be an equilibrium of some perturbed games: In $e^{**}$, all consumer types whose payoff is strictly positive in $e^*$ send the same message as in $e^*$; all types whose payoff is zero in  $e^*$ send the fully revealing messages. The monopolist's strategy is as in $e^*$ and beliefs are derived from Bayes' rule whenever possible (with skeptical beliefs after off-path messages).

    To verify that $e^{**}$ is an equilibrium of the unperturbed game, note that each consumer type gets the same payoff in $e^{**}$ as in $e^*$, any deviation to an on-path message would yield the same payoff as in $e^*$, and any deviation to an off-path message is not profitable. Hence, the consumer best responds. Moreover, the monopolist's actions are still optimal: Consider an on-path message $m$ that is sent with positive probability, and denote the segment and price induced by message $m$ in $e^*$ by $G$ and $p_G$, respectively. Under the modified strategy of the consumer in $e^{**}$, the segment induced by $m$ is $G(\,\cdot \, | \, \theta > p_G) := (G(\theta)-G(p_G))/(1-G(p_G))$. The monopolist's profits from charging $p$ under $G(\,\cdot \, | \, \theta > p_G)$ is therefore
    \[\tilde{\Pi}(p) = p (1-G(p \,|\,\theta > p_G)) = \frac{p(1-G(p))}{1-G(p_G)}.\]
    Since $p_G$ is a maximizer of $p(1-G(p))$, it also maximizes $\tilde{\Pi}(p)$ among all $p \in [p_G, \ov]$. Therefore, the monopolist best responds to $m$. For any other message, the monopolist holds skeptical beliefs and best responds as well.

    Denote by $(u^2_S, u^2_R)$ the payoff profile induced by $e^{**}$. To see that $(u^2_S, u^2_R)$ is within $\e$ of $(u^*_S, u^*_R)$, note that because $(u^*_S,u^*_R)$ and $(u^2_S, u^2_R)$ are efficient and the efficiency frontier has slope $-1$, 
    \[|u^*_R - u^2_R| = |u^*_S - u^2_S| = |u^*_S - u^1_S| < \frac{\e}{2}.\]
    
    It remains to argue that the equilibrium $e^{**}$ is truth-leaning. Towards this end, for every $n\in\mathbb N$ and for every $\theta$ that does not send a fully revealing message in $e^{**}$, define $\boldsymbol{\e}^n(\theta) = (\theta-p(m))/2n$, where $m$ is the message sent by type $\theta$ in $e^{**}$ and $p(m)$ the resulting price, and for every type $\theta$ that sends a fully revealing message in $e^{**}$, define $\boldsymbol{\e}^n(\theta) = \ov/n$. Then, for each $n$, $\boldsymbol{\e}^n(\theta)>0$ for all $\theta$ and $e^{**}$ is an equilibrium of the perturbed game $\Gamma^{\boldsymbol{\e}^n}$ since no type of the consumer has a profitable deviation, and $\boldsymbol{\e}^n$ converges uniformly to $\mathbf{0}$. Therefore, $e^{**}$ constitutes a truth-leaning equilibrium, which completes the proof of \ref{p:truthleaningpart2}.\end{proof}

\subsection{Proof of \autoref{Proposition-Akerlof}}

\begin{proof}[\unskip\nopunct]
     \Cref{Assumption-NoRents,Assumption-WorstCase} can be verified in a manner similar to the proof of \Cref{Proposition-BBM}. To verify \autoref{Assumption-Continuity}\ref{a:continuity_1}, fix an arbitrary $G$ and a sequence $G_n\rightarrow G$. 
    We show that 
    \begin{align}
        &\forall a\in A:\forall a_n\rightarrow a: \ & \limsup_n u_R(a_n,G_n)&\le u_R(a,G) \text{ and } \label{eq:limsup}\\
        &\forall \e>0:\forall a\in A: \exists a_n\rightarrow a: \ &  \liminf_n u_R(a_n,G_n)&\ge u_R(a,G)-\e. \label{eq:liminf}
    \end{align}
This will imply that $u_R(\cdot, G_n)$ $\Gamma$-converges to $u_R(\cdot,G)$ \citep[page 288 in][]{santambrogio2023}.\footnote{More precisely, it will imply that $-u_R(\cdot,G_n)$ $\Gamma$-converges to $-u_R(\cdot,G)$. Since we are interested in maximization problems instead of minimization problems, these are the relevant inequalities for us.} Propositions 7.4 and 7.5 in \cite{santambrogio2023} then imply that $U_R(G_n)\rightarrow U_R(G)$ (hence $U_R(G)$ is continuous) and the set of optimal actions is upper hemicontinuous.

To verify \eqref{eq:limsup}, observe that 
\[ \limsup_n u_R(a_n,G_n)-u_R(a,G) \le \limsup_n [u_R(a_n,G_n)-u_R(a,G_n)] + \limsup_n [u_R(a,G_n) - u_R(a,G)],\]
the first $\limsup$ on the RHS is weakly negative because $u_R(\cdot,\theta)$ is upper semicontinuous and the second $\limsup$ on the RHS is weakly negative by the Portmanteau theorem because $u_R(a,\theta)$ is upper semicontinuous in $\theta$ (since $v(\theta)>c(\theta)$) and $G_n\rightarrow G$.

To verify \eqref{eq:liminf}, define $c^{-1}(a):=\max\{\theta\in \Theta: c(\theta)\le a\}$ and fix an arbitrary $a$. Without loss of generality, assume $G(c^{-1}(a))>0$ and choose an arbitrary $\e\in(0,G(c^{-1}(a)))$. Note that there is a sequence $a_n\rightarrow a$ satisfying $a_n>a$ and $G_n(c^{-1}(a_n))\ge G(c^{-1}(a))-\e$ for all $n$ because $G_n\rightarrow G$.  Then
\begin{align*}
    u_R(a,G)
    &=  \int \mathbf(v(\theta)-a) \,\mathrm d[\min\{G(\theta),G(c^{-1}(a))\}]\\
    &= \lim_n \int \mathbf(v(\theta)-a) \,\mathrm d[\min\{G_n(\theta),G(c^{-1}(a))\}]\\
    &\le \lim \int \mathbf(v(\theta)-a) \,\mathrm d[\min\{G_n(\theta),G(c^{-1}(a))-\e\}] +\e v(\overline{\theta})\\
    &\le \lim \int \mathbf{1}_{c(\theta)\le a_n}\mathbf(v(\theta)-a) \,\mathrm d[\min\{G_n(\theta),G(c^{-1}(a))-\e\}] +\e v(\overline{\theta})\\ 
    &\le \lim \int \mathbf{1}_{c(\theta)\le a_n}\mathbf(v(\theta)-a) \,\mathrm dG_n(\theta) +\e v(\overline{\theta})\\
    &= \lim u_R(a_n,G_n) +\e v(\overline{\theta}) 
\end{align*}
where the second equality holds because $G_n\rightarrow G$ and the integrand is continuous, the second inequality holds because $G_n(c^{-1}(a_n))\ge G(c^{-1}(a))-\e$, the third inequality holds because the integrand is weakly positive, and the last equality holds because $a_n\rightarrow a$.
Given that $\e$ is arbitrary, \eqref{eq:liminf} follows.

It remains to verify \autoref{Assumption-Continuity}\ref{a:continuity_2}. Given $\e, \delta > 0$, we construct two functions $\underline{t}: \Delta (\Theta) \to \Delta (\Theta)$ and $\overline{t}: \Delta (\Theta) \to \Delta (\Theta)$ that satisfy \autoref{Assumption-Continuity}\ref{a:continuity_2}.
Define the function $\underline{t}:\Delta(\Theta)\rightarrow \Delta(\Theta)$ by \[\underline{t}(G)(x) = \frac{G(x)}{G(c^{-1}(\underline{a})) + 1 - \kappa} \mathbf{1}_{x \le c^{-1}(\underline{a}(G))} + \frac{\kappa G(c^{-1}(\underline{a})) + (1-\kappa) G(x)}{G(c^{-1}(\underline{a})) + 1-\kappa} \mathbf{1}_{x > c^{-1}(\underline{a}(G))},\]
where $\kappa \in (0,1)$ is small enough such that $\frac{\mathrm{d}\overline{t}(G)}{\mathrm{d}G} < 1+ \e$. It is not difficult to see that $\underline{t}$ is measurable; we next show that for every segment $G$, $\underline{a}(G)$ is the unique optimal price in segment $\underline{t}(G)$. For any $a' < \underline{a}(G)$,
\begin{align*}
    & \int_{\uv}^{c^{-1}(\underline{a})} (v(\theta) - \underline{a}) \, \mathrm{d}\underline{t}(G)(\theta) = \frac{\int_{\uv}^{c^{-1}(\underline{a})} (v(\theta) - \underline{a}) \, \mathrm{d}G(\theta)}{G(c^{-1}(\underline{a})) + 1-\kappa} \\ > ~ & \frac{\int_{\uv}^{c^{-1}(a')} (v(\theta) - a') \, \mathrm{d}G(\theta)}{G(c^{-1}(\underline{a})) + 1-\kappa} = \int_{\uv}^{c^{-1}(a')} (v(\theta) - a') \, \mathrm{d}\underline{t}(G)(\theta), 
\end{align*}
where the strict inequality holds because $\underline{a}$ is the lowest optimal price for belief $G$. Moreover, for any $a'' > \underline{a}(G)$,
\begin{align*}
    & \int_{\uv}^{c^{-1}(a'')} (v(\theta) - a'') \, \mathrm{d}\underline{t}(G)(\theta) \\ = ~ &
    \frac{1}{G(c^{-1}(\underline{a})) + 1-\kappa} \left[(1-\kappa) \int_{\uv}^{c^{-1}(a'')} (v(\theta) - a'') \, \mathrm{d}G(\theta) + \kappa \int_{\uv}^{c^{-1}(\underline{a})} (v(\theta) - a'') \, \mathrm{d}G(\theta)\right] \\ \le ~ &
    \frac{1}{G(c^{-1}(\underline{a})) + 1-\kappa} \left[(1-\kappa) \int_{\uv}^{c^{-1}(\underline{a})} (v(\theta) - \underline{a}) \, \mathrm{d}G(\theta) + \kappa \int_{\uv}^{c^{-1}(\underline{a})} (v(\theta) - a'') \, \mathrm{d}G(\theta)\right] \\ < ~ &
    \frac{1}{G(c^{-1}(\underline{a})) + 1-\kappa} \int_{\uv}^{c^{-1}(\underline{a})} (v(\theta) - \underline{a}) \, \mathrm{d}G(\theta) = \int_{\uv}^{c^{-1}(\underline{a})} (v(\theta) - \underline{a}) \, \mathrm{d}\underline{t}(G)(\theta),
\end{align*}
where the weak inequality holds because $\underline{a}$ is an optimal price for belief $G$, and the strict inequality follows since $a'' > \underline{a}(G)$. Thus, the function $\underline{t}$ has the desired properties.

Define the function $\overline{t}:\Delta(\Theta)\rightarrow \Delta(\Theta)$ by
\[\overline{t}(G)(x):= \max\left\{0, \frac{G(x)-\kappa \eta(G)}{1-\kappa \eta(G)}\right\},\]
where $\eta(G)$ solves 
    \begin{align}\label{eq:eta2}
             \frac{\eta(G)}{1-\eta(G)} 2v(\overline{\theta})= \max_{a\in [\overline{a}(G), \overline{a}(G)+\delta/2]} u_R(a,G) -\max_{a\ge \overline{a}(G)+\delta } u_R(a,G),         
    \end{align}
    and $\kappa\in(0,1)$ is a small number so that $\frac{d\overline{t}(G)}{dG} \le 1+\e$. Note that $\eta(G)$ is well-defined and takes values in $(0,1)$ because the right-hand side is strictly positive and bounded. Because $u_R(a,G) = \int (v(\theta)-a)\,\mathrm{d}G_a(\theta)$, where $G_a(\theta) := \min\{G(\theta), G(c^{-1}(a))\}$, is a Carathéodory function, and both $\Gamma_1(G) := [\overline{a}(G), \overline{a}(G)+\delta/2]$ and $\Gamma_2(G) := [\overline{a}(G)+ \delta,\ov]$ are weakly measurable correspondences, $\eta$ is measurable by the measurable maximum theorem. Thus, $\overline{t}$ is also measurable. 

We verify that any optimal action in segment $\overline{t}(G)$ is within $\delta$ of $\overline{a}(G)$: First, 
\begin{align*}
    \max_{a\in[\overline{a}(G), \overline{a}(G)+\delta/2]} \int_{\uv}^{c^{-1}(a)}(v(\theta) - a)\, \mathrm{d}\overline{t}(G)(\theta)
    & \ge \max_{a\in[\overline{a}(G), \overline{a}(G)+\delta/2]} \int_{\uv}^{c^{-1}(a)}(v(\theta) - a)\, \mathrm{d} G(\theta)  \\
    & >  \max_{a \ge \overline{a}(G)+\delta} \int_{\uv}^{c^{-1}(a)}(v(\theta) - a)\, \mathrm{d} G(\theta) +  \frac{ \eta(G)}{1- \eta(G)} v(\overline{\theta})\\
    & \ge \max_{a \ge \overline{a}(G)+\delta} \int_{\uv}^{c^{-1}(a)}(v(\theta) - a)\, \mathrm{d} \overline{t}(G)(\theta),
\end{align*}
where the first inequality follows because $\overline{t}(G)$ first-order stochastically dominates $G$, the strict inequality follows from \eqref{eq:eta2}, and the last inequality holds as long as $\kappa$ is small enough.
Second, for any $a < \overline{a}(G)$, we have
\begin{align*}
    & \int_{\uv}^{c^{-1}(\overline{a}(G))} (v(\theta) - \overline{a}(G)) \, \mathrm{d}\left(\frac{G(\theta) - \kappa \eta(G)}{1 - \kappa \eta(G)}\right) \\ \ge ~ & \int_{\uv}^{c^{-1}(a)} (v(\theta) - a) \, \mathrm{d}G(\theta) - \frac{\kappa \eta(G)}{1 - \kappa \eta(G)} \int_{\uv}^{c^{-1}(a)} (v(\theta) - \overline{a}(G)) \, \mathrm{d}\left(1-G(\theta)\right) \\
    \ge ~ & \int_{\uv}^{c^{-1}(a)} (v(\theta) - a) \, \mathrm{d}\left(\frac{G(\theta) - \kappa \eta(G)}{1 - \kappa \eta(G)}\right) - (\overline{a}(G) - a) G(c^{-1}(a)) \\ >~ & \int_{\uv}^{c^{-1}(a)} (v(\theta) - a) \, \mathrm{d}\left(\frac{G(\theta) - \kappa \eta(G)}{1 - \kappa \eta(G)}\right).
\end{align*}
Hence, no price that has distance at least $\delta$ from $\overline{a}(G)$ can be optimal when the belief is $\overline{t}(G)$.
Thus, the function $\overline{t}$ has the desired properties.
\end{proof}

\subsection{Proof of \autoref{Proposition-Auctions}}

\begin{proof}[\unskip\nopunct]
With two or more buyers engaged in Bertrand competition for the asset, every equilibrium involves  the highest offer for the asset equaling its expected value conditional on disclosure. Consider any belief $G$ of the receiver induced in an equilibrium, and define $\ov_G := \max \supp G$. We argue that the expected value of the asset under belief $G$, $\E_G[v(\theta)]$, must satisfy $v(\ov_G) - \e \le \E_G[v(\theta)] \le v(\ov_G)$ for every $\e > 0$. The first inequality holds because otherwise some seller type that is inducing belief $G$ would find it profitable to deviate to sending the fully revealing message. The second inequality holds by definition. Given that $\e$ is arbitrary, we obtain $\E_G[v(\theta)] = v(\ov_G)$. As $v$ is strictly increasing, $G$ must then assign probability 1 to type $\theta_G$. Therefore, every equilibrium belief is degenerate, which implies that every equilibrium is outcome equivalent to a fully revealing equilibrium.
\end{proof}

\newpage

\section{Supplementary Appendix}

This appendix is organized as follows:
\begin{itemize}[noitemsep]
    \item \Cref{Section-IntuitiveCriterion} proves that equilibria we consider satisfy the Intuitive Criterion. 
    \item \Cref{Section-Dye} proves robustness of our main result in principal-agent settings in which the sender lacks evidence with positive probability (``Dye evidence''). 
    \item \Cref{Section-Failure} describes how the main conclusion of \Cref{Theorem-MainResult} fails when any one of \Cref{Assumption-NoRents,Assumption-WorstCase,Assumption-Continuity} is dropped.
    \item \Cref{Section-InsuranceProofs} proves \Cref{Proposition-Insurance}.
    \item \Cref{Section-VB} describes our application to policy negotiations. 

\end{itemize}

\subsection{Intuitive Criterion}\label{Section-IntuitiveCriterion}
	\begin{theorem}\label{Theorem-IntuitiveCriterion}
		Every equilibrium of the game survives the Intuitive Criterion.
	\end{theorem} 

    \begin{proof}
        Fix an arbitrary equilibrium and let $v_S^*(\theta)$ denote type $\theta$ sender's interim payoff in this equilibrium. For a belief $\mu$, denote by $BR(\mu) \coloneqq \argmax_{a \in A} \int u_R(a,\theta)\, \mathrm{d}\mu(\theta)$ the receiver's best response correspondence. For any measurable $I\subseteq \Theta$, define $BR(I)$ as the set of responses that are optimal for the receiver under some beliefs whose support is contained in $I$: $BR(I) \coloneqq \bigcup_{\mu: \supp \mu \subseteq I} BR(\mu)$. Finally, for any off-path message $m$, let
        \[S(m) \coloneqq \left\{\theta \in m: v_S^*(\theta) > \max_{a \in BR(m)} u_S(a,\theta)\right\};\]
        that is, $S(m)$ consists of the types in $m$ whose equilibrium payoff exceeds the highest payoff attainable when the receiver's belief is concentrated on $m$.

        For any off-path message $m$ with $S(m) \ne m$, and any $\theta \in m \setminus S(m)$, we have
        \[v^*_S(\theta) \ge u_S(\underline{a}(\theta),\theta) \ge \min_{a \in BR(m \setminus S(m))} u_S(a,\theta),\]
        where the first inequality follows from the fact that every type gets at least her complete-information payoff in any equilibrium, and the second inequality holds because the Dirac measure on $\theta$, $\delta_{\theta}$, satisfies $BR(\delta_{\theta}) \subseteq BR(m \setminus S(m))$.
        This implies that the equilibrium $((\rho, \tau), \mu)$ survives the Intuitive Criterion. 
    \end{proof}

\subsection{Robustness to Dye Evidence}\label{Section-Dye}

Herein, we restrict attention to principal-agent settings corresponding to \Cref{Example-PrincipalAgent}. We perturb our game as follows. With probability $\zeta \in (0,1)$, the sender does not possess any evidence to disclose regardless of her type. Therefore, for every type $\theta$, the sender's message space is $\mathcal{M}(\theta)$ with probability $(1-\zeta)$ and is $\{\Theta\}$ with complementary probability. Denote this perturbed game by $\Upsilon^\zeta$. In this perturbed game, with probability $\zeta$, the sender cannot send any evidence and is forced to use the fully-concealing message $\Theta$. This kind of evidence structure is in the spirit of \cite{dye1985disclosure}. Our analysis evaluates the degree to which equilibrium payoffs are robust to the sender lacking evidence to disclose with small probability.

\begin{definition}
	A payoff profile of the original game $\Upsilon^0$ is \textbf{Dye-supportable} if it is a limit point of payoff profiles of equilibria of $\Upsilon^\zeta$ as $\zeta \to 0$. 
\end{definition}

A Dye-supportable payoff profile is one that is supported by, at the limit, equilibria of disclosure games that have the Dye evidence structure. We show below that all achievable payoffs of the principal-agent setting of \Cref{Example-PrincipalAgent} are Dye-supportable, under one additional assumption listed below.  

\begin{assumption}\label{Assumption-PADye}
The following hold:
    \begin{enumerate}[label=(\alph*),noitemsep]
        \item The agent's WTP for every alternative $x \neq x_0$, namely $v_S(x,\theta)-v_S(x_0,\theta)$, strictly increases in her type $\theta$.
        \item The outside option $x_0$ does not maximize $v_S(x,\overline\theta)+v_R(x,\overline\theta)$.
    \end{enumerate}
\end{assumption}
\Cref{Assumption-PADye} is a mild condition. Part (a) holds whenever $X$ is partially ordered with $x_0$ as the lowest alternative and $v_S(x,\theta)$ satisfies strict increasing differences. Part (b) reflects some gains from trade: namely, that the receiver would not choose the outside option if he believed that the sender is type $\overline\theta$ with probability $1$. We observe that both (a) and (b) are satisfied by all three applications considered in \Cref{Section-Applications}. 

\begin{proposition}
Consider a principal-agent setting that satisfies  \Cref{Assumption-Continuity,Assumption-PADye}. Then every achievable payoff profile $(u_S^*,u_R^*)$ is Dye-supportable.
\end{proposition}

\begin{proof}
   	As we argued in the main text, both \Cref{Assumption-NoRents,Assumption-WorstCase} hold in this setting. In particular, every message $m$ contains a worst-case type $\hat{\theta}_m \equiv \max_{\theta \in m} \theta$, and the worst-case type of the fully-concealing message $\Theta$ is $\hat{\theta}_{\Theta} \equiv \overline{\theta}$. 
    
    Fix an achievable payoff profile $(u_S^*, u_R^*)$. For any $\e>0$, \Cref{Theorem-MainResult} delivers an equilibrium $e$ of the original game $\Upsilon^0$ whose induced finite partitional segmentation achieves payoffs within $\e$ of $(u_S^*,u_R^*)$.

	For any $\zeta\in (0,1)$, choose $\xi:=\xi(\zeta)>0$ so that $F\big((\overline\theta-\xi,\overline\theta]\big)=\sqrt{\zeta}$, and define the ``top'' interval $I_\zeta := (\overline\theta-\xi,\overline\theta]$.
	We consider the following strategy profile in the perturbed game $\Upsilon^\zeta$. When the agent has access to all messages in $\mathcal{M}(\theta)$, we prescribe:
	\begin{itemize}
		\item if $\theta\in I_\zeta$, she sends $\Theta$;
		\item if $\theta\notin I_\zeta$, she sends the truncated on-path message $\tilde m(\theta):=
		m(\theta)\cap[\underline\theta,\overline\theta-\xi]$, where $m(\theta)$ is the on-path message sent by type $\theta$ in the fixed equilibrium of the original game.
	\end{itemize}
	We set the principal's response after any off-path message $m$ equal to $\underline{a}(\hat\theta_m)$; after any on-path message, the principal updates using Bayes' rule and best responds. 
	
	Below, we show that the strategy profile described above is an equilibrium of $\Upsilon^\zeta$ for all $\zeta$ sufficiently small. We then show that the sequence of associated equilibrium payoff profiles converges as $\zeta\rightarrow 0$ to the payoff profile induced by equilibrium $e$. Since $\e$ was arbitrary, $(u_S^*,u_R^*)$ is Dye-supportable.\footnote{Formally, take a sequence of equilibria $(e_n)$ whose payoffs converge to $(u_S^*,u_R^*)$. For each $e_n$, there is a sequence of equilibria $e_n^k$ of the perturbed games $\Upsilon^{1/k}$ whose payoffs converge to the payoff of $e_n$. Then the payoffs of the sequence of equilibria $e_n^{n}$ of the perturbed games $\Upsilon^{1/n}$ converges to $(u_S^*,u_R^*)$.}
	\bigskip
	
	\noindent\textbf{Step 1:} We pin down the principal's beliefs upon observing $\Theta$ and his best response as $\zeta\rightarrow 0$.\medskip
	
	\noindent Under the strategy defined above, for every Borel subset $B$ of $\Theta$, Bayes' rule pins down the principal's posterior belief after observing $\Theta$ as
	\begin{equation}\label{eq:posteriorTheta}
		\mu^\zeta_{\Theta}(B)
		=
		\frac{\int_B\big(\zeta+(1-\zeta)\mathbf{1}_{\theta\in I_\zeta}\big)\,
        \mathrm{d}F(\theta)}
		{\int_{\Theta}\big(\zeta+(1-\zeta)\mathbf{1}_{\theta\in I_\zeta}\big)\, \mathrm{d}F(\theta)}
		=
		\frac{\zeta F(B)+(1-\zeta)F(B\cap I_\zeta)}{\zeta+(1-\zeta)F(I_\zeta)}.
	\end{equation}
	Since $F(I_\zeta)=\sqrt{\zeta}$, the weight on $I_\zeta$ in \eqref{eq:posteriorTheta} converges
	to $1$ as $\zeta\to 0$, and $\xi(\zeta)\to 0$; hence $\mu^\zeta_{\Theta} \to \delta_{\overline\theta}$ as $\zeta\to 0$. Then by \Cref{Assumption-Continuity}\ref{a:continuity_1}, for all sufficiently small $\zeta$, every best response to $\mu^\zeta_{\Theta}$ lies in an arbitrarily small Hausdorff-neighborhood	of $a^*(\delta_{\overline\theta})$.
\bigskip
	
	\noindent\textbf{Step 2:} We show that the strategy profile described above is an equilibrium of $\Upsilon^\zeta$ for all sufficiently small $\zeta$.
	\medskip
	
	\noindent As in \Cref{Lemma-FinitePartitionalEqm}, no type profits from deviating to an off-path message, and since $\sigma$
	is finite partitional, almost no type has a profitable deviation to a different on-path message other than $\Theta$. Thus, it suffices to show that no type $\theta\in[\underline\theta,\overline\theta-\xi]$ finds it profitable to deviate to $\Theta$. Let $X^*(\overline\theta)\subseteq X$ denote the set of efficient allocations for type $\overline\theta$, i.e.,
	\[X^*(\overline\theta):=\argmax_{x\in X}\big(v_S(x,\overline\theta)+v_R(x,\overline\theta)\big),\]
	and for $x_{\overline\theta}^*\in X^*(\overline\theta)$ let $t(x_{\overline\theta}^*):=v_S(x_{\overline\theta}^*,\overline\theta)-v_S(x_0,\overline\theta)$. 
    By \autoref{Assumption-PADye},
		\[v_S(x_{\overline\theta}^*,\theta)-v_S(x_0,\theta) < v_S(x_{\overline\theta}^*,\overline\theta)-v_S(x_0,\overline\theta)=t(x_{\overline\theta}^*)\]
		for any $\theta \in [\underline\theta,\overline\theta-\xi]$, which implies that $v_S(x_{\overline\theta}^*,\theta)-t(x_{\overline\theta}^*) < v_S(x_0,\theta)$ for all such types. Moreover, since $X^*(\overline\theta)$ is compact and $v_S(\cdot,\theta)$ is continuous for every $\theta$, we can define
		\[Y(\xi):=\min_{x\in X^*(\overline\theta)}\Big(\big[v_S(x,\overline\theta)-v_S(x_0,\overline\theta)\big]-\big[v_S(x,\overline\theta-\xi)-v_S(x_0,\overline\theta-\xi)\big]\Big)>0;\]
	then for every $x\in X^*(\overline\theta)$ and every $\theta\in[\underline\theta,\overline\theta-\xi]$, $v_S(x,\theta)-t(x)\le v_S(x_0,\theta)-Y(\xi)$.

	Now take any menu $a\in a^*(\delta_{\overline\theta})$, and let $(x_a,t_a)\in a$ be the option selected by type $\overline\theta$ (ties are broken in the principal's favor). Optimality under belief $\delta_{\overline\theta}$ implies $x_a\in X^*(\overline\theta)$ and $v_S(x_a,\overline\theta)-t_a=v_S(x_0,\overline\theta)$, hence $t_a=t(x_a)$. Therefore, any type in $[\underline\theta,\overline\theta-\xi]$ strictly prefers the outside option to $(x_a,t_a)$ by at least $Y(\xi)$, and the same inequality holds for any other $(x,t)\in a$ because type-$\overline\theta$ agent's payoff from $(x,t)$ is at most $\overline\theta$'s outside option payoff. Consequently, for every $a\in a^*(\delta_{\overline\theta})$ and every $\theta\in[\underline\theta,\overline\theta-\xi]$, the outside option is the unique optimal choice from $a$.

	By continuity of the agent's payoff in the menu and the definition of $Y(\xi)$, there exists a Hausdorff-neighborhood $\mathcal E$ of $a^*(\delta_{\overline\theta})$ such that for every menu $a\in\mathcal E$ and every $\theta\in[\underline\theta,\overline\theta-\xi]$, the agent still strictly prefers the outside option and hence obtains exactly $v_S(x_0,\theta)$. Because $a^*(\cdot)$ is upper hemicontinuous by \Cref{Assumption-Continuity}\ref{a:continuity_1} and $\mu^\zeta_{\Theta} \to \delta_{\overline\theta}$, for all sufficiently small $\zeta$ we have $a^*(\mu^\zeta_{\Theta})\subseteq \mathcal E$. Therefore, if the principal best responds to $\Theta$ by choosing any $a^\zeta(\Theta)\in a^*(\mu^\zeta_{\Theta})$, then for every $\theta\in[\underline\theta,\overline\theta-\xi]$, $u_S(a^\zeta(\Theta),\theta)=v_S(x_0,\theta)$.
	Since \Cref{Assumption-NoRents} guarantees that each such type's payoff from using the on-path message is at least $v_S(x_0,\theta)$, no type in $[\underline\theta,\overline\theta-\xi]$ can profitably deviate to $\Theta$. This establishes that the strategy profile defined above is an equilibrium of $\Upsilon^\zeta$ for all sufficiently small $\zeta$.
	\bigskip

    \noindent \textbf{Step 3:} We show that as $\zeta \to 0$, the sequence of payoff profiles associated with the equilibria of $\Upsilon^\zeta$ constructed above converges to the payoff profile of equilibrium $e$.  \medskip

    \noindent We decompose payoffs by considering contributions from types in $I_\zeta$ and types in $[\underline\theta, \overline\theta - \xi]$ separately.

For types in $I_\zeta$, since sender and receiver payoffs are bounded and $F(I_\zeta) = \sqrt{\zeta} \to 0$, the contribution of these types to expected payoffs vanishes as $\zeta \to 0$.

For types in $[\underline\theta, \overline\theta - \xi]$: as $\zeta \to 0$, we have $\xi(\zeta) \to 0$, so the truncated messages $\tilde{m}(\theta) = m(\theta) \cap [\underline\theta, \overline\theta - \xi]$ converge to the original messages $m(\theta)$. Consequently, the segment induced by each truncated on-path message converges to the corresponding segment in equilibrium $e$. By \Cref{Assumption-Continuity}\ref{a:continuity_1}, the receiver's expected payoff from best responding is continuous in beliefs, so the receiver's expected payoff from these types converges to that in equilibrium $e$. By upper hemicontinuity of $a^*(\cdot)$ and compactness of $A$, any sequence of receiver's best responses to the truncated segments converges to a best response in the original segments. Since $u_S(\cdot, \theta)$ is continuous for each $\theta$ and all best responses to a given segment are close-by (by the construction of equilibrium $e$ in \Cref{Lemma-FinitePartitionalSuff} using \Cref{Assumption-Continuity}\ref{a:continuity_2}), the sender's expected payoff from these types also converges to that in equilibrium $e$.

Consequently, the payoff profiles of the constructed equilibria of $\Upsilon^\zeta$ converge to that of equilibrium $e$ as $\zeta \to 0$.
\end{proof}

\subsection{What if the Key Assumptions Fail?}\label{Section-Failure}
Herein, we show that our central conclusion does not hold if any assumption is dropped (while maintaining the other two assumptions).

\begin{example}[\Cref{Assumption-NoRents} fails]
    Consider the following stylized version of \citeauthor{grossman:81-JLE}'s and \citeauthor{milgrom1981good}'s model. The sender's type is uniformly distributed on $\Theta=[0,1]$ and the receiver chooses an action $a$ in $[0,1]$. The sender's motives are transparent in that her payoffs $u_S(a,\theta)$ do not vary with $\theta$ but are strictly increasing and strictly concave in $a$. The receiver would like to match the action with the sender's type and his payoff is $u_R(a, \theta) = -(a-\theta)^2$. \autoref{Assumption-NoRents} fails in this game: the complete information payoff for the sender of type $\theta=1$ is higher than her payoff if the receiver's beliefs equal the prior.

    Here, the unique equilibrium outcome coincides with full revelation. However, given strict concavity, the best achievable payoff for the sender comes from the receiver obtaining no information and choosing action $1/2$. 
\end{example}

\begin{example}[\Cref{Assumption-WorstCase} fails]
    Suppose $A= \{1, 2\}$, the sender's type $\theta$ is uniformly distributed on $\Theta=[0,1]$, the receiver's payoffs are $u_R(1,\theta) = 1-\theta$ and $u_R(2,\theta) = \theta$, and the sender’s payoffs are 
    $u_S(1,\theta) = \left(\frac{1}{2}\right)\mathbf{1}_{\theta = \frac{1}{2}}+  \mathbf{1}_{\theta > \frac{1}{2}}$ 
    and $u_S(2,\theta) = \frac{1}{2}$. \Cref{Assumption-WorstCase} fails in this game, as we show. Observe that the complete-information payoff is $0$ for types strictly below $\frac{1}{2}$ and $\frac{1}{2}$ for all other types. We argue that the message $m=\Theta$ lacks a worst-case type. If the putative ``worst-case'' type $\hat\theta_m$ were assigned to be strictly below $\frac{1}{2}$, then types above $\frac{1}{2}$ accrue more than their complete-information payoff; were $\hat\theta_m$ assigned to be above $\frac{1}{2}$, then types strictly below $\frac{1}{2}$ do better than their complete-information payoff.

    This failure has implications for the equilibrium outcomes of the disclosure game. For instance, no equilibrium supports a payoff profile near that of the fully revealing experiment, $\left(\frac{1}{4},\frac{3}{4}\right)$. Were there such an equilibrium, action $2$ would have to be played with high probability whenever the type is strictly above $\frac{1}{2}$ and action $1$ would have to be played with high probability whenever the type is strictly below $\frac{1}{2}$. However, following the message $m=\Theta$, either action $1$ is played with probability at least $\frac{1}{2}$ and types above $\frac{1}{2}$ could profitably deviate to this message, or action $2$ is played with probability at least $\frac{1}{2}$, in which case types below $\frac{1}{2}$ could profitably deviate. 
\end{example}

\begin{example}[\autoref{Assumption-Continuity} fails]
    Suppose there are $n$ types, $\{\theta_1,\ldots, \theta_n\}$ and $n+1$ actions, $A=\{0,1,\ldots,n\}$.\footnote{For simplicity, we assume that the prior $F$ is supported on a finite set, but this example can be extended to a continuum setting.} The sender's payoff is $u_S(a,\theta)=\mathbf{1}_{a=0}$ and the receiver's payoff is $u_R(a,\theta)=\mathbf{1}_{a=0}+n\mathbf{1}_{\theta=\theta_a}$. 
\autoref{Assumption-Continuity}\ref{a:continuity_2} fails here: every action is optimal under a uniform belief but whenever the belief is not uniform, action 0 is not optimal.

     Given this failure, there are achievable payoff profiles that cannot be (approximately) supported in the disclosure game. Consider a prior that is a convex combination of a uniform belief, with weight $\alpha$, and a point mass at $\theta_1$, with weight $(1-\alpha)$. A segmentation with two segments, one of which is uniform, achieves a payoff of $\alpha$ to the sender. We argue that in every equilibrium, however, the sender's payoff is $0$. To see why, observe that following any message $m\neq \Theta$, the receiver would not choose action $0$. Thus, the only prospect for a strictly positive payoff for the sender is if the receiver played action $0$ with positive probability following the message $m=\Theta$. That cannot happen in an equilibrium: were the receiver to do so, every sender-type would send this message with probability $1$, which would make action $0$ a sub-optimal choice for the receiver. 
\end{example}       
\subsection{Proof of \autoref{Proposition-Insurance} (Insurance Markets)}\label{Section-InsuranceProofs}

\begin{lemma}\label{l:compactness}
    The set $A$ defined in \eqref{eq:indirect_utility_insurance} is compact.
\end{lemma}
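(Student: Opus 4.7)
The plan is to prove compactness by invoking the Arzelà–Ascoli theorem. Since $\mathscr{C}(\Theta)$ is equipped with the sup-norm and $\Theta$ is compact, it suffices to verify that $A$ is (i) uniformly bounded, (ii) equicontinuous, and (iii) closed.

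Uniform boundedness and equicontinuity are essentially free from the $D_0$-Lipschitz restriction. Because every $a\in A$ is $D_0$-Lipschitz, the family $A$ is equi-Lipschitz and hence equicontinuous. Combining this with the boundary condition $a(\underline{\theta})=\underline{\theta} v(w-\ell)+(1-\underline{\theta})v(w)$, which pins down $a(\underline{\theta})$ at a single value independent of $a\in A$, one obtains the uniform bound $|a(\theta)|\le |a(\underline{\theta})|+D_0(\overline{\theta}-\underline{\theta})$ for all $\theta\in\Theta$ and all $a\in A$. This gives precompactness of $A$ in $\mathscr{C}(\Theta)$.

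For closedness, I would take a sequence $\{a_n\}\subseteq A$ converging uniformly to some $a\in\mathscr{C}(\Theta)$ and check that $a$ inherits each defining property. The boundary condition at $\underline{\theta}$ passes to the limit by continuity of evaluation. The properties of being decreasing, convex, and $D_0$-Lipschitz are each described by pointwise inequalities ($a(\theta')\le a(\theta)$ for $\theta'\ge\theta$; $a(\lambda\theta+(1-\lambda)\theta')\le \lambda a(\theta)+(1-\lambda)a(\theta')$; $|a(\theta)-a(\theta')|\le D_0|\theta-\theta'|$), all of which are preserved under pointwise (and hence uniform) limits. Thus $a\in A$, and $A$ is closed.

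I do not foresee a substantive obstacle here; the result is a routine Arzelà–Ascoli application. The only point requiring a moment's care is verifying that every one of the shape constraints passes to uniform limits, but each reduces to a pointwise inequality and is handled identically. Putting the three pieces together yields that $A$ is a closed, uniformly bounded, equicontinuous subset of $\mathscr{C}(\Theta)$, hence compact.
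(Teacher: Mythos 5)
Your proof is correct and follows essentially the same route as the paper's: both invoke the Arzelà–Ascoli theorem using the uniform bound and $D_0$-Lipschitz equicontinuity, and then verify that the limit retains membership in $A$. The paper compresses the closedness step to ``clearly $a\in A$,'' whereas you spell out that each defining constraint is a pointwise inequality preserved under uniform limits; the substance is identical.
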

\begin{proof}
    Let $\{a_n\}_{n \in \mathbb{N}}$ be a sequence in $A$. Since $a_n$ is uniformly bounded and $D_0-$Lipschitz continuous for each $n$, Arzela-Ascoli's theorem implies there is a subsequence, also denoted by $\{a_n\}$, that converges in the supremum-norm to some $a$. Clearly, $a\in A$. 
\end{proof}

\begin{lemma}\label{l:payoff_continuity_insurance}
    For all $\theta$, $u_S(a,\theta)$ is continuous in $a$.
    Also, $u_R(a,G)$ is upper semicontinuous in $(a,G)$.
\end{lemma}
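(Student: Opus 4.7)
The plan is to verify the two assertions in order; the first is immediate, while the second reduces, via classical outer semicontinuity of the subdifferential of convex functions, to joint upper semicontinuity of $u_R(a,\theta)$ on $A\times\Theta$, which then lifts to $(a,G)$ by approximation.

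\textbf{Step 1 (Continuity of $u_S$).} Since $A$ is equipped with the sup-norm, $a_n\to a$ implies $a_n(\theta)\to a(\theta)$ pointwise, so $u_S(a_n,\theta)\to u_S(a,\theta)$.

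\textbf{Step 2 (Joint USC of $u_R(a,\theta)$ on $A\times\Theta$).} Fix $(a_n,\theta_n)\to(a,\theta)$. For each $n$, compactness of $\partial a_n(\theta_n)\cap[0,D_0]$ and continuity of $u(\cdot,a_n,\theta_n)$ yield a maximizer $D_n$; extract a subsequence along which $\limsup u_R(a_n,\theta_n)$ is attained and $D_n\to D^*\in[0,D_0]$. The crux is to show $D^*\in\partial a(\theta)$. Using the sign convention $a'(\theta)=-D(\theta)$, the subgradient inequality reads $a_n(\theta')\ge a_n(\theta_n)-D_n(\theta'-\theta_n)$ for every $\theta'\in\Theta$; uniform convergence $a_n\to a$ together with continuity of the $a$'s (giving $a_n(\theta_n)\to a(\theta)$) allows passing to the limit to obtain $a(\theta')\ge a(\theta)-D^*(\theta'-\theta)$, i.e., $D^*\in\partial a(\theta)$. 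Joint continuity of $u(D,a,\theta)$ follows from continuity of $v^{-1}$ on the range of $v$ and $|a_n(\theta_n)-a(\theta)|\to 0$. Hence $\limsup u_R(a_n,\theta_n)=u(D^*,a,\theta)\le u_R(a,\theta)$.

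\textbf{Step 3 (Lifting to $u_R(a,G)$).} By Step 2, $u_R$ is bounded and upper semicontinuous on the compact metric space $A\times\Theta$, so there is a decreasing sequence of continuous functions $g_k:A\times\Theta\to\R$ with $g_k\downarrow u_R$ pointwise (a standard Baire approximation). For each $k$, the map $(a,G)\mapsto\int g_k(a,\theta)\,\mathrm dG(\theta)$ is continuous: $g_k$ is uniformly continuous on the compact $A\times\Theta$, so $\sup_\theta|g_k(a_n,\theta)-g_k(a,\theta)|\to 0$, and weak$^*$ convergence $G_n\Rightarrow G$ handles the integration against the fixed function $g_k(a,\cdot)$. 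Dominated convergence then gives $u_R(a,G)=\inf_k\int g_k(a,\theta)\,\mathrm dG(\theta)$, an infimum of continuous functions, hence upper semicontinuous in $(a,G)$.

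\textbf{Main obstacle.} The technical heart is Step 2, specifically the closed-graph property of the subdifferential under uniform convergence of convex functions and the reconciliation of the paper's sign convention (where $\partial a(\theta)$ effectively parametrizes $-a'(\theta)\in[0,D_0]$). Everything else is routine: sup-norm continuity of evaluation, Baire approximation of bounded USC functions on compact metric spaces, and monotone/dominated convergence.
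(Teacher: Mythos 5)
Your proposal is correct and follows the same three-part decomposition as the paper: trivial continuity of $u_S$, joint upper semicontinuity of $u_R(a,\theta)$ on $A\times\Theta$ via the closed-graph (outer semicontinuity) property of the subdifferential, and then a lifting step from $(a,\theta)$ to $(a,G)$. For the middle step you essentially re-prove the upper hemicontinuity of $(a,\theta)\mapsto\partial a(\theta)$ and the resulting Berge-type maximum theorem by hand (subsequence extraction of maximizers $D_n\to D^*$, then passing the subgradient inequality to the limit), whereas the paper invokes Theorem D.6.2.7 of Hiriart-Urruty and Lemar\'echal together with Lemma 17.30 of Aliprantis and Border; same content, yours is just more self-contained. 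The genuine difference is the lifting step: the paper applies Skorokhod's representation theorem to get almost-surely convergent representatives of $G_n\Rightarrow G$ and then uses reverse Fatou, while you use a Baire/Moreau--Yosida-type approximation of the bounded usc function $u_R$ on the compact $A\times\Theta$ by a decreasing sequence of continuous $g_k$ and write $u_R(a,G)=\inf_k\int g_k(a,\theta)\,\mathrm dG(\theta)$, an infimum of jointly continuous functions. Your route avoids probabilistic machinery and extracts the usc directly from the lattice structure of continuous approximants; the paper's route is shorter given the Skorokhod citation, but yours is arguably more transparent about why compactness of $A\times\Theta$ and boundedness of $u_R$ are exactly what make the lifting work. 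Both are standard and rigorous.
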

\begin{proof}
    If $a_n\rightarrow a$ then $u_S(a_n,\theta)=a_n(\theta)\rightarrow a(\theta)=u_S(a,\theta)$ and the first claim follows. 
    
    For the second claim, we show first that $u_R(a,\theta)$ is upper semicontinuous in $(a,\theta)$. Note that if $a_n\rightarrow a$, $\theta_n\rightarrow \theta$, and $\e>0$ then $\partial a_n(\theta_n)\subseteq B_{\e}(\partial a(\theta))$ for all $n$ large enough \citep[Theorem D.6.2.7 in][]{hiriart2004fundamentals}. Hence, $(a,\theta)\mapsto \partial a(\theta)$ is upper hemicontinuous. Since $u(D,a,\theta)$
    is continuous in $(D,a,\theta)$, a maximum theorem \citep[see Lemma 17.30 in][]{aliprantis2006infinite} implies that $u_R(a,\theta)$ is upper semicontinuous in $(a,\theta)$.

    Now consider $a_n\rightarrow a$ and $G_n\rightarrow G$. By Theorem 25.6 in \cite*{billingsley} there are $Y_n$ and $Y$ on a common probability space $(\Omega,\mathcal F,P)$ with distributions $G_n$ and $G$ such that $Y_n(\omega)\rightarrow Y(\omega)$ for all $\omega$. Then
    \begin{align*}
       &\limsup_{n \to \infty} \int u_R(a_n,\theta) \,\mathrm dG_n(\theta) = \limsup_{n \to \infty} \int u_R(a_n,Y_n(\omega)) \,\mathrm dP(\omega) \\
    \le &\int u_R(a,Y(\omega)) \,\mathrm dP(\omega) =  \int u_R(a,\theta) \,\mathrm dG(\theta) 
    \end{align*}
    where the equalities follow from a change of variables and the inequality follows from (reverse) Fatou's lemma and the fact that $u_R(a,\theta)$ is upper semicontinuous in $(a,\theta)$. We conclude that $u_R(a,G)$ is upper semicontinuous in $(a,G)$.
\end{proof}

\begin{lemma}\label{l:differentiable_approx}
    Given $a\in A$, $G\in\Delta(\Theta)$, and $\e>0$ there is $\Tilde{a} \in A$ that is continuously differentiable such that $\norm{a-\Tilde{a}}_\infty<\e/2$ and $|u_R(\Tilde{a},G)-u_R(a,G)| < \e/2$.
\end{lemma}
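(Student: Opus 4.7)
The core idea is to build $\tilde{a}$ via a one-sided (right-sided) mollification of $a$, followed by a small constant correction to restore the boundary value at $\uv$. A direct computation gives $\partial_D u(D,a,\theta) = \theta(1-\theta)\bigl[(v^{-1})'(a(\theta)-(1-\theta)D) - (v^{-1})'(a(\theta)+\theta D)\bigr] \le 0$ on $[0,D_0]$ (using that $v^{-1}$ is strictly convex, hence $(v^{-1})'$ strictly increasing), so $u$ is strictly decreasing in $D$. Thus the maximizer in the definition of $u_R(a,\theta)$ is the smallest feasible $D$, namely $D^*(\theta) = -a'_+(\theta)$, so $u_R(a,\theta) = u(-a'_+(\theta), a, \theta)$ at every $\theta$, including kinks of $a$. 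Any useful smooth approximation must therefore drive $\tilde{a}'$ to $a'_+$ pointwise, not merely to $a'$ at differentiability points.

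Given a small mollification radius $\delta > 0$, I would first extend $a$ to $[\uv, \ov+\delta]$ by continuing linearly to the right of $\ov$ with slope $a'_-(\ov)$, preserving convexity, monotonicity, and the $D_0$-Lipschitz bound. Pick a smooth right-sided mollifier $\eta_\delta : [0,\delta] \to [0,\infty)$ with $\int \eta_\delta = 1$ that vanishes at the endpoints, and set $\tilde{a}_\delta(\theta) := \int_0^\delta a(\theta+s)\,\eta_\delta(s)\,\mathrm{d}s$ on $[\uv,\ov]$. Standard arguments show $\tilde{a}_\delta$ is $C^\infty$, convex (a convex combination of translates of the convex function $a$), decreasing, and $D_0$-Lipschitz, and $|a(\theta+s)-a(\theta)|\le D_0 s$ yields $\|\tilde{a}_\delta - a\|_\infty \le D_0 \delta$. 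The boundary equality at $\uv$ may fail slightly, so I set $\tilde{a} := \tilde{a}_\delta + (a(\uv) - \tilde{a}_\delta(\uv))$; adding a constant preserves every condition in the definition of $A$ while restoring $\tilde{a}(\uv) = a(\uv)$, and gives $\|\tilde{a} - a\|_\infty \le 2 D_0 \delta$.

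For the payoff convergence, convexity of $a$ forces $a'(\theta+s) \in [a'_+(\theta), a'_+(\theta+\delta)]$ for a.e.\ $s\in (0,\delta]$, which sandwiches $\tilde{a}'(\theta) = \int_0^\delta a'(\theta+s)\eta_\delta(s)\,\mathrm{d}s$ between $a'_+(\theta)$ and $a'_+(\theta+\delta)$. Right-continuity of $a'_+$ for convex $a$ then gives $\tilde{a}'(\theta) \to a'_+(\theta)$ as $\delta \to 0$ at every single $\theta$. Since $\tilde{a}$ is differentiable, $u_R(\tilde{a},\theta) = u(-\tilde{a}'(\theta), \tilde{a}, \theta)$, and joint continuity of $u$ in $(D,a)$ (with $a$ in the sup-norm) delivers pointwise convergence to $u(-a'_+(\theta), a, \theta) = u_R(a,\theta)$. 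Because $u_R$ is uniformly bounded on $A\times \Theta$, the bounded convergence theorem gives $u_R(\tilde{a},G) \to u_R(a,G)$, and a sufficiently small $\delta$ achieves both $\e/2$ bounds.

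I expect the main obstacle to be kinks of $a$ that coincide with atoms of $G$. With a symmetric mollifier, $\tilde{a}'(\theta)$ would converge at a kink to an interior point of $\partial a(\theta)$; since the receiver's optimum sits at the endpoint $-a'_+(\theta)$, the limit payoff there would be strictly below $u_R(a,\theta)$, and at $G$-atoms located on the countable kink set the bounded convergence step could not close the gap. The one-sided (right) mollifier is precisely the device that forces $\tilde{a}'(\theta)$ onto the receiver-preferred endpoint $a'_+(\theta)$ at every $\theta$, which is what allows the pointwise-plus-bounded convergence argument to succeed regardless of the atomic structure of $G$.
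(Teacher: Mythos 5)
Your proof is correct and takes a genuinely different route than the paper. The key structural observation you supply --- that $u(\cdot,a,\theta)$ is strictly decreasing in $D$ on $[0,D_0]$ (a one-line computation using that $(v^{-1})'$ is strictly increasing), so that the insurer's optimal $D$ at any interior $\theta$ is the smallest feasible one, namely $-a'_+(\theta)$ --- is nowhere exploited by the paper, and it is precisely what makes your argument short: a right-sided mollifier is exactly the device that drives $\tilde a'(\theta)$ to $a'_+(\theta)$ pointwise (by right-continuity of the right derivative of a convex function), after which joint continuity of $u$ and bounded convergence finish the job. The paper, by contrast, treats $D^*(\theta)$ as an abstract measurable selection of the argmax, isolates a finite subset $E''$ of kinks that carries most of the $G$-mass on the countable non-differentiability set, pins a piecewise-affine approximant to the supporting hyperplanes of $a$ with slopes $-D^*(\theta_i)$ at those kinks via a $\max$-construction, smooths the result kink by kink, and then invokes Egoroff's theorem to control the remaining kinks. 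Your route is more elementary --- no Egoroff, no finite-kink selection --- at the price of relying on the specific monotonicity-in-$D$ structure; the paper's argument is agnostic about where the maximizer sits in the subdifferential. One boundary caveat worth flagging: at $\theta=\ov$ the subdifferential of $a$ on $[\uv,\ov]$ is the half-line $[a'_-(\ov),+\infty)$, so the insurer's optimal $D$ there is $0$, not $-a'_-(\ov)$; extending $a$ constantly (slope $0$) rather than linearly (slope $a'_-(\ov)$) beyond $\ov$ is the extension that makes your identity $D^*=-a'_+$ hold at the right endpoint as well. Your conclusion survives in either case, because $u_R(\tilde a,\ov)$ and $u_R(a,\ov)$ both equal $u(0,\cdot,\ov)$ and converge regardless of $\tilde a'(\ov)$, but the intermediate equality $u_R(a,\ov)=u(-a'_+(\ov),a,\ov)$ is off as written unless $a'_-(\ov)=0$.
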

\begin{proof}
    Fix $a \in A$; let $E'$ denote the set of $\theta$'s at which $a$ is not differentiable. Because $E'$ is countable, we can find a finite set $E'' = \{\theta_1, \ldots, \theta_M\} \subseteq E'$ such that $G(E' \setminus E'') < \e/(12w)$, where 
    $w$ is an upper bound for $|u_R(a,\theta) - u_R(b,\theta)|$ for any $a, b \in A$.
    Let $D^*(\theta) \in \argmax_{D\in \partial a(\theta)\cap[0,D_0]} u(D,a,\theta)$; and for each $i=1,\ldots,M$, let $\ell_i$ denote the supporting hyperplane of $a$ at $\theta_i$ with slope $D^*(\theta_i)$.

    For any $n\in \mathbb N$, we can find a piecewise affine function $\tilde a_n$ such that $\tilde a_n \in A$ and $\norm{\tilde a_n-a}_\infty < 1/(9n)$. Define $\hat a_n:=\max\{\tilde a_n-1/(9n), \ell_1, \ldots, \ell_M\}+1/(9n)$; it can be shown that $\norm{\hat a_n- \tilde a_n}_\infty < 1/(9n)$, and $\hat a_n \in A$ for $n$ large enough. We can approximate $\hat a_n$ by a differentiable function $a_n$ such that $\norm{a_n-\hat a_n}_{\infty} \le 1/(9n)$ and $|a_n'(\theta)-D^*(\theta)|\le 1/(9n)$ for all $\theta\in E''$ \citep[][Theorem 2.26]{rockafellarwets}.

    Since $a_n$ is convex and differentiable, $a_n\rightarrow a$ 
    implies that $a_n'(\theta) \to a'(\theta)$ for all $\theta \notin E'$. By Egoroff's theorem, there exists $E \subseteq \Theta \setminus E'$ with $G(E) < \e/(12w)$ such that $a'_n \rightarrow a$ uniformly on $\Theta \setminus (E' \cup E)$. Because $|a_n'(\theta) - D^*(\theta)| < 1/(9n)$ for all $\theta \in E''$, $|u_R(a_n,\theta) - u_R(a, \theta)| < \e/3$ for any $\theta \not \in (E' \setminus E'')\cup E$ and all $n$ large enough. Then since $G(E' \setminus E'') < \e/(12w)$ and $G(E) < \e/(12w)$, $|u_R(a_n,G) - u_R(a, G)| < \e/2$ for large enough $n$. 
\end{proof}

\begin{lemma}\label{l:insurance_ass_3i}
    The receiver's optimal payoff is continuous in $G$ and the receiver's optimal actions are upper hemicontinuous in $G$.
\end{lemma}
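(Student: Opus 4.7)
I will establish continuity of $U_R$ by proving upper and lower semicontinuity separately, and then obtain upper hemicontinuity of $a^*$ from a standard closed-graph argument that leverages the upper semicontinuity of $u_R(a,G)$ in $(a,G)$ established in \autoref{l:payoff_continuity_insurance}. Upper semicontinuity of $U_R$ is the easy direction: given $G_n\to G$, pick $a_n\in a^*(G_n)$; by compactness of $A$ (\autoref{l:compactness}), some subsequence $a_n\to a\in A$, and upper semicontinuity of $u_R$ in $(a,G)$ yields $\limsup_n U_R(G_n)=\limsup_n u_R(a_n,G_n)\le u_R(a,G)\le U_R(G)$.

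Lower semicontinuity is the main obstacle. The difficulty is that $u_R(a,\cdot)$ need not be continuous on $\Theta$ for a general $a\in A$: where $a$ has a kink, the selection $D\in\partial a(\theta)\cap[0,D_0]$ can jump, so the direct attempt to ``test against the optimizer'' $a^*\in a^*(G)$ fails. The workaround is \autoref{l:differentiable_approx}. Fix $\e>0$, a sequence $G_n\to G$, and $a^*\in a^*(G)$, and pick a continuously differentiable $\tilde a\in A$ with $|u_R(\tilde a,G)-u_R(a^*,G)|<\e/2$. When $\tilde a$ is $C^1$, $\partial\tilde a(\theta)$ is a singleton depending continuously on $\theta$, so $u_R(\tilde a,\theta)$ collapses to $u(-\tilde a'(\theta),\tilde a,\theta)$, which is continuous and bounded in $\theta$. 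Weak$^*$ convergence then gives $\int u_R(\tilde a,\theta)\,\mathrm dG_n(\theta)\to \int u_R(\tilde a,\theta)\,\mathrm dG(\theta)$, so for all $n$ large, $U_R(G_n)\ge u_R(\tilde a,G_n)\ge u_R(\tilde a,G)-\e/2\ge U_R(G)-\e$. Since $\e$ was arbitrary, $\liminf_n U_R(G_n)\ge U_R(G)$, which combined with the preceding paragraph yields continuity of $U_R$.

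For upper hemicontinuity of $a^*$, I take $G_n\to G$ and $a_n\in a^*(G_n)$, and by compactness of $A$ pass to a convergent subsequence with limit $a\in A$. Upper semicontinuity of $u_R$ in $(a,G)$ together with the just-established continuity of $U_R$ yields $u_R(a,G)\ge\limsup_n u_R(a_n,G_n)=\lim_n U_R(G_n)=U_R(G)$, so $a\in a^*(G)$, which gives upper hemicontinuity.
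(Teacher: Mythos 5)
Your proof is correct and follows essentially the same route as the paper's: upper semicontinuity of $U_R$ via compactness of $A$ and upper semicontinuity of $u_R$ in $(a,G)$ (which is exactly what the paper's cited Lemma 17.30 of Aliprantis and Border encapsulates), lower semicontinuity via the $C^1$ approximation from \autoref{l:differentiable_approx} to make $u_R(\tilde a,\cdot)$ a continuous test function, and upper hemicontinuity of $a^*$ by a closed-graph argument combining upper semicontinuity of $u_R$ with the just-established continuity of $U_R$. The only differences are cosmetic (your direct arguments versus the paper's contradiction framings).
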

\begin{proof}
Because $u_R(a,G)$ is upper semicontinuous (\autoref{l:payoff_continuity_insurance}), the optimal payoff $U_R(G):=\max_{a\in A} u_R(a,G)$ is upper semicontinuous by Lemma 17.30 in \cite*{aliprantis2006infinite}.

We argue that $U_R(G)$ is also lower semicontinuous: Consider $G_n\rightarrow G$ and suppose towards contradiction there is $\e>0$ such that $\liminf_n U_R(G_n)+\e <U_R(G)$. By \autoref{l:differentiable_approx}, there is a continuously differentiable $\tilde{a}\in A$ such that $u_R(\tilde{a}, G)+\e/2>U_R(G)$. Then $u_R(\tilde{a},\theta)$ is continuous in $\theta$, and therefore $\int u_R(\tilde{a},\theta)\,\mathrm dG_n(\theta)\rightarrow \int u_R(\tilde{a},\theta)\,\mathrm dG(\theta)$ (since we use the weak$^*$ topology). Hence, for all $n$ large enough, $u_R(\tilde{a}, G_n)+\e/2\ge u_R(\tilde{a},G) $. This implies $u_R(\tilde{a}, G_n)+\e\ge U_R(G)$, a contradiction.

Finally, we show that $a^*(G)=\argmax_{a\in A} u_R(a,G)$ is upper hemicontinuous: Consider sequences $G_n\rightarrow G$ and $a_n\in a^*(G_n)$ such that $a_n\rightarrow a$, and suppose towards contradiction that $a\not\in a^*(G)$. Because $a\in A$ this implies $U_R(G)> u_R(a,G)$. Because $u_R(a,G)$ is upper semicontinuous (\autoref{l:payoff_continuity_insurance}),
\[U_R(G)> u_R(a,G) \ge \limsup_{n \to \infty} u_R(a_n,G_n)=\limsup_{n \to \infty} U_R(G_n).\] 
This contradicts lower semicontinuity of $U_R$.
\end{proof}

\begin{proof}[Proof of \autoref{Proposition-Insurance}]
    Observe that for each type $\theta$, the sender's \emph{no-insurance payoff} is $ \theta v(w - \ell) + (1-\theta) v(w)$. \autoref{Assumption-NoRents} holds because if the insuree is known to be of type $\theta$, the insurer will offer a contract that extracts all surplus leaving the insuree with her no-insurance payoff. Given that the no-insurance payoff is strictly decreasing in $\theta$, for any $m \in \mathcal{C}$, the worst-case type is $\max_{\theta \in m} \theta$, which verifies \autoref{Assumption-WorstCase}. \autoref{Assumption-Continuity}\ref{a:continuity_1} follows from \autoref{l:insurance_ass_3i}. Instead of verifying \autoref{Assumption-Continuity}\ref{a:continuity_2}, we verify its alternative in \autoref{r:assn_insurance}. Because  $u_R(a,\theta)$ is strictly concave in $a$ for each $\theta$,\footnote{Take $\lambda \in (0,1)$, any $a', a'' \in A$ with $a' \ne a''$, and any $D' \in \argmax_{D \in \partial a'(\theta) \cap [0,D_0]} u(a',D,\theta)$ and $D'' \in \argmax_{D \in \partial a''(\theta) \cap [0,D_0]} u(a'',D,\theta)$. Then
    \begin{align*}
        \lambda u_R(a', \theta) + (1-\lambda) u_R(a'', \theta) 
        & < w-\theta \ell - (1-\theta)v^{-1}((\lambda a'(\theta) + (1-\lambda) a''(\theta)) + \theta (\lambda D' + (1-\lambda) D'')) \, - \\
        & ~~~~ \theta v^{-1}((\lambda a'(\theta) + (1-\lambda) a''(\theta)) - (1-\theta)(\lambda D' + (1-\lambda) D'')) \\
        & \le u_R(\lambda a' + (1-\lambda) a'', \theta),
    \end{align*}
    where the first inequality holds because $v^{-1}$ is strictly convex (since $v$ is strictly concave), and the second equality follows from the fact that $\lambda D'+(1-\lambda)D'' \in \partial (\lambda a' + (1-\lambda) a'')(\theta) = \lambda \partial  a'(\theta) + (1-\lambda) \partial  a''(\theta)$ for each $\theta$ \citep[Theorem D.4.1.1 in][]{hiriart2004fundamentals}.} for any $G \in \Delta(\Theta)$, any $a', a'' \in a^*(G)$ satisfy $a'=a''$ $G$-almost everywhere. The alternative assumption holds by setting $t(G) = G$ and $\underline{H} = \overline{H} = t(G)$.
\end{proof}

 \subsection{Bargaining Over Policies}\label{Section-VB}

This section applies our analysis to models of policy negotiations with incomplete information that build on \cite{romer1978political}.\footnote{The literature has studied various formulations of veto bargaining with incomplete information; see, for example, \cite{matthews1989veto}, \cite*{kartik2021delegation},  \cite*{ali2023sequential}, and \cite*{kim2025persuasion}.} 
A policy $a\in \Re$ is jointly chosen by the proposer and the vetoer. The proposer's payoff from policy $a$, $u(a)$, is strictly increasing in $a$. The vetoer's payoff, $v(a,\theta)$, is strictly single-peaked in $a$ with a unique maximizer $\theta$ and symmetric around the maximizer; we call the vetoer's ideal policy $\theta$ her \emph{type}. Her type is her private information, and is drawn according to an absolutely continuous CDF $F$ on $\Theta = [\uv,\ov]$ with $\ov<\infty$. For simplicity, we assume $\uv \ge 0$. Once the proposer proposes a policy $a$, the vetoer can accept or reject. If she accepts, the proposed policy prevails; if she rejects, then the status-quo policy $a_Q = 0$ is preserved. Given the proposer's payoffs, she never proposes any $a < 0$. Restricting attention to $a \ge 0$, the vetoer accepts if and only if $a \le 2\theta$. Therefore, we assume that proposals lie in $[0,2\overline{\theta}]$.

We augment this game with a disclosure stage, with the following timing. The vetoer first observes her type $\theta$ and sends a message $m \in \mathcal{M}(\theta)$ to the proposer. The proposer then proposes a policy $a$. The type-$\theta$ vetoer's payoff is given by $u_S(a,\theta) = \max\{v(a,\theta), v(0,\theta)\}$, and the proposer's payoff is given by $u_R(a,\theta) = (u(a)-u(0))\mathbf{1}_{a \le 2\theta}$. For every $\theta \in \Theta$, $u_S(\cdot, \theta)$ is continuous, and $u_R(\cdot, \theta)$ is upper semicontinuous.

We note that this setting does not feature transferable utility and, unlike monopoly pricing, the sender and receiver may have aligned preferences that favor a higher action to $a_Q$. Nevertheless, the complete-information payoff is the lowest for a sender-type $\theta$ because the receiver would then propose action $2\theta$, which results in the same payoff as the status quo. 

In this setting, the implications of information design have not been characterized; nevertheless, the equivalence between design and disclosure holds.

\begin{proposition} \label{p:veto_bargaining}
    For every achievable payoff profile $(u^*_S,u^*_R)$ and every $\e>0$, there is an equilibrium of the disclosure game that supports payoffs within $\e$ of $(u^*_S,u^*_R)$.
\end{proposition}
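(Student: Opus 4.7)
The plan is to verify that Assumptions \ref{Assumption-NoRents}, \ref{Assumption-WorstCase}, and \ref{Assumption-Continuity} hold in this veto bargaining environment; \Cref{p:veto_bargaining} then follows by direct application of \Cref{Theorem-MainResult}. Without loss the proposer's action space can be taken as the compact interval $[0,2\ov]$, since any $a>2\ov$ is rejected with probability one regardless of the belief.

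For \Cref{Assumption-NoRents}, since $u$ is strictly increasing the proposer's optimal complete-information policy is $a^*(\theta)=2\theta$, and symmetry of $v(\cdot,\theta)$ around its peak $\theta$ yields $v(2\theta,\theta)=v(0,\theta)$, so $u_S(a^*(\theta),\theta)=v(0,\theta)$. For any belief $G$ with $\theta\in\supp G$ and any $a\in a^*(G)$, the vetoer can always reject to secure $v(0,\theta)$, so $u_S(a,\theta)=\max\{v(a,\theta),v(0,\theta)\}\ge v(0,\theta)=u_S(a^*(\theta),\theta)$. For \Cref{Assumption-WorstCase}, I claim that $\hat\theta_m=\max_{\theta\in m}\theta$, which is well-defined since $m$ is closed, is a worst-case type. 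The required inequality $u_S(a^*(\theta),\theta)\ge u_S(a^*(\hat\theta_m),\theta)$ for all $\theta\in m$ reduces to $v(0,\theta)\ge v(2\hat\theta_m,\theta)$, which by strict single-peakedness and symmetry of $v(\cdot,\theta)$ around $\theta$ is equivalent to $2\hat\theta_m\ge 2\theta$, and holds by construction.

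The main obstacle is \Cref{Assumption-Continuity}, since the proposer's payoff $(u(a)-u(0))\mathbf{1}_{a\le 2\theta}$ is discontinuous in both the action and the type because of the acceptance indicator, so Berge's maximum theorem does not apply directly. I would exploit the structural analogy with monopoly pricing: the proposer's expected profit equals $(u(a)-u(0))\Pr_G(\theta\ge a/2)$, which has exactly the same form as a monopolist's expected revenue with ``price'' $u(a)-u(0)$ and ``demand curve'' $a\mapsto\Pr_G(\theta\ge a/2)$. Consequently, the arguments of \cite{yang2023continuity} invoked for \Cref{Proposition-BBM} transfer with only cosmetic modifications to yield continuity of $U_R(G)$ and upper hemicontinuity of $a^*(G)$, giving part \ref{a:continuity_1}. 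For part \ref{a:continuity_2}, I would construct $\underline{H}$ and $\overline{H}$ by slightly reweighting the density of $G$ near the threshold type $\underline{a}(G)/2$ or $\overline{a}(G)/2$ so as to strictly break the proposer's indifference in favor of the extremal best response; the Radon-Nikodym bound is achieved by taking the reweighting small, and measurability of the selection $G\mapsto(\underline{H},\overline{H})$ reduces to measurability of $G\mapsto(\underline{a}(G),\overline{a}(G))$, which follows from standard measurable maximum theorems. With the three assumptions verified, \Cref{Theorem-MainResult} yields the result.
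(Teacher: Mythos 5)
Your proposal takes essentially the same route as the paper: verify \Cref{Assumption-NoRents,Assumption-WorstCase,Assumption-Continuity} and then invoke \Cref{Theorem-MainResult}. The checks of the first two assumptions are correct and match the paper's argument. For \Cref{Assumption-Continuity}, the paper's treatment is cleaner in one respect: rather than asserting that the arguments of \cite{yang2023continuity} ``transfer with cosmetic modifications,'' it performs the explicit change of variables $b=a/2$ and $w(b):=u(2b)-u(0)$, so that the proposer's payoff takes the form $w(b)\mathbf{1}_{b\le\theta}$, and then simply invokes \Cref{l:mp_vb_assn3} --- the same lemma already proved for and used in \Cref{Proposition-BBM}. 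Making that substitution explicit and citing the shared lemma is the right way to close your Assumption~3 step rather than re-deriving the arguments. A minor caution about your sketch of part~\ref{a:continuity_2}: you speak of ``reweighting the density of $G$,'' but an arbitrary belief $G\in\Delta(\Theta)$ need not admit a density; the construction should (and in the paper does) perturb $G$ itself through its CDF, with the Radon--Nikodym bound taken relative to $G$ rather than Lebesgue measure.
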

\begin{proof}
    We verify \Cref{Assumption-NoRents,Assumption-WorstCase,Assumption-Continuity}. Because $\underline{a}(\theta) = 2\theta$ and $v$ is symmetric around $\theta$, $v(\underline{a}(\theta), \theta) = v(0,\theta)$. Consequently, $u_S(\underline{a}(\theta),\theta) = v(0,\theta)$. For every $\theta \in \Theta$, $u_S(a,\theta)$ is bounded below by $v(0,\theta)$, and hence \autoref{Assumption-NoRents} must hold. \autoref{Assumption-WorstCase} also holds because for every message $m \in \mathcal{C}$, one can set $\hat{\theta}_{m}:=\max_{\theta \in m} \theta$. Finally, by re-writing the proposer's payoff as $u_R(b,\theta) = w(b) \mathbf{1}_{b \le \theta}$, where $b=a/2$ and $w(b) := u(2b) - u(0)$, \autoref{l:mp_vb_assn3} implies that \autoref{Assumption-Continuity} holds since $w$ is continuous and strictly increasing by assumption. \end{proof}

\end{appendix}
\end{document}